\documentclass[10pt, journal]{IEEEtran}
\usepackage{amsmath,amssymb,amsfonts}
\usepackage{algorithmic}
\usepackage{array}
\usepackage[caption=false,labelfont={},textfont={}]{subfig}
\usepackage{textcomp}
\usepackage{stfloats}
\usepackage{url}
\usepackage{verbatim}
\usepackage{graphicx}
\usepackage{cite}
\usepackage{xcolor}
\usepackage{makecell}
\usepackage{subfig}

\def\BibTeX{{\rm B\kern-.05em{\sc i\kern-.025em b}\kern-.08em
    T\kern-.1667em\lower.7ex\hbox{E}\kern-.125emX}}
\usepackage{balance}

\bstctlcite{IEEEexample:BSTcontrol}

\newtheorem{theorem}{Theorem}

\newtheorem{proposition}{Proposition}

\newtheorem{remark}{Remark}
\newtheorem{definition}{Definition}
\DeclareMathAlphabet{\mathbbold}{U}{bbold}{m}{n}

\usepackage{algorithmic}
\usepackage{algorithm}

\floatstyle{ruled}
\newfloat{algorithm}{tbp}{loa}
\providecommand{\algorithmname}{Algorithm}
\floatname{algorithm}{\protect\algorithmname}

\begin{document}
\title{Privacy-Preserving Localization via Transmit Antenna Selection and Permutation
}

\author{Yiyang Zhang, Yanmo Hu, Junyuan Gao, Shuowen Zhang, \textit{Senior Member, IEEE},\\ Jiannong Cao, \textit{Fellow, IEEE}, and Liang Liu, \textit{Fellow, IEEE}
\thanks{Yiyang Zhang, Yanmo Hu, Junyuan Gao, Shuowen Zhang, Jiannong Cao, and Liang Liu are with The Hong Kong Polytechnic University, Hong Kong SAR (e-mail: yiyang-eee.zhang@connect.polyu.hk,\! \{yanmo.hu,\! junyuan.gao,\! shuowen.zhang,\! jiannong.cao,\! liang-eie.liu\}@polyu.edu.hk).}
\thanks{An earlier version of this paper was presented in part at the 2025 IEEE Globecom Workshops \cite{Zhang_2025_TransmitAntennaSelectionforPreserving6GLocalizationPrivacy}.}
}

\maketitle

\begin{abstract}
Integrated sensing and communication (ISAC) has been identified as one primary usage scenario in the sixth-generation (6G) network. While techniques to preserve information privacy, such as cryptography, have been widely investigated, how to preserve sensing privacy is still an open problem in the literature. This paper makes an early attempt to tackle the above issue. Specifically, we consider a localization system consisting of a multi-antenna transmitter, termed Alice, a single-antenna legitimate receiver, termed Bob, and a single-antenna illegitimate receiver, termed Eve. To allow Bob to estimate Alice's angle-of-departure (AOD) but prevent Eve from performing this task based on Alice's signals, this paper proposes a novel antenna selection and permutation based transmission strategy for Alice. Under this scheme, Alice carefully selects a subset of antennas and permutes their indices to establish a specific pilot-antenna mapping for transmission. Similar to cryptography for information privacy, such a mapping will serve as the secret key to preserve localization privacy. 
In the special case without noise at Bob and Eve, we manage to find out all the antenna selection and permutation solutions such that with this key (knowledge about the exact pilot-antenna mapping), Bob can uniquely estimate Alice's AOD, while without this key, Eve can estimate multiple AODs of Alice that can lead to its received signals. 
In the noisy case, numerical results are provided to show that our scheme can confuse Eve to make inaccurate AOD estimation as well.
\end{abstract}

\begin{IEEEkeywords}
Integrated sensing and communication (ISAC), privacy-preserving localization, angle-of-departure (AOD) estimation, antenna selection, antenna permutation.
\end{IEEEkeywords}

\section{Introduction}
\subsection{Background and Motivation}
\IEEEPARstart{R}{ecently}, ITU-R has identified integrated sensing and communication (ISAC) as one of the six primary usage scenarios of the sixth-generation (6G) cellular network \cite{ITUR_2023_TheITURFrameworkforIMT2030}. This motivates tremendous efforts from both academia and industry to embed the sensing function into the communication-oriented cellular network \cite{Liu_2024_LeveragingaVarietyofAnchorsinCellularNetworkforUbiquitousSensing, Liu_2020_ATwoStageRadarSensingApproachBasedonMIMOOFDMTechnology,
Hu_2025_AnchorPointsAssistedUplinkSensinginPerceptiveMobileNetworks, 
Zhang_2022_EnablingJointCommunicationandRadarSensinginMobileNetworks—ASurvey,Rahal_2024_RISEnabledNLoSnearFieldJointPositionandVelocityEstimationunderUserMobility,Gao_2026_IntegratedMassiveCommunicationandTargetLocalizationin6GCellFreeNetworks, hu2026sarisarimaging6gnetwork,Gao_2026_MultiViewImaginginNetworkedSensingSystemsaCovarianceBasedApproach}. 
Such a trend, however, gives rise to new sensing privacy issues. For example, when a transmitter emits wireless signals, illegitimate receivers can monitor vital signs \cite{Liu_2016_ContactlessRespirationMonitoringViaofftheShelfWiFiDevices} and recognize daily activities \cite{Lu_2024_AnImperceptibleEavesdroppingAttackonWiFiSensingSystems} of people from the signals reflected from them. Therefore, in the future ISAC network, we need to prevent illegitimate receivers from not only eavesdropping on the messages but also sensing the environment based on their received signals. 
\begin{figure}[t]
    \centerline{\includegraphics[width = 0.97\linewidth]{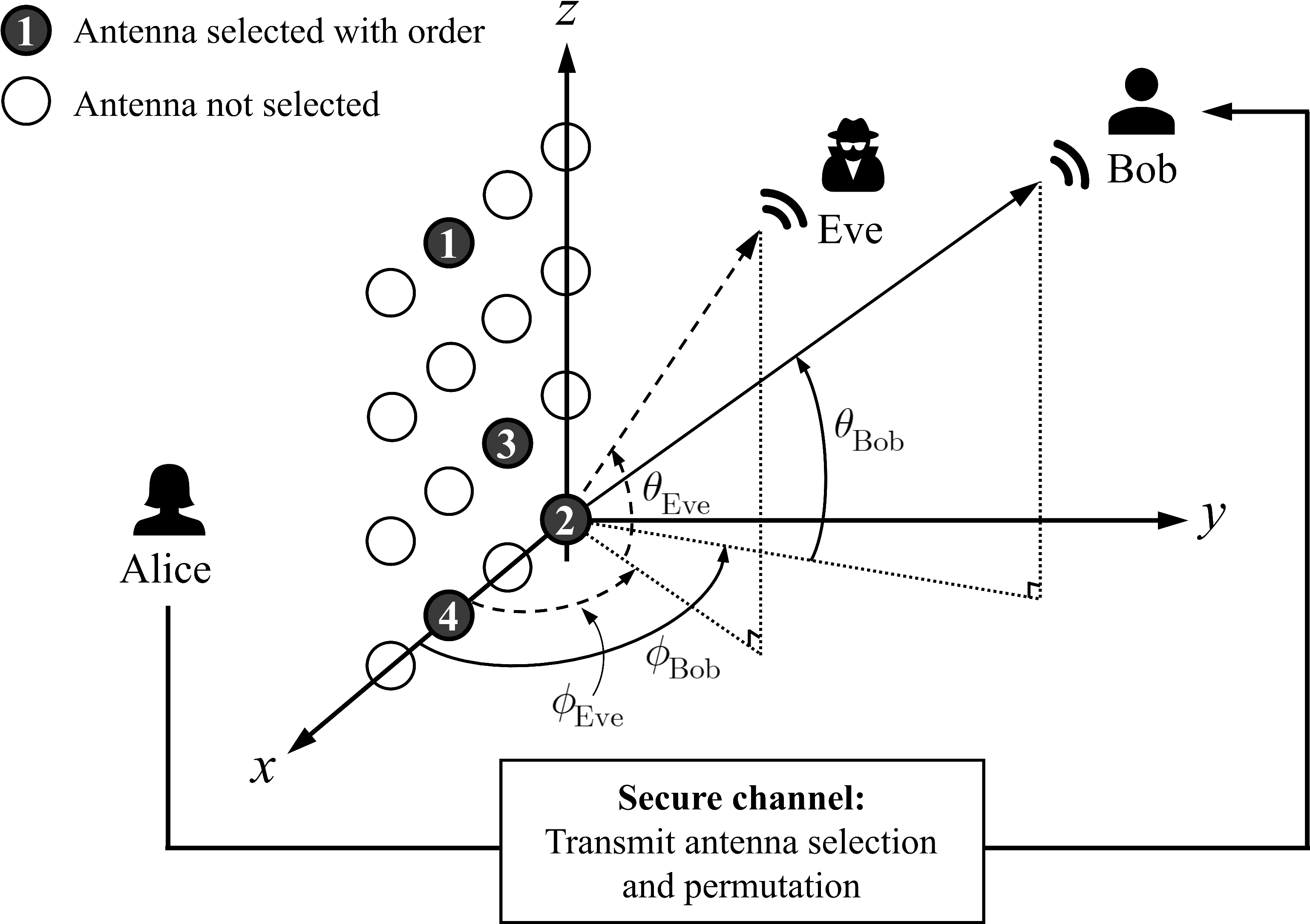}}
    \caption{Our considered privacy-preserving localization system. Alice properly selects a subset of antennas and permutes their indices to form a pilot-antenna mapping. In this example, there are four pilot sequences, and four antennas (black ones) are selected, while the antenna in the first row and the second column will transmit the first pilot sequence and so on. This antenna selection and permutation strategy, which serves as the secret key for localization, is then shared with Bob over a secure channel based on the cryptography technique. At last, Alice transmits pilot signals over these selected and permuted antennas. Our goal is to allow Bob to accurately estimate Alice's AOD with the secret key and prevent Eve from estimating Alice's AOD without the key.} 
    %\vspace{-10pt}
    \label{fig:system-model}
\end{figure}

While information privacy preserving techniques, such as cryptography, have been widely investigated and applied \cite{Yuan_2024_LowCostFederatedBroadLearningforPrivacyPreservedKnowledgeSharingintheRISAidedInternetofVehicles, Li_2025_IntelligentCovertCommunicationRecentAdvancesandFutureResearchTrends}, the research about 6G sensing privacy preservation is still in its infancy. This motivates us to study privacy-preserving sensing techniques in this paper. Specifically, this paper focuses on the localization application. In our considered localization system, as shown in Fig. \ref{fig:system-model}, there exist a multi-antenna transmitter, named Alice, a single-antenna legitimate receiver, named Bob, and a single-antenna illegitimate receiver, named Eve. After Alice emits a transmit signal, both Eve and Bob can apply the existing algorithms, such as maximum likelihood estimation \cite{Fascista_2019_MillimeterWaveDownlinkPositioningwithaSingleAntennaReceiver} and subspace-based methods \cite{Liao_2016_MUSICforSingleSnapshotSpectralEstimationStabilityandSuperResolution}, to estimate the angle-of-departure (AOD) information of Alice. We thus face the following problem: how to prevent Eve from localizing Alice based on Alice's signal?

Cryptography has been widely used to preserve information privacy. Note that cryptography is conducted at higher layers, not the physical layer. Specifically, an eavesdropper is able to recover the information bits based on physical-layer techniques such as demodulation and channel decoding. However, without the key, it is practically impossible for the eavesdropper to decode the messages from these bits. In sharp contrast to encryption that breaks the link between bits and messages, the privacy-preserving technique for sensing has to be done at the physical layer. This is because if the illegitimate receiver can have full knowledge about the channel state information (CSI), there is nothing that we can do at the higher layer to prevent it from recovering the location information from CSI. As a result, to preserve sensing privacy, we need to prevent Eve from knowing CSI at the physical layer.

Although higher-layer privacy-preserving techniques cannot be applied for localization, we do learn an important lesson from these techniques - we should use a secret key, which is some information that is known by Bob but unknown by Eve, to preserve localization privacy in our considered system. This gives rise to the following problem: what physical-layer information can serve as the key such that Bob can estimate Alice's AOD with it but Eve cannot without it?

\subsection{Related Works}
Recently, some pioneering works have been done to preserve privacy at the physical layer for various sensing applications \cite{ Wang_2024_MultiAntennaSignalMaskingandRoundTripTransmissionforPrivacyPreservingWirelessSensing, Luo_2024_MIMOCryptMultiUserPrivacyPreservingWiFiSensingViaMIMOEncryption, Hernandez_2023_ScheduledSpatialSensingagainstAdversarialWiFiSensing, Meng_2023_SecurFiaSecureWirelessSensingSystemBasedonCommercialWiFiDevices}.
In \cite{Wang_2024_MultiAntennaSignalMaskingandRoundTripTransmissionforPrivacyPreservingWirelessSensing}, secret time-varying phase shifts are applied to the pilot signals across transmit antennas to induce erroneous breathing and heart rate estimates at Eve, or to hide motion features against activity recognition by Eve. Similarly, \cite{Luo_2024_MIMOCryptMultiUserPrivacyPreservingWiFiSensingViaMIMOEncryption} changes both the amplitudes and phases of the pilot signals across different antennas over time to protect privacy-sensitive hand gesture recognition.
Beyond signal manipulation, both \cite{Hernandez_2023_ScheduledSpatialSensingagainstAdversarialWiFiSensing} and \cite{Meng_2023_SecurFiaSecureWirelessSensingSystemBasedonCommercialWiFiDevices} switch the transmit antenna based on a secret schedule to prevent unauthorized sensing in various applications, including human activity recognition and tracking.
Essentially, the above physical-layer approaches all preserve sensing privacy by obfuscating the CSI estimated at eavesdroppers.

With regard to privacy in localization, a fundamental approach is to add random noise and phase shifts, thereby making the distorted CSI difficult to be associated with a specific location \cite{Nguyen_2024_UnderstandingPrivacyRisksofHighAccuracyRadioPositioningandSensinginWirelessNetworks}. For instance, \cite{Ghiro_2022_OntheImplementationofLocationObfuscationinOpenwifiandItsPerformance} proposes to alter the amplitude and phase of the transmit signal based on a secret random process, which mitigates privacy risks in indoor fingerprint-based localization. Moreover, because localization can also be performed based on explicit spatial parameters extracted from the CSI, such as range and angle, a more targeted defense is to directly obfuscate these spatial parameters. Along this line, in \cite{Li_2024_ChannelStateInformationFreeLocationPrivacyEnhancementFakePathInjection}, Alice performs fake path injection via generating structured transmit signals, such that Eve finds it difficult to distinguish the true paths from the fake paths since the delay and angle information of these two
kinds of paths are close. However, even if Eve believes that Alice’s signals are from the fake paths, the estimated location of Alice is only subject to a marginal error. This approach is thus extended in \cite{Li_2024_ChannelStateInformationFreeLocationPrivacyEnhancementDelayAngleInformationSpoofing}, where the true paths are virtually shifted to fake ones via delay-angle spoofing, such that Eve perceives Alice at an incorrect location that could be far from the true one.
To preserve legitimate localization in \cite{Li_2024_ChannelStateInformationFreeLocationPrivacyEnhancementFakePathInjection, Li_2024_ChannelStateInformationFreeLocationPrivacyEnhancementDelayAngleInformationSpoofing}, a small secret message containing the fake path parameters is shared with Bob via a secure channel.

\subsection{Main Contributions}
In this paper, we consider a localization system consisting of Alice, Bob, and Eve, as shown in Fig. \ref{fig:system-model}. Our goal is to design Alice's transmission strategy together with the key such that Bob can estimate Alice's AOD accurately based on its received signal and the key, but Eve cannot without the key. The main contributions of this paper are summarized as follows. 

\begin{itemize}
    \item The first contribution is that this paper designs a transmit antenna selection and permutation strategy to achieve the above localization privacy preservation goal. Under this scheme, Alice first selects a subset of antennas for transmission, and then permutes the indices of these selected antennas to generate an ordered antenna sequence. The pre-designed pilot signals, which follow a fixed order, are sequentially paired to the selected antennas according to this ordered antenna sequence. This establishes an adaptive pilot-antenna mapping instead of a fixed one. Then, this antenna selection and permutation strategy will serve as the key and be only shared with Bob.
    Note that to estimate AOD information based on the existing algorithms \cite{Fascista_2019_MillimeterWaveDownlinkPositioningwithaSingleAntennaReceiver, Liao_2016_MUSICforSingleSnapshotSpectralEstimationStabilityandSuperResolution}, we need knowledge about pilot signals and steering vectors. 
    In \cite{Li_2024_ChannelStateInformationFreeLocationPrivacyEnhancementDelayAngleInformationSpoofing}, the pilot signals are changed such that they can serve as the key to prevent Eve from localizing Alice. However, standardized systems often require the use of predefined pilots \cite{Zhang_2025_PrivacyPreservationinMIMOOFDMLocalizationSystemsaBeamformingApproach}.
    Therefore, in this paper, we propose to hide the steering vector information from Eve. 

    \item Second, we present two criteria to evaluate the performance of sensing privacy preserving schemes - in the noiseless case, i.e., there is no noise at Bob and Eve, whether Bob can estimate Alice's AOD uniquely with the key, and whether Eve can have multiple AOD estimations that all result in its received signals due to the lack of the key.
    
    \item Third, we manage to characterize the set that consists of all the antenna selection and permutation strategies such that when an arbitrary strategy from this set is used by Alice for pilot transmission, Bob can always uniquely estimate Alice's AOD in the noiseless case, based on its received signals and knowledge about this strategy. The key to achieve this goal is to mitigate the grating lobe issue \cite{Mailloux_1994_PhasedArrayAntennaHandbook, Balanis_2005_ArraysLinearPlanarandCircular}. Improper antenna selection may result in excessively large inter-element spacing among the selected antennas, thereby introducing grating lobes such that distinct AODs lead to the same received signals. To avoid this, we establish a requirement on the index differences of the selected antennas that guarantees unique AOD estimation.
    
    \item Fourth, we manage to rigorously prove that if any antenna selection and permutation strategy in the above set is applied, Eve, who does not know which strategy is applied, can estimate multiple AODs that lead to its received signal in the noiseless case. Therefore, Eve does not know which AOD among the estimated ones is the true AOD. 
    Specifically, we reveal that the estimated AODs at Eve inherently occur in groups, each generally containing eight solutions. Given that any strategy in the above set causes Eve to estimate as few as one such group, we further construct a subset of antenna selection and permutation strategies to achieve enhanced privacy, such that when Alice applies an arbitrary strategy from this subset, Eve will always be confused by multiple groups of AODs, even when it knows this subset.

    \item Last, we provide numerical results to verify that in the noisy case, Bob can reliably recover Alice's AOD under our scheme, whereas Eve can easily mistake an incorrect AOD solution for the true one, resulting in a large estimation error even at high signal-to-noise ratios (SNRs). It is also shown that our proposed enhanced privacy design further reduces Eve's estimation accuracy.
\end{itemize}

Note that antenna selection has been widely studied for wireless communication \cite{Sanayei_2004_AntennaSelectioninMIMOSystems, Gore_2002_MIMOAntennaSubsetSelectionwithSpaceTimeCoding, Basar17}. Recently, the techniques of fluid antenna \cite{Wong21} and movable antenna \cite{Zhu24} can also be viewed as new ways to realize antenna selection, because essentially all of them need to determine the locations of the transmit antennas. However, to our best knowledge, our work is an early work to utilize antenna selection to preserve localization privacy in ISAC systems.

\textit{Organization: }The rest of this paper is organized as follows. Section \ref{sec:system-model} investigates the localization privacy issue and introduces transmit antenna selection and permutation strategy to preserve privacy. Section \ref{sec:privacy-design-statement} clarifies the goals of the privacy design. Sections \ref{sec:antenna-selection-strategy-design-uniqueness} and \ref{sec:antenna-selection-strategy-design-ambiguity} provide the guidelines to ensure uniqueness in Bob's AOD estimation, and to induce ambiguity in Eve's AOD estimation, respectively. Numerical results are shown in Section \ref{sec:numerical-results}, and Section \ref{sec:conclusions} concludes the paper.

\textit{Notations: }
Throughout this paper, $\operatorname{rank}(\cdot)$ denotes the rank, and $\operatorname{det}(\cdot)$ denotes the determinant of a square matrix.
The sum of two sets $\mathcal{X}$ and $\mathcal{Y}$ represents their Minkowski sum, i.e., $\mathcal{X} + \mathcal{Y} = \{\boldsymbol{x} + \boldsymbol{y} \mid \boldsymbol{x} \in \mathcal{X}, \boldsymbol{y} \in \mathcal{Y}\}$.
$\mathrm{GL}(n, \mathbb{Z}) = \{\mathbf{M} \in \mathbb{Z}^{n \times n} \mid \det(\mathbf{M}) = \pm 1\}$ denotes the general linear group of degree $n$ over $\mathbb{Z}$.

\section{System Model}\label{sec:system-model}

\subsection{Localization Privacy Issue} \label{subsec:localization-privacy-issue}
In this paper, we consider a privacy-preserving localization system, consisting of
a legitimate transmitter (Alice),
a legitimate receiver (Bob),
and an illegitimate device (Eve), as shown in Fig. \ref{fig:system-model}.
Alice is equipped with a uniform planar array (UPA) that is placed on the $xz$-plane and composed of $M=M_x \times M_z$ antennas, where $M_x$ and $M_z$ denote the numbers of antennas at the $x$-axis domain and the $z$-axis domain, respectively. We index the antenna at the $m_x$-th column and the $m_z$-th row of the array as antenna $(m_x,m_z)$, $m_x=1, \ldots, M_x$, $m_z=1, \ldots, M_z$.
Moreover, both Bob and Eve are equipped with one antenna.
The AOD from Alice to Bob is denoted by the elevation angle $\theta_{\mathrm{Bob}}\in (-\pi/2, \pi/2)$ and the azimuth angle $\phi_{\mathrm{Bob}}\in (0, \pi)$, while that from Alice to Eve is denoted by the elevation angle $\theta_{\mathrm{Eve}}\in (-\pi/2, \pi/2)$ and the azimuth angle $\phi_{\mathrm{Eve}}\in  (0, \pi)$.

Suppose antenna spacing at Alice is half of the signal wavelength. Then, the steering vector of Alice's UPA towards $(\theta, \phi)$ is given as $\boldsymbol{a}(\theta, \phi)= [a_{1,1}(\theta, \phi), \ldots, a_{1,M_z}(\theta, \phi),  \ldots,$
$a_{M_x,1}(\theta, \phi), \ldots, a_{M_x,M_z}(\theta, \phi)]^T \in \mathbb{C}^{M\times 1}$, where
\begin{align}
    & a_{m_x,m_z}(\theta, \phi) \nonumber \\
    & = e^{-j\pi\left((m_x-1) \cos \theta \cos \phi + (m_z-1) \sin \theta  \right)}, \ \forall m_x, m_z.
\end{align}
To focus on the localization privacy issue, this paper assumes that only line-of-sight (LOS) channels exist between Alice and Bob and between Alice and Eve. Then, the channels from Alice to Bob and from Alice to Eve are respectively given as
\begin{align}
    &\boldsymbol{h}_{\mathrm{Bob}}(\theta_{\mathrm{Bob}}, \phi_{\mathrm{Bob}})=  {\gamma_{\mathrm{Bob}}} \boldsymbol{a}(\theta_{\mathrm{Bob}}, \phi_{\mathrm{Bob}})
    \in \mathbb{C}^{M \times 1}, \\
    &\boldsymbol{h}_{\mathrm{Eve}}(\theta_{\mathrm{Eve}}, \phi_{\mathrm{Eve}}) ={\gamma_{\mathrm{Eve}}} \boldsymbol{a}(\theta_{\mathrm{Eve}}, \phi_{\mathrm{Eve}})
    \in \mathbb{C}^{M \times 1},
\end{align}
where $\gamma_{\mathrm{Bob}}$ and $\gamma_{\mathrm{Eve}}$ denote the unknown complex channel coefficients from Alice to Bob and from Alice to Eve.

In the above system, Alice transmits radio pilot signals to Bob, who can then localize Alice by estimating the AOD information about $\theta_{\mathrm{Bob}}$ and $\phi_{\mathrm{Bob}}$. 
Under the conventional localization scheme, Alice will send pilot signals across all its antennas to Bob. Define $\boldsymbol{S}=[\boldsymbol{s}_{1,1}, \ldots, \boldsymbol{s}_{1,M_z}, \ldots, \boldsymbol{s}_{M_x,1}, \ldots, \boldsymbol{s}_{M_x,M_z}]\in \mathbb{C}^{G \times M}$ as the overall pilot signal transmitted by all the antennas of Alice, where $G$ is the pilot sequence length, and $\boldsymbol{s}_{m_x,m_z}=[{s}_{m_x,m_z, 1}, \ldots, {s}_{m_x,m_z, G}]^T \in \mathbb{C}^{G \times 1}$ denotes the pilot signal transmitted by antenna $(m_x,m_z)$ of Alice, $\forall m_x, m_z$, with ${s}_{m_x,m_z, g}$ denoting the $g$-th pilot symbol, $g = 1, 2, \ldots, G$.
Then, the signals received by Bob and Eve are respectively expressed as
\begin{align}
    &\boldsymbol{r}_{\mathrm{Bob}} = \boldsymbol{S} \boldsymbol{h}_\mathrm{Bob}(\theta_{\mathrm{Bob}}, \phi_{\mathrm{Bob}}) + \boldsymbol{n}_{\mathrm{Bob}} \in \mathbb{C}^{G \times 1},  \label{equ:Bob-received-signal} \\
    &\boldsymbol{r}_{\mathrm{Eve}} = \boldsymbol{S} \boldsymbol{h}_\mathrm{Eve}(\theta_{\mathrm{Eve}}, \phi_{\mathrm{Eve}}) + \boldsymbol{n}_{\mathrm{Eve}} \in \mathbb{C}^{G \times 1},   \label{equ:Eve-received-signal}
\end{align}
where $\boldsymbol{n}_{\mathrm{Bob}}$ and $\boldsymbol{n}_{\mathrm{Eve}}$ denote the noise at Bob and Eve.
There thus exists the localization privacy issue: although Bob can apply classic methods, such as \cite{Fascista_2019_MillimeterWaveDownlinkPositioningwithaSingleAntennaReceiver, Liao_2016_MUSICforSingleSnapshotSpectralEstimationStabilityandSuperResolution,VanVeen_1988_BeamformingaVersatileApproachtoSpatialFiltering}, to estimate $\theta_{\mathrm{Bob}}$ and $\phi_{\mathrm{Bob}}$ from (\ref{equ:Bob-received-signal}), Eve can adopt the same approach to estimate $\theta_{\mathrm{Eve}}$ and $\phi_{\mathrm{Eve}}$ from (\ref{equ:Eve-received-signal}) for localizing Alice when $\boldsymbol{S}$ is also known by Eve. This gives rise to the following question: can we design a clever transmission strategy for Alice such that only Bob can localize Alice based on its received signals?

\subsection{Transmit Antenna Selection and Permutation Strategy}
The lesson from the above example is that Eve can localize Alice when it perfectly knows Alice's pilot signal. Therefore, we propose a transmit antenna selection and permutation strategy such that some randomness shared with Bob but unknown to Eve is added on the pilot signals from Alice. Specifically, Alice first randomly selects $K \leq M$ antennas out of all the $M$ antennas for transmission. Define $\tilde{\boldsymbol{S}} = [\tilde{\boldsymbol{s}}_1, \ldots, \tilde{\boldsymbol{s}}_K] \in \mathbb{C}^{G \times K}$ as the collection of the pre-designed $K$ pilot signals for the selected antennas. 
In this paper, we assume that $G \geq K$, which is a challenging setup such that in the absence of any privacy-preserving measures, Eve has sufficient observations to reliably recover the channel $\boldsymbol{h}_\mathrm{Eve}$ and thus estimate $\theta_{\mathrm{Eve}}$ and $\phi_{\mathrm{Eve}}$ with high accuracy.
Next, Alice randomly pairs each pilot in $\tilde{\boldsymbol{S}}$ to one selected antenna. In other words, Alice permutes these $K$ selected antennas to transmit $\tilde{\boldsymbol{S}}$. Define the index of the antenna to transmit $\tilde{\boldsymbol{s}}_k$ as $(x_k, z_k)$, $k = 1, \dots, K$, and
\begin{equation}
\boldsymbol{\Omega} = \big( (x_1, z_1), (x_2, z_2), \ldots, (x_K, z_K) \big), \label{equ:Omega-definition}
\end{equation}
as the ordered sequence of the $K$ selected antennas. Thus, the overall transmit pilot signal of all the $M$ antennas, i.e., $\boldsymbol{S}$, is given as\footnote{It is worth noting that when $K=M$, Alice uses all the antennas in the UPA for transmission as in the conventional scheme in Section \ref{subsec:localization-privacy-issue}. However, the mappings between pilot signals and antennas differ. In Section \ref{subsec:localization-privacy-issue}, these mappings are fixed, i.e., the pilot signal $\boldsymbol{s}_{m_x, m_z}$ is explicitly paired to antenna $(m_x, m_z)$. In contrast, our proposed strategy applies an adaptive permutation $\boldsymbol{\Omega}$ such that each pilot signal is no longer transmitted by a fixed antenna.}
\begin{equation}
    \boldsymbol{s}_{m_x,m_z} = \begin{cases}
    \tilde{\boldsymbol{s}}_k, &\mathrm{if} \ (m_x,m_z) = (x_k,z_k), \ 
     k = 1, \dots, K,\\
    \boldsymbol{0}, &\mathrm{otherwise}. 
    \end{cases}
\end{equation}

Given the above antenna selection and permutation strategy, the received signals of Bob and Eve given in (\ref{equ:Bob-received-signal}) and (\ref{equ:Eve-received-signal}) can then be expressed as
\begin{align}
&\tilde{\boldsymbol{r}}_{\mathrm{Bob}}(\boldsymbol{\Omega}, \theta_{\mathrm{Bob}}, \phi_{\mathrm{Bob}}) =\tilde{\boldsymbol{S}}\tilde{\boldsymbol{h}}_{\mathrm{Bob}}(\boldsymbol{\Omega}, \theta_{\mathrm{Bob}}, \phi_{\mathrm{Bob}}) + \boldsymbol{n}_{\mathrm{Bob}}, \label{equ:Bob-received-signal-with-selection}\\
&\tilde{\boldsymbol{r}}_{\mathrm{Eve}}(\boldsymbol{\Omega}, \theta_{\mathrm{Eve}}, \phi_{\mathrm{Eve}}) =\tilde{\boldsymbol{S}}\tilde{\boldsymbol{h}}_{\mathrm{Eve}}(\boldsymbol{\Omega}, \theta_{\mathrm{Eve}}, \phi_{\mathrm{Eve}}) + \boldsymbol{n}_{\mathrm{Eve}}, \label{equ:Eve-received-signal-with-selection}
\end{align}
where
\begin{align}  
    &\tilde{\boldsymbol{h}}_{\mathrm{Bob}}(\boldsymbol{\Omega}, \theta_{\mathrm{Bob}}, \phi_{\mathrm{Bob}}) = {\gamma_{\mathrm{Bob}}} \big[  a_{x_1, z_1}(\theta_{\mathrm{Bob}}, \phi_{\mathrm{Bob}}), \ldots,  \nonumber \\
    & \hspace{90pt} a_{x_K, z_K}(\theta_{\mathrm{Bob}}, \phi_{\mathrm{Bob}})\big]^T \in \mathbb{C}^{K \times 1} ,   \label{equ:Bob-tilde-h-B-theta-phi}
    \\
    &\tilde{\boldsymbol{h}}_{\mathrm{Eve}}(\boldsymbol{\Omega}, \theta_{\mathrm{Eve}}, \phi_{\mathrm{Eve}}) = {\gamma_{\mathrm{Eve}}} [ a_{x_1, z_1}(\theta_{\mathrm{Eve}}, \phi_{\mathrm{Eve}}), \ldots, \nonumber \\
    &\hspace{88pt}  a_{x_K, z_K}(\theta_{\mathrm{Eve}}, \phi_{\mathrm{Eve}})]^T \in \mathbb{C}^{K \times 1}, \label{equ:Eve-tilde-h-B-theta-phi}
\end{align}
denote the channel vectors from the selected and permuted antennas of Alice to Bob and Eve, respectively.

Under our proposed privacy-preserving scheme, the antenna selection and permutation strategy $\boldsymbol{\Omega}$ as given in (\ref{equ:Omega-definition}), which specifies both the selected antenna subset and their permutation order used for pilot transmission, is securely sent to Bob based on the cryptography technique, while it is unknown to Eve. In other words, which antenna transmits $\tilde{\boldsymbol{s}}_k$, $\forall k$, i.e., $x_1, z_1,\dots,x_K,z_K$ in (\ref{equ:Bob-tilde-h-B-theta-phi}) and (\ref{equ:Eve-tilde-h-B-theta-phi}), is known by Bob but unknown by Eve. The other parameters, e.g., the UPA size $M_x \times M_z$, the value of selected antennas $K$, and the pre-designed pilot signal $\tilde{\boldsymbol{S}}$, can be known by both Bob and Eve. 
In the rest of this paper, we show how to design the antenna selection and permutation strategy $\boldsymbol{\Omega}$ to achieve privacy-preserving localization.

\section{Privacy Design Statement}  \label{sec:privacy-design-statement}

Under cryptography for secure communication, the core is to design the key such that the legitimate receiver with the key can successfully decode the message, while the eavesdropper without the key cannot. In our considered privacy-preserving localization scheme, the transmit antenna selection and permutation strategy $\boldsymbol{\Omega}$ serves as the role of key. Our goals are two-fold: when $\boldsymbol{\Omega}$ is known, it is easy for Bob to uniquely estimate $\theta_{\mathrm{Bob}}$ and $\phi_{\mathrm{Bob}}$ from (\ref{equ:Bob-received-signal-with-selection}); while when $\boldsymbol{\Omega}$ is unknown, it is infeasible for Eve to uniquely estimate $\theta_{\mathrm{Eve}}$ and $\phi_{\mathrm{Eve}}$ from (\ref{equ:Eve-received-signal-with-selection}). In the following, we define these two goals more precisely.  
\begin{definition}[Uniqueness in Bob's AOD Estimation] \label{def:Bob-uniqueness}
    An antenna selection and permutation strategy $\boldsymbol{\Omega}$ is defined to satisfy Bob's estimation uniqueness requirement if and only if given any ${\theta}_{\mathrm{Bob}}, \tilde{\theta}_{\mathrm{Bob}} \in (-\pi/2, \pi/2)$, and ${\phi}_{\mathrm{Bob}}, \tilde{\phi}_{\mathrm{Bob}}\in (0, \pi)$, the following condition holds:
    % for any $\tilde{\theta}_{\mathrm{Bob}} \in (-\pi/2, \pi/2)$, $\tilde{\phi}_{\mathrm{Bob}} \in (0, \pi)$ with $(\tilde{\theta}_{\mathrm{Bob}}, \tilde{\phi}_{\mathrm{Bob}}) \neq ({\theta}_{\mathrm{Bob}}, {\phi}_{\mathrm{Bob}})$
    \begin{align}
        \tilde{\boldsymbol{S}}\tilde{\boldsymbol{h}}(\boldsymbol{\Omega}, \tilde{\theta}_{\mathrm{Bob}}, \tilde{\phi}_{\mathrm{Bob}}) & \neq \tilde{\boldsymbol{S}} \tilde{\boldsymbol{h}}(\boldsymbol{\Omega}, {\theta}_{\mathrm{Bob}}, {\phi}_{\mathrm{Bob}}) \nonumber \\
        &\ \mathrm{if}\  (\tilde{\theta}_{\mathrm{Bob}}, \tilde{\phi}_{\mathrm{Bob}}) \neq ({\theta}_{\mathrm{Bob}}, {\phi}_{\mathrm{Bob}}).   \label{equ:requirement-def1}
    \end{align}
    In other words, in the ideal case without noise, Bob can uniquely estimate Alice's AOD information based on its received signal (\ref{equ:Bob-received-signal-with-selection}). 
    Moreover, define
    \begin{equation}
    \boldsymbol{\Lambda}=\{\boldsymbol{\Omega} \mid  \boldsymbol{\Omega} \ \mathrm{satisfies} \ (\ref{equ:requirement-def1})\}, \label{equ:Lambda-definition}
    \end{equation}
    as the set consisting of all the antenna selection and permutation strategies to ensure unique AOD estimation of Bob.
\end{definition}

Not surprisingly, not all the transmit antenna selection and permutation strategies can satisfy requirement (\ref{equ:requirement-def1}), as will be shown later in Section \ref{sec:antenna-selection-strategy-design-uniqueness}. This is because of the grating lobe effect \cite{Balanis_2005_ArraysLinearPlanarandCircular, Mailloux_1994_PhasedArrayAntennaHandbook}. Specifically, selecting a subset of antennas from the UPA increases the effective spacing between the antenna elements. It is well established that if this inter-element spacing becomes excessive, it induces spatial aliasing, resulting in grating lobes and consequently leading to directional ambiguities\cite{ Bray_2002_OptimizationofThinnedAperiodicLinearPhasedArraysUsingGeneticAlgorithmstoReduceGratingLobesduringScanning}. In practice, we have to select $\boldsymbol{\Omega}$ from $\boldsymbol{\Lambda}$ such that Bob can uniquely localize Alice. However, Eve also has the side information that $\boldsymbol{\Omega}$ must be chosen from $\boldsymbol{\Lambda}$, although the exact $\boldsymbol{\Omega}$ is unknown.
In the following, we define the requirement about $\boldsymbol{\Omega}$ to prevent Eve from estimating Alice's AOD information even with the above side information. 

\begin{definition}[Ambiguity in Eve’s AOD Estimation]\label{def:Eve-ambiguity}
    An antenna selection and permutation strategy $\boldsymbol{\Omega} \in \boldsymbol{\Lambda}$ (we will characterize $\boldsymbol{\Lambda}$ later in Theorems \ref{thm:equ-thm-uniqueness-K-equal-3} and \ref{thm:K-greater-than-3}) is defined to satisfy Eve's estimation ambiguity requirement if and only if given any ${\theta}_{\mathrm{Eve}} \in (-\pi/2, \pi/2)$ and ${\phi}_{\mathrm{Eve}} \in (0, \pi)$, there exists at least an alternative solution of $\tilde{\boldsymbol{\Omega}} \in \boldsymbol{\Lambda}$, $\tilde{\theta}_{\mathrm{Eve}} \in (-\pi/2, \pi/2)$ and $\tilde{\phi}_{\mathrm{Eve}}\in (0, \pi)$ with $(\tilde{\theta}_{\mathrm{Eve}}, \tilde{\phi}_{\mathrm{Eve}})  \neq ({\theta}_{\mathrm{Eve}}, {\phi}_{\mathrm{Eve}})$ such that
    \begin{equation}
        \tilde{\boldsymbol{S}}\tilde{\boldsymbol{h}}_{\mathrm{Eve}}(\tilde{\boldsymbol{\Omega}}, \tilde{\theta}_{\mathrm{Eve}}, \tilde{\phi}_{\mathrm{Eve}}) = \tilde{\boldsymbol{S}}\tilde{\boldsymbol{h}}_{\mathrm{Eve}}(\boldsymbol{\Omega}, {\theta}_{\mathrm{Eve}}, {\phi}_{\mathrm{Eve}}).
        \label{equ:requirement-def2}
    \end{equation}
    In other words, even if Eve knows that Alice must choose $\boldsymbol{\Omega}$ from the set $\boldsymbol{\Lambda}$ to satisfy Bob's estimation uniqueness requirement, there are multiple solutions of Alice's elevation angle and azimuth angle to make Eve get the same received signal in the ideal case without noise. 
\end{definition}

Note that under the cryptography technique for secure communication, without the knowledge about the key, the eavesdropper is theoretically able to decode the messages, but of prohibitively high complexity.
Under our considered privacy-preserving localization scheme, as shown in Definitions \ref{def:Bob-uniqueness} and \ref{def:Eve-ambiguity}, strict privacy can be preserved because theoretically speaking, distinct AODs of Alice can lead to the same received signal at Eve. 

\begin{remark}
To define uniqueness for Bob's estimation and ambiguity for Eve's estimation, we assume that there is no noise at Bob and Eve. After some $\boldsymbol{\Omega}$ is designed to satisfy the requirements of Definitions \ref{def:Bob-uniqueness} and \ref{def:Eve-ambiguity}, it can be directly applied in practice with noise at Bob and Eve. As will be shown in Section \ref{sec:numerical-results}, in the noisy case, our designed antenna selection and permutation strategy can still enable Bob to accurately estimate ${\theta}_{\mathrm{Bob}}$ and ${\phi}_{\mathrm{Bob}}$, and Eve to encounter significant ambiguity in estimating ${\theta}_{\mathrm{Eve}}$ and ${\phi}_{\mathrm{Eve}}$.
\end{remark}

For convenience, we introduce some notations that will be useful to show how to design the antenna selection and permutation strategy to satisfy the requirements in Definitions \ref{def:Bob-uniqueness} and \ref{def:Eve-ambiguity}. First, define
\begin{align}
    &u_{\mathrm{Bob}} = \cos (\theta_{\mathrm{Bob}}) \cos (\phi_{\mathrm{Bob}}), \  v_{\mathrm{Bob}} = \sin (\theta_{\mathrm{Bob}}), \label{equ:u-cos-cos-v-sin} \\ 
    &u_{\mathrm{Eve}} = \cos (\theta_{\mathrm{Eve}}) \cos (\phi_{\mathrm{Eve}}), \ v_{\mathrm{Eve}} = \sin (\theta_{\mathrm{Eve}}). \label{equ:u-cos-cos-v-sin-Eve}
\end{align}
Note that $(u_{\mathrm{Bob}}, v_{\mathrm{Bob}})$ and $(\theta_{\mathrm{Bob}}, \phi_{\mathrm{Bob}})$ are interchangeable when ${\theta}_{\mathrm{Bob}} \in (-\pi/2, \pi/2)$, ${\phi}_{\mathrm{Bob}} \in (0, \pi)$, and $0<u_{\mathrm{Bob}}^2 + v_{\mathrm{Bob}}^2 < 1$; $(u_{\mathrm{Eve}}, v_{\mathrm{Eve}})$ and $(\theta_{\mathrm{Eve}}, \phi_{\mathrm{Eve}})$ are interchangeable when ${\theta}_{\mathrm{Eve}} \in (-\pi/2, \pi/2)$, ${\phi}_{\mathrm{Eve}} \in (0, \pi)$, and $0< u_{\mathrm{Eve}}^2 + v_{\mathrm{Eve}}^2 < 1$. This is because given any $\theta_{\mathrm{Bob}} \in (-\pi/2, \pi/2)$ and $\phi_{\mathrm{Bob}} \in (0, \pi)$, $(u_{\mathrm{Bob}}, v_{\mathrm{Bob}})$ can be uniquely mapped based on (\ref{equ:u-cos-cos-v-sin}) that satisfies $0<u_{\mathrm{Bob}}^2 + v_{\mathrm{Bob}}^2 < 1$. Meanwhile, given any $(u_{\mathrm{Bob}}, v_{\mathrm{Bob}})$ with $0<u_{\mathrm{Bob}}^2+v_{\mathrm{Bob}}^2 < 1$, we can also uniquely obtain $\theta_{\mathrm{Bob}} = \arcsin(v_{\mathrm{Bob}}) \in (-\pi/2, \pi/2)$, and $\phi_{\mathrm{Bob}} = \arccos(u_{\mathrm{Bob}} / \sqrt{1 - v_{\mathrm{Bob}}^2}) \in (0, \pi)$ according to (\ref{equ:u-cos-cos-v-sin}). The case for Eve is analogous.
Due to the above one-to-one mapping, we can also claim that Definitions \ref{def:Bob-uniqueness} and \ref{def:Eve-ambiguity} are to make Bob uniquely estimate $u_{\mathrm{Bob}}$ and $v_{\mathrm{Bob}}$ with $0<u_{\mathrm{Bob}}^2+v_{\mathrm{Bob}}^2 < 1$, and Eve have ambiguity in estimating $u_{\mathrm{Eve}}$ and $v_{\mathrm{Eve}}$ with $0 < u_{\mathrm{Eve}}^2+v_{\mathrm{Eve}}^2 < 1$, in the noiseless case.

Second, given any antenna selection and permutation strategy $\boldsymbol{\Omega}$ as defined in (\ref{equ:Omega-definition}), define 
\begin{align}
&\boldsymbol{x}(\boldsymbol{\Omega}) = [0, x_2 - x_1, \ldots, x_K - x_1]^T, \label{equ:definition-xB}\\
&\boldsymbol{z}(\boldsymbol{\Omega}) = [0, z_2 - z_1, \ldots, z_K - z_1]^T, \label{equ:definition-zB}
\end{align}
as the vectors consisting of the index differences between any antenna and the first antenna in $\boldsymbol{\Omega}$ along the \(x\)-axis and the \(z\)-axis, respectively. Because AOD estimation is based on the phase difference among the transmit signals of Alice's antennas, as will be shown later, all the requirements on $\boldsymbol{\Omega}$ in Definitions \ref{def:Bob-uniqueness} and \ref{def:Eve-ambiguity} are on $\boldsymbol{x}(\boldsymbol{\Omega})$ and $\boldsymbol{z}(\boldsymbol{\Omega})$. 

Given the above notations, $\tilde{\boldsymbol{h}}_{\mathrm{Bob}}(\boldsymbol{\Omega}, \theta_{\mathrm{Bob}}, \phi_{\mathrm{Bob}})$ given in (\ref{equ:Bob-tilde-h-B-theta-phi}) and $\tilde{\boldsymbol{h}}_{\mathrm{Eve}}(\boldsymbol{\Omega}, \theta_{\mathrm{Eve}}, \phi_{\mathrm{Eve}})$ given in (\ref{equ:Eve-tilde-h-B-theta-phi}) can be expressed as
\begin{align}  
    & \tilde{\boldsymbol{h}}_{\mathrm{Bob}}(\boldsymbol{\Omega}, u_{\mathrm{Bob}}, v_{\mathrm{Bob}})= {\gamma_{\mathrm{Bob}}} e^{-j\pi \left((x_1-1) u_{\mathrm{Bob}} + (z_1-1) v_{\mathrm{Bob}}\right)} \nonumber \\  
    & \times \Big[1, \ldots,
    e^{-j \pi\left((x_K - x_1) u_{\mathrm{Bob}} + (z_K - z_1) v_{\mathrm{Bob}}\right)}\Big]^T,   \label{equ:Bob-tilde-h-B-u-v}   \\
    & \tilde{\boldsymbol{h}}_{\mathrm{Eve}}(\boldsymbol{\Omega}, u_{\mathrm{Eve}}, v_{\mathrm{Eve}})= {\gamma_{\mathrm{Eve}}}  e^{-j\pi \left((x_1-1) u_{\mathrm{Eve}} + (z_1-1) v_{\mathrm{Eve}}\right)} \nonumber \\
    & \times \Big[1, \ldots, e^{-j \pi\left((x_K - x_1) u_{\mathrm{Eve}} + (z_K - z_1) v_{\mathrm{Eve}}\right)}\Big]^T. &
    \label{equ:Eve-tilde-h-B-u-v}
\end{align}
Note that we replace $\theta_{\mathrm{Bob}}$, $\phi_{\mathrm{Bob}}$ by  $u_{\mathrm{Bob}}$, $v_{\mathrm{Bob}}$, and $\theta_{\mathrm{Eve}}$, $\phi_{\mathrm{Eve}}$ by $u_{\mathrm{Eve}}$, $v_{\mathrm{Eve}}$ in the above, because of their one-to-one mapping. 

Then, in the following two sections, we respectively introduce how to characterize and obtain $\boldsymbol{\Omega}$ based on Definition~\ref{def:Bob-uniqueness} to guarantee uniqueness in Bob’s AOD estimation, and based on Definition \ref{def:Eve-ambiguity} to simultaneously guarantee ambiguity in Eve's AOD estimation.

\section{Antenna Selection And Permutation Strategy Design For Bob's Estimation Uniqueness} \label{sec:antenna-selection-strategy-design-uniqueness}

Given any antenna selection and permutation strategy $\boldsymbol{\Omega}$, verifying whether it satisfies Bob’s estimation uniqueness requirement as in Definition \ref{def:Bob-uniqueness} is highly challenging. Specifically, since the parameters ${\theta}_{\mathrm{Bob}}$, $\tilde{\theta}_{\mathrm{Bob}}$, ${\phi}_{\mathrm{Bob}}$ and $\tilde{\phi}_{\mathrm{Bob}}$ are all defined over continuous domains, it is computationally intractable to verify whether condition (\ref{equ:requirement-def1}) holds for all possible pairs of distinct $(\tilde{\theta}_{\mathrm{Bob}}, \tilde{\phi}_{\mathrm{Bob}})$ and $(\theta_{\mathrm{Bob}}, \phi_{\mathrm{Bob}})$.
Therefore, in this section, we aim to provide an implementable approach to find all the antenna selection and permutation strategies that satisfy  Definition \ref{def:Bob-uniqueness}.

Given the notations established in the preceding section and by exploiting the periodicity of the complex exponential functions, the set $\boldsymbol{\Lambda}$ in (\ref{equ:Lambda-definition}) is characterized in the following proposition.

\begin{proposition} \label{prop:uniqueness}
    In the regime of $G \geq K$, i.e., the pilot sequence length is no smaller than the number of selected antennas, the set consisting of all the antenna selection and permutation strategies that satisfy (\ref{equ:requirement-def1}) in Definition \ref{def:Bob-uniqueness}, i.e., $\boldsymbol{\Lambda}$ given in (\ref{equ:Lambda-definition}), is characterized as
    \begin{equation}
    \begin{aligned}
    \boldsymbol{\Lambda}=\{\boldsymbol{\Omega} \mid \ &\mathrm{given\ any}\  u \ \mathrm{and}\  v \ \mathrm{with} \ 0 < u^2 + v^2 < 1, \\
    &\mathrm{at\ least\ one\ element\ in\ the\ vector}\\
    &u \boldsymbol{x}(\boldsymbol{\Omega})+ v \boldsymbol{z}(\boldsymbol{\Omega})\ \mathrm{is\ not\ an\ integer}\}.
        \end{aligned} \label{equ:theorem1}
    \end{equation}
\end{proposition}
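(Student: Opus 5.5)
The plan is to reduce condition (\ref{equ:requirement-def1}) to a condition on the normalized channel vectors, then to a lattice condition on phase differences, and finally to rescale so that the domain becomes exactly $0<u^2+v^2<1$. Throughout I take the regime $G\geq K$ to mean that the pre-designed pilot matrix $\tilde{\boldsymbol{S}}$ has full column rank $K$. I also treat the channel coefficient $\gamma_{\mathrm{Bob}}$ as unknown, so two hypotheses are indistinguishable when their received signals coincide for some pair of coefficients. By the one-to-one mapping established before the proposition, I will work throughout with $(u,v)$ in the open punctured unit disk instead of $(\theta,\phi)$.

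\emph{Step 1 (removing $\tilde{\boldsymbol{S}}$).} Since $\tilde{\boldsymbol{S}}$ is injective on $\mathbb{C}^K$, equality $\tilde{\boldsymbol{S}}\tilde{\boldsymbol{h}}(\boldsymbol{\Omega},\tilde u,\tilde v)=\tilde{\boldsymbol{S}}\tilde{\boldsymbol{h}}(\boldsymbol{\Omega},u,v)$ holds if and only if the channel vectors themselves are equal.

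\emph{Step 2 (reducing to a lattice condition).} Using (\ref{equ:Bob-tilde-h-B-u-v}), each channel vector is a nonzero complex scalar times
\begin{equation*}
\big[1,\ e^{-j\pi((x_2-x_1)u+(z_2-z_1)v)},\ \ldots,\ e^{-j\pi((x_K-x_1)u+(z_K-z_1)v)}\big]^T .
\end{equation*}
Because the first entry of this bracket is $1$, and $\gamma_{\mathrm{Bob}}$ is free, two such vectors can be made equal if and only if all the remaining entries agree. Equivalently,
\begin{equation*}
(\tilde u-u)\,\boldsymbol{x}(\boldsymbol{\Omega})+(\tilde v-v)\,\boldsymbol{z}(\boldsymbol{\Omega})\in 2\mathbb{Z}^K .
\end{equation*}
Writing $(u',v')=\big((\tilde u-u)/2,\,(\tilde v-v)/2\big)$, this becomes $u'\boldsymbol{x}(\boldsymbol{\Omega})+v'\boldsymbol{z}(\boldsymbol{\Omega})\in\mathbb{Z}^K$.

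\emph{Step 3 (matching the domain, the main obstacle).} The delicate point is to show that, as $(u,v)$ and $(\tilde u,\tilde v)$ range over distinct points of the open unit disk, the half-difference $(u',v')$ ranges over exactly the punctured disk $0<u'^2+v'^2<1$.
\begin{itemize}
\item \emph{Containment.} Distinctness gives $(u',v')\neq 0$. The triangle inequality gives $\|(\tilde u,\tilde v)-(u,v)\|<2$, hence $u'^2+v'^2<1$.
\item \emph{Surjectivity.} Given any $(u',v')$ with $0<u'^2+v'^2<1$, choose $(\tilde u,\tilde v)=(u',v')$ and $(u,v)=(-u',-v')$. Both points have norm in $(0,1)$, so both correspond to valid AODs, and their half-difference is exactly $(u',v')$.
\end{itemize}

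\emph{Step 4 (conclusion).} Combining the steps, $\boldsymbol{\Omega}$ violates (\ref{equ:requirement-def1}) if and only if there is some $(u',v')$ with $0<u'^2+v'^2<1$ for which every entry of $u'\boldsymbol{x}(\boldsymbol{\Omega})+v'\boldsymbol{z}(\boldsymbol{\Omega})$ is an integer. Taking the complement yields exactly the set in (\ref{equ:theorem1}). Apart from Step 3, I would also state the full-rank reading of $G\geq K$ and the free-$\gamma$ interpretation of equality explicitly at the start of the proof.
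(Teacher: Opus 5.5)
Your proposal is correct and follows the paper's proof step for step: the same reduction to equality of the channel vectors using $G\geq K$ and the unknown $\gamma_{\mathrm{Bob}}$, the same $(\tilde u,\tilde v)=(u',v')$, $(u,v)=(-u',-v')$ construction for one direction, and the same half-difference $(\Delta u/2,\Delta v/2)$ for the other. The main differences are that you bound $\|(\Delta u/2,\Delta v/2)\|_2<1$ directly by the triangle inequality, which is simpler than the paper's trigonometric chain (a)--(c), and that you state explicitly the full-column-rank reading of $G\geq K$ that the paper uses without comment.
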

\begin{IEEEproof}
See Appendix \ref{app:proof-theorem-Bob}.
\end{IEEEproof}

The antenna selection and permutation strategy characterized in Proposition \ref{prop:uniqueness} can remove the grating lobe issue \cite{Balanis_2005_ArraysLinearPlanarandCircular} for Bob. 
Actually, if we can find $u$ and $v$ such that all the elements in $u \boldsymbol{x}(\boldsymbol{\Omega})+ v \boldsymbol{z}(\boldsymbol{\Omega})$ are integers, given $u_{\mathrm{Bob}}$ and $v_{\mathrm{Bob}}$, the new solution of $u_{\mathrm{Bob}} + 2u$ and $v_{\mathrm{Bob}} + 2v$ will add a phase difference of $2\pi$ to all the elements in the channel $\tilde{\boldsymbol{h}}_{\mathrm{Bob}}(\boldsymbol{\Omega}, u_{\mathrm{Bob}}, v_{\mathrm{Bob}})$ given in (\ref{equ:Bob-tilde-h-B-u-v}). This may lead to the same received signal because of the periodicity of the complex exponential function. 
One example is as follows. Suppose the ordered sequence of selected antennas $\boldsymbol{\Omega}=((1, 1), (1, 4), (2, 3))$ such that $\boldsymbol{x}(\boldsymbol{\Omega}) = [0, 0, 1]^T$ and $\boldsymbol{z}(\boldsymbol{\Omega}) = [0, 3, 2]^T$. In this case, there exist $u=v=1/3$ such that $u \boldsymbol{x}(\boldsymbol{\Omega})+ v \boldsymbol{z}(\boldsymbol{\Omega}) = [0,1,1]^T$. Suppose $\theta_{\mathrm{Bob}} = \arcsin(-1/3)$ and $\phi_{\mathrm{Bob}}=\arccos(-\sqrt{2}/4)$ such that $u_{\mathrm{Bob}}=-1/3$ and $v_{\mathrm{Bob}}= -1/3$. We can construct a new solution of $\tilde{u}_{\mathrm{Bob}}=u_{\mathrm{Bob}}+2u= 1/3$ and $\tilde{v}_{\mathrm{Bob}}=v_{\mathrm{Bob}}+2v=1/3$. Correspondingly, $\tilde{\theta}_{\mathrm{Bob}}= \arcsin(1/3) = -\theta_{\mathrm{Bob}}$ and $\tilde{\phi}_{\mathrm{Bob}}=\arccos(\sqrt{2}/4) = \pi - \phi_{\mathrm{Bob}}$. It can be shown that $\tilde{\theta}_{\mathrm{Bob}}$, $\tilde{\phi}_{\mathrm{Bob}}$, $\theta_{\mathrm{Bob}}$ and $\phi_{\mathrm{Bob}}$ do not satisfy (\ref{equ:requirement-def1}) in Definition~\ref{def:Bob-uniqueness}. Therefore, Bob cannot uniquely estimate Alice's AOD.

For convenience, define
\begin{align}
    &\boldsymbol{w} = [u, v]^T, \\ 
    &\boldsymbol{P}(\boldsymbol{\Omega}) = [\boldsymbol{x}(\boldsymbol{\Omega}), \boldsymbol{z}(\boldsymbol{\Omega})] \in \mathbb{Z}^{K \times 2}, \label{equ:P-Omega}
\end{align}
such that $u \boldsymbol{x}(\boldsymbol{\Omega})+ v \boldsymbol{z}(\boldsymbol{\Omega}) = \boldsymbol{P}(\boldsymbol{\Omega})\boldsymbol{w}$. 
In the following, we will tackle the above challenge in obtaining the set $\boldsymbol{\Lambda}$ defined in (\ref{equ:theorem1}) under the cases when $K=3$ and $K>3$, respectively. 

\subsection{Case When $K=3$} \label{subsec:case-when-K-3}
Under the special case when $K=3$, we are able to characterize $\boldsymbol{\Lambda}$ based on the following theorem.

\begin{theorem}\label{thm:equ-thm-uniqueness-K-equal-3}
In the case when $K=3$ antennas are selected from $M \geq 3$ antennas, the set $\boldsymbol{\Lambda}$ in (\ref{equ:theorem1}) is characterized as
\begin{equation}  \label{equ:Lambda-K-equal-3}
    \boldsymbol{\Lambda} = \{\boldsymbol{\Omega} \mid |\det([\boldsymbol{P}(\boldsymbol{\Omega})]_{2:3, :})| = 1\}, 
\end{equation}
where $\boldsymbol{P}(\boldsymbol{\Omega})$ is defined in (\ref{equ:P-Omega}), and $[\boldsymbol{P}(\boldsymbol{\Omega})]_{2:3, :}$ denotes the submatrix formed by the second and third rows of $\boldsymbol{P}(\boldsymbol{\Omega})$.
\end{theorem}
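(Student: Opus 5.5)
Using Proposition \ref{prop:uniqueness}, we reduce the statement to a lattice question. Because the first row of $\boldsymbol{P}(\boldsymbol{\Omega})$ is zero, $\boldsymbol{\Omega}\in\boldsymbol{\Lambda}$ holds if and only if no $\boldsymbol{w}=[u,v]^T$ with $0<u^2+v^2<1$ satisfies $\mathbf{Q}\boldsymbol{w}\in\mathbb{Z}^2$, where $\mathbf{Q}=[\boldsymbol{P}(\boldsymbol{\Omega})]_{2:3,:}\in\mathbb{Z}^{2\times 2}$. Equivalently, the set $L=\{\boldsymbol{w}\in\mathbb{R}^2 \mid \mathbf{Q}\boldsymbol{w}\in\mathbb{Z}^2\}$ must meet the open unit disk only at the origin. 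We then prove both inclusions of (\ref{equ:Lambda-K-equal-3}) by analysing $L$ according to $\det(\mathbf{Q})$.

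\emph{Sufficiency.} Suppose $|\det(\mathbf{Q})|=1$. Then $\mathbf{Q}\in\mathrm{GL}(2,\mathbb{Z})$, so $\mathbf{Q}^{-1}$ is an integer matrix and $L=\mathbf{Q}^{-1}\mathbb{Z}^2=\mathbb{Z}^2$. Every nonzero integer vector has norm at least $1$, so no $\boldsymbol{w}$ with $0<u^2+v^2<1$ lies in $L$, and therefore $\boldsymbol{\Omega}\in\boldsymbol{\Lambda}$.

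\emph{Necessity, case $\det(\mathbf{Q})=0$.} The rows of $\mathbf{Q}$ are linearly dependent, so there is a nonzero real $\boldsymbol{w}$ with $\mathbf{Q}\boldsymbol{w}=\boldsymbol{0}\in\mathbb{Z}^2$. Scaling $\boldsymbol{w}$ to have norm $1/2$ gives a point of $L$ inside the punctured disk, so $\boldsymbol{\Omega}\notin\boldsymbol{\Lambda}$. This case also covers coinciding antennas and collinear choices.

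\emph{Necessity, case $|\det(\mathbf{Q})|=D\ge 2$.} Here $L=\mathbf{Q}^{-1}\mathbb{Z}^2$ is a lattice that contains $\mathbb{Z}^2$ with index $D$, so it has a non-integer point $\boldsymbol{w}_0$. Reducing $\boldsymbol{w}_0$ modulo $\mathbb{Z}^2$ places it in $[-1/2,1/2]^2\setminus\{\boldsymbol{0}\}$, and reduction stays in $L$ because $\mathbb{Z}^2\subset L$. A point of that square has norm at most $\sqrt{2}/2<1$. The reduced point is nonzero, since otherwise $\boldsymbol{w}_0$ would be an integer vector, so it lies in the punctured unit disk and $\boldsymbol{\Omega}\notin\boldsymbol{\Lambda}$.

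Together these cases show $\boldsymbol{\Omega}\in\boldsymbol{\Lambda}$ if and only if $|\det(\mathbf{Q})|=1$. The step needing the most care is the $D\ge 2$ case. One could apply Minkowski's theorem, since the covolume of $L$ is $1/D\le 1/2$ and a disk of area $\pi>4/D$ must contain a nonzero lattice point. The direct reduction modulo $\mathbb{Z}^2$ is simpler, because it yields a nonzero point with norm strictly below $1$.
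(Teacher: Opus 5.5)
Your proof is correct, and its skeleton matches the paper's. You reduce via Proposition~\ref{prop:uniqueness} to the $2\times 2$ block $\mathbf{Q}=[\boldsymbol{P}(\boldsymbol{\Omega})]_{2:3,:}$. You dispose of $\det(\mathbf{Q})=0$ with a short kernel vector. You get sufficiency from the integrality of $\mathbf{Q}^{-1}$; the paper phrases this as pulling a hypothetical nonzero integer point of the ellipse $\mathbf{Q}\mathcal{D}$, with $\mathcal{D}$ the open unit disk, back to a nonzero integer point of $\mathcal{D}$, which is the same argument.

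The genuine difference is the case $|\det(\mathbf{Q})|\ge 2$. The paper stays in the image space and invokes Minkowski's convex body theorem: $\mathbf{Q}\mathcal{D}$ is convex, centrally symmetric and of area $\pi|\det(\mathbf{Q})|>4$, so it contains a nonzero integer point. You instead work in the parameter space with the superlattice $L=\mathbf{Q}^{-1}\mathbb{Z}^2\supsetneq\mathbb{Z}^2$, and reduce a non-integer point of $L$ modulo $\mathbb{Z}^2$ into $[-1/2,1/2]^2$.

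The paper's route is a one-line appeal to a classical theorem, but it is non-constructive. Yours is elementary and constructive: it produces an explicit grating-lobe vector $\boldsymbol{w}$ with $0<\|\boldsymbol{w}\|_2\le\sqrt{2}/2$. Through the necessity part of Proposition~\ref{prop:uniqueness}, this immediately gives an explicit pair of AODs that Bob cannot distinguish. Your necessity argument also uses neither convexity nor area. It needs only that the admissible region contains the punctured square $[-1/2,1/2]^2\setminus\{\boldsymbol{0}\}$, so the $|\det|=1$ characterization would survive changes in the shape of that region.

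A minor streamlining: you do not need the index-$D$ fact. Since $\det(\mathbf{Q}^{-1})=1/\det(\mathbf{Q})\notin\mathbb{Z}$, some column $\mathbf{Q}^{-1}\boldsymbol{e}_i$ is non-integral. That column lies in $L$ because $\mathbf{Q}\mathbf{Q}^{-1}\boldsymbol{e}_i=\boldsymbol{e}_i\in\mathbb{Z}^2$.
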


\begin{IEEEproof}
    See Appendix \ref{app:equ-thm-uniqueness-K-equal-3}.
\end{IEEEproof}

The condition $|\det([\boldsymbol{P}(\boldsymbol{\Omega})]_{2:3, :})| = 1$ in Theorem \ref{thm:equ-thm-uniqueness-K-equal-3} implies that $[\boldsymbol{P}(\boldsymbol{\Omega})]_{2:3, :}$ is unimodular. Geometrically, a unimodular transformation is lattice-preserving, meaning $[\boldsymbol{P}(\boldsymbol{\Omega})]_{2:3, :}$ bijectively maps the integer lattice $\mathbb{Z}^2$ onto itself. Since the region $\{\boldsymbol{w} \in \mathbb{R}^{2} \mid 0< \|\boldsymbol{w}\|_2<1\}$ contains no integer point, it follows that the region $\{[\boldsymbol{P}(\boldsymbol{\Omega})]_{2:3, :}\boldsymbol{w} \mid \boldsymbol{w} \in \mathbb{R}^{2}, 0 < \|\boldsymbol{w}\|_2<1\}$ contains no integer point.

\subsection{Case When $K>3$}

Under the case when $K>3$, in general, we do not have a closed-form solution to characterize $\boldsymbol{\Lambda}$. In the following, we propose an efficient algorithm to obtain all $\boldsymbol{\Omega}$ in $\boldsymbol{\Lambda}$. 
We formulate the following feasibility problem:
\begin{subequations} \label{equ:feasibility-problem-of-w}
\begin{alignat}{2}
    & \mathrm{find} \quad && \boldsymbol{w} \\
    & \mathrm{subject\ to} \quad && \boldsymbol{P}(\boldsymbol{\Omega}) \boldsymbol{w} \in \mathbb{Z}^{K \times 1}, \label{equ:constraint-Pw-in-ZK}\\
    & && 0 < \|\boldsymbol{w}\|^2 < 1.  \label{equ:constraint-w}
\end{alignat}
\end{subequations}
Given some $\boldsymbol{\Omega}$, if the above problem admits no solution, we can conclude that $\boldsymbol{\Omega} \in \boldsymbol{\Lambda}$.

First, any $\boldsymbol{\Omega}$ with $\operatorname{rank}(\boldsymbol{P}(\boldsymbol{\Omega})) = 1$ must satisfy $\boldsymbol{\Omega} \notin \boldsymbol{\Lambda}$, because we can always find some $\boldsymbol{w}$ with arbitrarily small norm power that makes $\boldsymbol{P}(\boldsymbol{\Omega})\boldsymbol{w} = \mathbf{0}$.

Second, given some $\boldsymbol{\Omega}$ with $\operatorname{rank}(\boldsymbol{P}(\boldsymbol{\Omega})) = 2$, we solve the above feasibility problem as follows. 
Specifically, according to Cauchy-Schwarz inequality and (\ref{equ:constraint-w}), we have
\begin{align}
    |[\boldsymbol{P}(\boldsymbol{\Omega})]_{k,:} \boldsymbol{w}| & \leq \|[\boldsymbol{P}(\boldsymbol{\Omega})]_{k,:}\| \|\boldsymbol{w}\|  \nonumber \\
    & < \sqrt{(M_x-1)^2 + (M_z-1)^2},
\end{align}
where $[\boldsymbol{P}(\boldsymbol{\Omega})]_{k,:}$ denotes the $k$-th row of $\boldsymbol{P}(\boldsymbol{\Omega})$.
Define
\begin{align} \label{equ:search-space-n-2D}
    \mathcal{N} = \big \{ &[n_1, \dots, n_K]^T \in \mathbb{Z}^{K \times 1} \mid \nonumber \\
    & |n_k| < \sqrt{(M_x-1)^2 + (M_z-1)^2}, \forall k \big\}.
\end{align}
Then, problem (\ref{equ:feasibility-problem-of-w}) is equivalent to the following problem
\begin{subequations} \label{equ:feasibility-problem-of-w-Pw-in-N}
\begin{alignat}{2}
    & \mathrm{find} \quad && \boldsymbol{w} \\
    & \mathrm{subject\ to} \quad && \boldsymbol{P}(\boldsymbol{\Omega})\boldsymbol{w} \in \mathcal{N}, \label{equ:Pw-in-N}\\
    & && (\text{\ref{equ:constraint-w}}) \label{equ:constraint-w-2}.
\end{alignat}
\end{subequations}
Different from (\ref{equ:constraint-Pw-in-ZK}) where there are an infinite number of $K$ by 1 vectors in the set $\mathbb{Z}^{K \times 1}$, in (\ref{equ:Pw-in-N}), there are a finite number of $K$ by 1 vectors in the set $\mathcal{N}$. This makes it possible to solve problem (\ref{equ:feasibility-problem-of-w-Pw-in-N}) via exhaustive search, as shown in Algorithm \ref{alg:check-whether-Omega-in-Lambda}. 

\begin{algorithm}[t] 
    \caption{Algorithm for Solving Problem (\ref{equ:feasibility-problem-of-w-Pw-in-N})}
    
    \small
    \vspace{3pt}

    \begin{itemize}
    \item[a)] Given some $\boldsymbol{\Omega}$ with $\operatorname{rank}(\boldsymbol{P}(\boldsymbol{\Omega})) = 2$, for each $\boldsymbol{n} \in \mathcal{N}$, solve for $\boldsymbol{w}$ satisfying both $\boldsymbol{P}(\boldsymbol{\Omega})\boldsymbol{w} = \boldsymbol{n}$ and (\ref{equ:constraint-w-2});
    
    \item[b)] If such $\boldsymbol{w}$ is obtained for at least one $\boldsymbol{n} \in \mathcal{N}$, conclude that problem (\ref{equ:feasibility-problem-of-w-Pw-in-N}) is feasible; otherwise, if no such $\boldsymbol{w}$ exists for all $\boldsymbol{n} \in \mathcal{N}$, conclude that problem (\ref{equ:feasibility-problem-of-w-Pw-in-N}) is infeasible.
    \end{itemize}

    \vspace{3pt}
  \label{alg:check-whether-Omega-in-Lambda}
\end{algorithm}

To summarize, under the case when $K>3$, we have the following theorem.

\begin{theorem} \label{thm:K-greater-than-3}
In the case when $K>3$ antennas are selected from $M>3$ antennas, the set $\boldsymbol{\Lambda}$ in (\ref{equ:theorem1}) is characterized as
\begin{align}  \label{equ:Lambda-K-greater-than-3}
\boldsymbol{\Lambda} = \{ \boldsymbol{\Omega} \mid & \operatorname{rank}(\boldsymbol{P}(\boldsymbol{\Omega})) = 2 \mathrm{\ and\ Algorithm\ \ref{alg:check-whether-Omega-in-Lambda}\ claims\ } \nonumber \\ 
&\mathrm{that\ problem\ (\ref{equ:feasibility-problem-of-w-Pw-in-N})\ is\ infeasible} \}.
\end{align}
\end{theorem}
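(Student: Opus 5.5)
By Proposition \ref{prop:uniqueness}, $\boldsymbol{\Omega} \in \boldsymbol{\Lambda}$ if and only if problem (\ref{equ:feasibility-problem-of-w}) is infeasible. So the plan is to show that, for $K>3$, problem (\ref{equ:feasibility-problem-of-w}) is infeasible exactly when $\operatorname{rank}(\boldsymbol{P}(\boldsymbol{\Omega}))=2$ and Algorithm \ref{alg:check-whether-Omega-in-Lambda} reports that problem (\ref{equ:feasibility-problem-of-w-Pw-in-N}) is infeasible. Since $\boldsymbol{P}(\boldsymbol{\Omega})\in\mathbb{Z}^{K\times 2}$, its rank is $0$, $1$ or $2$, and I would treat the rank-deficient cases first.

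For the rank-deficient cases I would exhibit explicit feasible points. If $\operatorname{rank}\le 1$, the null space of $\boldsymbol{P}(\boldsymbol{\Omega})$ contains a nonzero vector $\boldsymbol{w}_0$. Scaling it to $\boldsymbol{w}=t\boldsymbol{w}_0$ with $0<t\|\boldsymbol{w}_0\|<1$ gives $\boldsymbol{P}(\boldsymbol{\Omega})\boldsymbol{w}=\mathbf{0}\in\mathbb{Z}^{K\times 1}$ and $0<\|\boldsymbol{w}\|^2<1$. So (\ref{equ:feasibility-problem-of-w}) is feasible and $\boldsymbol{\Omega}\notin\boldsymbol{\Lambda}$. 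Rank $0$ cannot actually occur, because the $K$ selected antennas are distinct, but the argument covers it anyway. Hence every element of $\boldsymbol{\Lambda}$ has rank $2$.

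In the rank-$2$ case, the first step is to prove that (\ref{equ:feasibility-problem-of-w}) and (\ref{equ:feasibility-problem-of-w-Pw-in-N}) are equivalent.
\begin{itemize}
\item One direction is immediate, since $\mathcal{N}\subset\mathbb{Z}^{K\times 1}$.
\item For the other direction, take any feasible $\boldsymbol{w}$. Each row satisfies $|[\boldsymbol{P}(\boldsymbol{\Omega})]_{k,:}\boldsymbol{w}|\le \|[\boldsymbol{P}(\boldsymbol{\Omega})]_{k,:}\|\,\|\boldsymbol{w}\|$ by Cauchy--Schwarz.
\item The index differences obey $|x_k-x_1|\le M_x-1$ and $|z_k-z_1|\le M_z-1$, so the row norm is at most $\sqrt{(M_x-1)^2+(M_z-1)^2}$.
\item Because $\|\boldsymbol{w}\|<1$, the inequality is strict, so the integer vector $\boldsymbol{P}(\boldsymbol{\Omega})\boldsymbol{w}$ lies in $\mathcal{N}$.
\end{itemize}

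The second step is to show that Algorithm \ref{alg:check-whether-Omega-in-Lambda} decides (\ref{equ:feasibility-problem-of-w-Pw-in-N}) exactly. Since $\boldsymbol{P}(\boldsymbol{\Omega})$ has full column rank $2$, for each $\boldsymbol{n}\in\mathcal{N}$ the system $\boldsymbol{P}(\boldsymbol{\Omega})\boldsymbol{w}=\boldsymbol{n}$ has at most one solution. That solution, when it exists, is
\begin{equation*}
\boldsymbol{w}=(\boldsymbol{P}(\boldsymbol{\Omega})^T\boldsymbol{P}(\boldsymbol{\Omega}))^{-1}\boldsymbol{P}(\boldsymbol{\Omega})^T\boldsymbol{n},
\end{equation*}
and it exists precisely when this $\boldsymbol{w}$ reproduces $\boldsymbol{n}$. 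So step a) is a finite, exact check of one candidate per $\boldsymbol{n}$, followed by the norm test $0<\|\boldsymbol{w}\|^2<1$. Because $\mathcal{N}$ is finite, the algorithm terminates, and it declares infeasibility if and only if no $\boldsymbol{n}\in\mathcal{N}$ yields a valid $\boldsymbol{w}$, i.e.\ if and only if (\ref{equ:feasibility-problem-of-w-Pw-in-N}) is infeasible.

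Combining the two cases with Proposition \ref{prop:uniqueness} gives (\ref{equ:Lambda-K-greater-than-3}). There is no real obstacle here. The only point that needs care is the reduction from $\mathbb{Z}^{K\times 1}$ to $\mathcal{N}$: the bound must be strict so that it matches the definition of $\mathcal{N}$, and it must hold uniformly over all $\boldsymbol{\Omega}$, which follows from the array dimensions $M_x\times M_z$.
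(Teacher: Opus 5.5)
Your proposal is correct and follows essentially the same argument the paper gives in the text preceding the theorem. Rank-deficient $\boldsymbol{P}(\boldsymbol{\Omega})$ is excluded via an arbitrarily small null-space vector, the rank-$2$ case is reduced from $\mathbb{Z}^{K\times 1}$ to the finite set $\mathcal{N}$ by Cauchy--Schwarz, and the exhaustive search of Algorithm \ref{alg:check-whether-Omega-in-Lambda} then decides feasibility. Your observation that full column rank leaves exactly one candidate per $\boldsymbol{n}$ fills in a detail the paper leaves implicit; the only further nicety is that the strict bound for the zero first row needs $\sqrt{(M_x-1)^2+(M_z-1)^2}>0$, which $M>3$ guarantees.
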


\section{Antenna Selection And Permutation Strategy Design For Eve's Estimation Ambiguity} \label{sec:antenna-selection-strategy-design-ambiguity}

In this section, we first reveal that any antenna selection and permutation  strategy that satisfies Bob's estimation uniqueness requirement as in Definition \ref{def:Bob-uniqueness}, i.e., $\boldsymbol{\Omega} \in \boldsymbol{\Lambda}$, inherently satisfies Eve's estimation ambiguity requirement as in Definition \ref{def:Eve-ambiguity}. 
Then, we will show how to design the antenna selection and permutation strategy to confuse Eve's AOD estimation to a higher level. 

\subsection{Eve's Ambiguity Condition is Always Satisfied}

First, we analyze the property of all the antenna selection and permutation strategies in $\boldsymbol{\Lambda}$ that prevent Eve from knowing ${\theta}_{\mathrm{Eve}}$ and ${\phi}_{\mathrm{Eve}}$ as in Definition \ref{def:Eve-ambiguity} in the following proposition.

\begin{proposition} \label{prop:Eve-ambiguity}
    In the regime of $G \geq K$, i.e., the pilot sequence length is no smaller than the number of selected antennas, an antenna selection and permutation strategy $\boldsymbol{\Omega} \in \boldsymbol{\Lambda}$ can satisfy (\ref{equ:requirement-def2}) in Definition \ref{def:Eve-ambiguity} if and only if 
    given any ${u}_{\mathrm{Eve}}$ and ${v}_{\mathrm{Eve}}$ with $0 < u_{\mathrm{Eve}}^2+v_{\mathrm{Eve}}^2 < 1$, there always exists $\tilde{\boldsymbol{\Omega}} \in \boldsymbol{\Lambda}$, $\tilde{u}_{\mathrm{Eve}}$ and $\tilde{v}_{\mathrm{Eve}}$ with $0<\tilde{u}_{\mathrm{Eve}}^2 + \tilde{v}_{\mathrm{Eve}}^2 < 1$ and $(\tilde{u}_{\mathrm{Eve}}, \tilde{v}_{\mathrm{Eve}})\neq ({u}_{\mathrm{Eve}}, {v}_{\mathrm{Eve}})$ such that all the elements in the vector $\tilde{u}_{\mathrm{Eve}}  \tilde{\boldsymbol{x}}(\tilde{\boldsymbol{\Omega}})  + \tilde{v}_{\mathrm{Eve}} \tilde{\boldsymbol{z}}(\tilde{\boldsymbol{\Omega}})  - {u}_{\mathrm{Eve}} {\boldsymbol{x}}({\boldsymbol{\Omega}})  - {v}_{\mathrm{Eve}} {\boldsymbol{z}}({\boldsymbol{\Omega}}) $ are even integers.
\end{proposition}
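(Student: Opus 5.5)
The plan is to reduce condition (\ref{equ:requirement-def2}) to an equality of channel vectors, and then to an equality of phases. Since $G \geq K$, I will assume, as in the proof of Proposition \ref{prop:uniqueness}, that the pre-designed pilot matrix $\tilde{\boldsymbol{S}}$ has full column rank. Then $\tilde{\boldsymbol{S}}\boldsymbol{h}_1=\tilde{\boldsymbol{S}}\boldsymbol{h}_2$ holds if and only if $\boldsymbol{h}_1=\boldsymbol{h}_2$. Condition (\ref{equ:requirement-def2}) therefore becomes
\begin{equation*}
\tilde{\boldsymbol{h}}_{\mathrm{Eve}}(\tilde{\boldsymbol{\Omega}},\tilde{u}_{\mathrm{Eve}},\tilde{v}_{\mathrm{Eve}})=\tilde{\boldsymbol{h}}_{\mathrm{Eve}}(\boldsymbol{\Omega},u_{\mathrm{Eve}},v_{\mathrm{Eve}}).
\end{equation*}
Using the one-to-one map between $(\theta,\phi)$ and $(u,v)$ on the open unit disk minus the origin, I will restate everything in terms of $(u,v)$.

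Next I use the factorized form (\ref{equ:Bob-tilde-h-B-u-v})--(\ref{equ:Eve-tilde-h-B-u-v}). Each channel is a scalar times the vector with entries $e^{-j\pi([\boldsymbol{x}]_k u+[\boldsymbol{z}]_k v)}$. This scalar collects $\gamma_{\mathrm{Eve}}$ and the phase contributed by the first antenna. The first entry of that vector equals $1$ for both $\boldsymbol{\Omega}$ and $\tilde{\boldsymbol{\Omega}}$, because $\boldsymbol{x}$ and $\boldsymbol{z}$ start with $0$. Hence equality of the two channels forces the two scalar prefactors to be equal. The channel coefficient is unknown to Eve, so Eve can only match the channel up to this scalar.

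\emph{Sufficiency.} Suppose the phase vectors agree. I choose the alternative coefficient $\tilde{\gamma}$ to absorb the mismatch in the first-antenna phase. Then the channels are equal, and so are the received signals.

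\emph{Necessity.} Suppose the channels are equal. Dividing entrywise by the common first entry shows that the phase vectors agree.

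Equality of the phase vectors means that, for every $k$,
\begin{equation*}
e^{-j\pi([\tilde{\boldsymbol{x}}]_k\tilde{u}+[\tilde{\boldsymbol{z}}]_k\tilde{v})}=e^{-j\pi([\boldsymbol{x}]_k u+[\boldsymbol{z}]_k v)}.
\end{equation*}
By $2\pi$-periodicity of the complex exponential, this holds if and only if every entry of $\tilde{u}\tilde{\boldsymbol{x}}+\tilde{v}\tilde{\boldsymbol{z}}-u\boldsymbol{x}-v\boldsymbol{z}$ is an even integer. Combining this with the requirements $\tilde{\boldsymbol{\Omega}}\in\boldsymbol{\Lambda}$, $0<\tilde{u}^2+\tilde{v}^2<1$, and $(\tilde{u},\tilde{v})\neq(u,v)$ gives exactly the stated characterization in both directions.

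I expect the main obstacle to be the treatment of the unknown complex gain. The condition in Definition \ref{def:Eve-ambiguity} must be read with the gain absorbed into the unknowns; otherwise the first-antenna phase would impose an extra constraint involving $x_1,z_1$. I will therefore follow the same convention adopted in the proof of Proposition \ref{prop:uniqueness}, where $\gamma$ is treated as a free nuisance parameter. I will also state explicitly the full-column-rank assumption on $\tilde{\boldsymbol{S}}$ that is implied by the regime $G\geq K$.
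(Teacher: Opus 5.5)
Your proposal is correct and follows essentially the same route as the paper's proof: reduce (\ref{equ:requirement-def2}) to equality of the channel vectors using $G \ge K$, let a free gain $\tilde{\gamma}_{\mathrm{Eve}}$ absorb the first-antenna phase for sufficiency, and apply $2\pi$-periodicity entrywise to the normalized phase vectors for necessity. One minor wording point: $G \ge K$ does not by itself imply that $\tilde{\boldsymbol{S}}$ has full column rank (it only makes it possible), so this is an assumption on the pilot design---one the paper also makes tacitly when it asserts the equivalence of (\ref{equ:requirement-def2}) and (\ref{equ:Eve-ambiguity-receive-equal}).
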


\begin{IEEEproof}
See Appendix \ref{app:proof-theorem-Eve}.
\end{IEEEproof}

Intuitively, if Eve can find $\tilde{u}_{\mathrm{Eve}}$, $\tilde{v}_{\mathrm{Eve}}$, and $\tilde{\boldsymbol{\Omega}}$ such that all the elements of $\tilde{u}_{\mathrm{Eve}}  \tilde{\boldsymbol{x}}(\tilde{\boldsymbol{\Omega}})  + \tilde{v}_{\mathrm{Eve}} \tilde{\boldsymbol{z}}(\tilde{\boldsymbol{\Omega}})  - {u}_{\mathrm{Eve}} {\boldsymbol{x}}({\boldsymbol{\Omega}})  - {v}_{\mathrm{Eve}} {\boldsymbol{z}}({\boldsymbol{\Omega}}) $ are even integers, this new solution will add a phase difference of $2\pi$ to all the elements in the channel $\tilde{\boldsymbol{h}}_{\mathrm{Eve}}(\boldsymbol{\Omega}, u_{\mathrm{Eve}}, v_{\mathrm{Eve}})$ given in (\ref{equ:Eve-tilde-h-B-u-v}). This will lead to the same received signals at Eve and make Eve have ambiguity in AOD estimation. 
Notably, we show in the following that any antenna strategy $\boldsymbol{\Omega} \in \boldsymbol{\Lambda}$ necessarily induces such estimation ambiguity at Eve.

\begin{theorem} \label{thm:eight-ambiguity}
    Any antenna selection and permutation strategy $\boldsymbol{\Omega} \in \boldsymbol{\Lambda}$, where $\boldsymbol{\Lambda}$ is characterized by (\ref{equ:Lambda-K-equal-3}) when $K=3$ as in Theorem \ref{thm:equ-thm-uniqueness-K-equal-3} and by (\ref{equ:Lambda-K-greater-than-3}) when $K>3$ as in Theorem \ref{thm:K-greater-than-3}, can cause AOD estimation ambiguity to Eve as in Definition \ref{def:Eve-ambiguity}.    
    Specifically, given any AOD from Alice to Eve as well as antenna strategy satisfying the following conditions:
    \begin{enumerate}
        \item[a)] $u_{\mathrm{Eve}}$ and $v_{\mathrm{Eve}}$ satisfy $u_{\mathrm{Eve}} \neq 0$ and $v_{\mathrm{Eve}} \neq 0$;
        \item[b)] $u_{\mathrm{Eve}}$ and $v_{\mathrm{Eve}}$ satisfy $|u_{\mathrm{Eve}}| \neq |v_{\mathrm{Eve}}|$;
        \item[c)] $\boldsymbol{x}(\boldsymbol{\Omega})$ and $\boldsymbol{z}(\boldsymbol{\Omega})$ satisfy $\max\{\boldsymbol{x}(\boldsymbol{\Omega})\} - \min\{\boldsymbol{x}(\boldsymbol{\Omega})\} < M_z$ and $\max\{\boldsymbol{z}(\boldsymbol{\Omega})\} - \min\{\boldsymbol{z}(\boldsymbol{\Omega})\} < M_x$,
    \end{enumerate}
    Eve, with $\boldsymbol{\Omega}$ unknown, can estimate at least eight distinct AODs that lead to its received signal, which include:
    {\allowdisplaybreaks
    \begin{subequations} \label{eq:aod_8_all}
        \begin{align}
            (\tilde{\theta}_{\mathrm{Eve}}, \tilde{\phi}_{\mathrm{Eve}}) &= (\theta_{\mathrm{Eve}}, \phi_{\mathrm{Eve}}), \label{eq:aod_8_1} \\
            (\tilde{\theta}_{\mathrm{Eve}}, \tilde{\phi}_{\mathrm{Eve}}) &= (\theta_{\mathrm{Eve}}, \pi-\phi_{\mathrm{Eve}}), \label{eq:aod_8_2} \\
            (\tilde{\theta}_{\mathrm{Eve}}, \tilde{\phi}_{\mathrm{Eve}}) &= (-\theta_{\mathrm{Eve}}, \phi_{\mathrm{Eve}}), \label{eq:aod_8_3} \\
            (\tilde{\theta}_{\mathrm{Eve}}, \tilde{\phi}_{\mathrm{Eve}}) &= (-\theta_{\mathrm{Eve}}, \pi-\phi_{\mathrm{Eve}}), \label{eq:aod_8_4} \\
            (\tilde{\theta}_{\mathrm{Eve}}, \tilde{\phi}_{\mathrm{Eve}}) &= (\arcsin u_{\mathrm{Eve}}, \arccos\tfrac{v_{\mathrm{Eve}} }{ \sqrt{1-u_{\mathrm{Eve}}^2}}), \label{eq:aod_8_5} \\
            (\tilde{\theta}_{\mathrm{Eve}}, \tilde{\phi}_{\mathrm{Eve}}) &= (\arcsin u_{\mathrm{Eve}}, \pi - \arccos \tfrac{v_{\mathrm{Eve}} }{ \sqrt{1-u_{\mathrm{Eve}}^2}}), \label{eq:aod_8_6} \\
            (\tilde{\theta}_{\mathrm{Eve}}, \tilde{\phi}_{\mathrm{Eve}}) &= (-\arcsin u_{\mathrm{Eve}}, \arccos \tfrac{v_{\mathrm{Eve}} }{ \sqrt{1-u_{\mathrm{Eve}}^2}}), \label{eq:aod_8_7} \\
            (\tilde{\theta}_{\mathrm{Eve}}, \tilde{\phi}_{\mathrm{Eve}}) &= (-\arcsin u_{\mathrm{Eve}}, \pi - \arccos \tfrac{v_{\mathrm{Eve}}} { \sqrt{1-u_{\mathrm{Eve}}^2}}). \label{eq:aod_8_8}
        \end{align}
    \end{subequations}
    }%
    If condition c) is satisfied and $u_{\mathrm{Eve}} = 0$, $v_{\mathrm{Eve}} \neq 0$, Eve can estimate at least four distinct AODs as follows:
    \begin{subequations} \label{eq:aod_4u}
    \begin{align}
        (\tilde{\theta}_{\mathrm{Eve}}, \tilde{\phi}_{\mathrm{Eve}}) &= (\theta_{\mathrm{Eve}}, \pi/2), \label{eq:aod_4u_1} \\
        (\tilde{\theta}_{\mathrm{Eve}}, \tilde{\phi}_{\mathrm{Eve}}) &= (-\theta_{\mathrm{Eve}}, \pi/2), \label{eq:aod_4u_2} \\
        (\tilde{\theta}_{\mathrm{Eve}}, \tilde{\phi}_{\mathrm{Eve}}) &= (0, \arccos v_{\mathrm{Eve}}), \label{eq:aod_4u_3} \\
        (\tilde{\theta}_{\mathrm{Eve}}, \tilde{\phi}_{\mathrm{Eve}}) &= (0, \pi - \arccos v_{\mathrm{Eve}}). \label{eq:aod_4u_4}
    \end{align}
    \end{subequations}
    If condition c) is satisfied and $u_{\mathrm{Eve}} \neq 0$, $v_{\mathrm{Eve}} = 0$, Eve can estimate at least four distinct AODs as follows:
    \begin{subequations} \label{eq:aod_4v}
    \begin{align}
        (\tilde{\theta}_{\mathrm{Eve}}, \tilde{\phi}_{\mathrm{Eve}}) &= (0, \phi_{\mathrm{Eve}}), \label{eq:aod_4v_1} \\
        (\tilde{\theta}_{\mathrm{Eve}}, \tilde{\phi}_{\mathrm{Eve}}) &= (0, \pi - \phi_{\mathrm{Eve}}), \label{eq:aod_4v_2} \\
        (\tilde{\theta}_{\mathrm{Eve}}, \tilde{\phi}_{\mathrm{Eve}}) &= (\arcsin u_{\mathrm{Eve}}, \pi/2), \label{eq:aod_4v_3} \\
        (\tilde{\theta}_{\mathrm{Eve}}, \tilde{\phi}_{\mathrm{Eve}}) &= (-\arcsin u_{\mathrm{Eve}}, \pi/2). \label{eq:aod_4v_4}
    \end{align}
    \end{subequations}
    If condition c) is satisfied and $|u_{\mathrm{Eve}}| = |v_{\mathrm{Eve}}|$, the set of AODs given by (\ref{eq:aod_8_5}) to (\ref{eq:aod_8_8}) is equivalent to the set given by (\ref{eq:aod_8_1}) to (\ref{eq:aod_8_4}), and thus Eve can estimate at least four distinct AODs given by (\ref{eq:aod_8_1}) to (\ref{eq:aod_8_4}).
    Next, if condition c) is violated but condition a) is satisfied, no matter whether condition b) is satisfied, Eve can also estimate at least four distinct AODs given by (\ref{eq:aod_8_1}) to (\ref{eq:aod_8_4}). 
    Moreover, if condition c) is violated and $u_{\mathrm{Eve}} = 0$, $v_{\mathrm{Eve}} \neq 0$, Eve can estimate at least two distinct AODs given by (\ref{eq:aod_4u_1}) and (\ref{eq:aod_4u_2}).
    Finally, if condition c) is violated and $u_{\mathrm{Eve}} \neq 0$, $v_{\mathrm{Eve}} = 0$, Eve can estimate at least two distinct AODs given by (\ref{eq:aod_4v_1}) and (\ref{eq:aod_4v_2}).
\end{theorem}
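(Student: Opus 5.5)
Our plan is to reduce everything to Proposition \ref{prop:Eve-ambiguity}: for each listed alternative AOD we exhibit an explicit $\tilde{\boldsymbol{\Omega}}\in\boldsymbol{\Lambda}$ and $(\tilde u_{\mathrm{Eve}},\tilde v_{\mathrm{Eve}})$ such that $\tilde u_{\mathrm{Eve}}\boldsymbol{x}(\tilde{\boldsymbol{\Omega}})+\tilde v_{\mathrm{Eve}}\boldsymbol{z}(\tilde{\boldsymbol{\Omega}})=u_{\mathrm{Eve}}\boldsymbol{x}(\boldsymbol{\Omega})+v_{\mathrm{Eve}}\boldsymbol{z}(\boldsymbol{\Omega})$ exactly. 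The even-integer difference is then zero, and the received signals coincide up to the common phase factor, which is absorbed into $\gamma_{\mathrm{Eve}}$. The alternatives come from the symmetries of the UPA index grid. Reflection $x_k\mapsto M_x+1-x_k$ turns $\boldsymbol{x}(\boldsymbol{\Omega})$ into $-\boldsymbol{x}(\boldsymbol{\Omega})$, which is compensated by $\tilde u=-u$. Likewise, $z_k\mapsto M_z+1-z_k$ gives $\tilde v=-v$. The transpose $(x_k,z_k)\mapsto(z_k,x_k)$ swaps $\boldsymbol{x}$ and $\boldsymbol{z}$, which is compensated by $(\tilde u,\tilde v)=(v,u)$. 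Composing these gives the dihedral group of order eight acting on $(u,v)$, namely $(\pm u,\pm v)$ and $(\pm v,\pm u)$.

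Next we translate to angles. $(u,-v)$ corresponds to $(-\theta,\phi)$. $(-u,v)$ corresponds to $(\theta,\pi-\phi)$, since $\cos(\pi-\phi)=-\cos\phi$. $(v,u)$ gives $\tilde\theta=\arcsin u$ and $\tilde\phi=\arccos(v/\sqrt{1-u^2})$, which matches (\ref{eq:aod_8_5})--(\ref{eq:aod_8_8}) with the sign changes. All these pairs keep $0<\tilde u^2+\tilde v^2<1$. Distinctness of the eight points follows from conditions a) and b): the orbit of $(u,v)$ under the dihedral group has size eight exactly when $uv\neq0$ and $|u|\neq|v|$. When $u=0$ or $v=0$ the orbit has size four. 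When $|u|=|v|$ the swapped points coincide with the sign-flipped ones. This gives the four-element cases (\ref{eq:aod_4u}), (\ref{eq:aod_4v}) and the $|u|=|v|$ statement.

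Two things must be checked: that $\tilde{\boldsymbol{\Omega}}$ is a valid antenna strategy, and that it lies in $\boldsymbol{\Lambda}$. Reflections always map the $M_x\times M_z$ grid to itself. The transpose does not in general when $M_x\neq M_z$, and this is where condition c) enters. If the $x$-spread of $\boldsymbol{\Omega}$ is less than $M_z$ and the $z$-spread is less than $M_x$, then after swapping coordinates and translating, which leaves $\boldsymbol{x},\boldsymbol{z}$ unchanged, the transposed pattern fits in the grid. If c) fails, only the four reflections are available, which gives the remaining weaker cases: four AODs under a), or two when $u=0$ or $v=0$. For membership in $\boldsymbol{\Lambda}$, each map sends $\boldsymbol{P}(\boldsymbol{\Omega})$ to $\boldsymbol{P}(\boldsymbol{\Omega})\mathbf{T}$ with $\mathbf{T}\in\mathrm{GL}(2,\mathbb{Z})$ a signed permutation matrix. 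For $K=3$, $|\det|$ is preserved, so Theorem \ref{thm:equ-thm-uniqueness-K-equal-3} applies. For general $K$, we use characterization (\ref{equ:theorem1}) directly: $\boldsymbol{P}\mathbf{T}\boldsymbol{w}\in\mathbb{Z}^K$ with $0<\|\boldsymbol{w}\|<1$ if and only if $\boldsymbol{P}\boldsymbol{w}'\in\mathbb{Z}^K$ with $\boldsymbol{w}'=\mathbf{T}\boldsymbol{w}$. Since $\mathbf{T}$ is orthogonal, $\|\boldsymbol{w}'\|=\|\boldsymbol{w}\|$, so $\tilde{\boldsymbol{\Omega}}\in\boldsymbol{\Lambda}$.

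The main obstacle is the bookkeeping in the transpose step. We must confirm that condition c) is exactly what makes the transposed and translated index set fit inside $\{1,\dots,M_x\}\times\{1,\dots,M_z\}$, and that the translation does not alter $\boldsymbol{x}(\tilde{\boldsymbol{\Omega}}),\boldsymbol{z}(\tilde{\boldsymbol{\Omega}})$, which depend only on differences. We also need to verify the arccos/arcsin angle correspondences, including the sign conventions for $\tilde\phi\in(0,\pi)$.
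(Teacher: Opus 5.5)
Your proposal is correct and follows essentially the same route as the paper's proof. Both construct the eight strategies with $[\tilde{\boldsymbol{x}}(\tilde{\boldsymbol{\Omega}}),\tilde{\boldsymbol{z}}(\tilde{\boldsymbol{\Omega}})]$ equal to $[\pm\boldsymbol{x}(\boldsymbol{\Omega}),\pm\boldsymbol{z}(\boldsymbol{\Omega})]$ or $[\pm\boldsymbol{z}(\boldsymbol{\Omega}),\pm\boldsymbol{x}(\boldsymbol{\Omega})]$, pair them with the matching $(\tilde u_{\mathrm{Eve}},\tilde v_{\mathrm{Eve}})$ so that Proposition~\ref{prop:Eve-ambiguity} holds with a zero difference, use condition c) as the fit condition for the swapped patterns, and read off the collapse cases; your dihedral-orbit argument for distinctness and your orthogonality argument for $\tilde{\boldsymbol{\Omega}}\in\boldsymbol{\Lambda}$ just spell out steps the paper only asserts.
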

\begin{IEEEproof}
    See Appendix \ref{app:eight-ambiguity}.
\end{IEEEproof}

In the above, Conditions a) and b) simply require that Eve does not stand on specific planes $x=0$, $z=0$ and $x=\pm z$ in the whole space, while Condition c) constrains the maximum distance between selected antennas along one axis to the total length of the other axis, tricking Eve into interpreting the $x$-axis phase differences as the $z$-axis ones, and vice versa.
Moreover, the above eight AOD estimations correspond to eight symmetric propagation directions in the space, which are physically related through reflections with respect to the planes $x=0$, $z=0$ and $x=\pm z$. Regardless of how many distinct AODs remain from these eight estimations, the resulting AOD solutions constitute a \textit{basic solution set} that can always confuse Eve as in Definition \ref{def:Eve-ambiguity}. 

\subsection{Enhanced Ambiguity} \label{subsec:enhanced-ambiguity}

We have shown that given any antenna selection and permutation strategy $\boldsymbol{\Omega} \in \boldsymbol{\Lambda}$, Eve gets confused among two, four, or eight AOD solutions from the basic solution set. Nevertheless, a larger number of AOD estimations at Eve is highly desirable. For example, more estimations impose higher cost for Eve to launch targeted attacks against all plausible candidate directions. 
Because many antenna selection and permutation strategies in $\boldsymbol{\Lambda}$ achieve the above goal, we are interested in the following enhanced privacy problem. Can we find a set $\tilde{\boldsymbol{\Lambda}} \subset \boldsymbol{\Lambda}$ such that even if this set is known by Eve, as long as Alice's strategy $\boldsymbol{\Omega}$ is limited to the set $\tilde{\boldsymbol{\Lambda}}$, Eve will always be confused by more AOD solutions beyond the basic solution set with two, four, or eight AODs as shown in Theorem 3?

To this end, we first note that each AOD solution within the basic solution set is paired with a corresponding antenna strategy to yield the same received signal at Eve. For example, AOD in (\ref{eq:aod_8_1}) obviously corresponds to the true strategy $\boldsymbol{\Omega}$. These strategies constitute a \textit{basic strategy group}, which must be included in $\tilde{\boldsymbol{\Lambda}}$ to ensure that Eve is at least confused by the basic solution set. 
Next, suppose Eve can estimate an AOD outside the basic solution set that also leads to its received signal, under a specific strategy in $\tilde{\boldsymbol{\Lambda}}$. Then, it is guaranteed that this additional AOD estimation can also expand into an \textit{enhanced solution set} with two, four, or eight distinct AODs, analogous to the basic solution set shown in Theorem \ref{thm:eight-ambiguity}. Eve will also be confused by this enhanced solution set, as long as its corresponding \textit{enhanced strategy group} is included in $\tilde{\boldsymbol{\Lambda}}$. 
Therefore, a higher privacy level can be achieved by including the basic strategy group and several enhanced strategy groups in $\tilde{\boldsymbol{\Lambda}}$, such that Eve will always be confused by multiple AOD solution sets, including the basic one and some enhanced ones. 
In the following, we define more precisely how $\tilde{\boldsymbol{\Lambda}}$ achieves a higher level of privacy.

\begin{definition}[Enhanced Ambiguity in Eve's AOD Estimation] \label{def:Q-level-privacy}
    A set of antenna selection and permutation strategies $\tilde{\boldsymbol{\Lambda}}$ achieves $Q$-level privacy if given any $\boldsymbol{\Omega} \in \tilde{\boldsymbol{\Lambda}}$ and any AOD from Alice to Eve, $\tilde{\boldsymbol{\Lambda}}$ contains at least $Q$ strategy groups (one basic including $\boldsymbol{\Omega}$ and at least $Q-1$ enhanced), which, together with their corresponding AOD solution sets, all yield the same received signal at Eve.
\end{definition}

While Definition \ref{def:Q-level-privacy} provides the requirement of $\tilde{\boldsymbol{\Lambda}}$ to achieve $Q$-level privacy, constructing such a set remains highly challenging. As with Definition \ref{def:Eve-ambiguity}, this requirement must be satisfied for every possible AOD from Alice to Eve. However, the AOD varies continuously, making exhaustive verification intractable.
To tackle the above challenge, we reveal a hidden mechanism behind the existence of multiple AOD estimations at Eve, which is embedded in Proposition \ref{prop:Eve-ambiguity}, but not fully recognized in the previous analysis.
Specifically, Proposition \ref{prop:Eve-ambiguity} provides a condition on $K$-dimensional vectors $u_{\mathrm{Eve}} {\boldsymbol{x}}({\boldsymbol{\Omega}}) + v_{\mathrm{Eve}} {\boldsymbol{z}}({\boldsymbol{\Omega}})$ and $\tilde{u}_{\mathrm{Eve}} \tilde{\boldsymbol{x}}(\tilde{\boldsymbol{\Omega}}) + \tilde{v}_{\mathrm{Eve}} \tilde{\boldsymbol{z}}(\tilde{\boldsymbol{\Omega}})$ to satisfy Definition \ref{def:Eve-ambiguity}. An interesting observation is that these $K$-dimensional vectors actually reside in two-dimensional (2D) subspaces of $\mathbb{R}^K$, which are respectively spanned by spatial bases $\{\boldsymbol{x}(\boldsymbol{\Omega}), \boldsymbol{z}(\boldsymbol{\Omega})\}$ and $\{\tilde{\boldsymbol{x}}(\tilde{\boldsymbol{\Omega}}), \tilde{\boldsymbol{z}}(\tilde{\boldsymbol{\Omega}})\}$, while their continuous variation within these subspaces is dictated by 2D angular parameters $(u_{\mathrm{Eve}}, v_{\mathrm{Eve}})$ and $(\tilde{u}_{\mathrm{Eve}}, \tilde{v}_{\mathrm{Eve}})$, respectively.

Intuitively, for a vector $u_{\mathrm{Eve}} {\boldsymbol{x}}({\boldsymbol{\Omega}}) + v_{\mathrm{Eve}} {\boldsymbol{z}}({\boldsymbol{\Omega}})$ in one 2D subspace, it is not easy to find a vector $\tilde{u}_{\mathrm{Eve}} \tilde{\boldsymbol{x}}(\tilde{\boldsymbol{\Omega}}) + \tilde{v}_{\mathrm{Eve}} \tilde{\boldsymbol{z}}(\tilde{\boldsymbol{\Omega}})$ in another 2D subspace such that $\tilde{u}_{\mathrm{Eve}} \tilde{\boldsymbol{x}}(\tilde{\boldsymbol{\Omega}}) + \tilde{v}_{\mathrm{Eve}} \tilde{\boldsymbol{z}}(\tilde{\boldsymbol{\Omega}}) - (u_{\mathrm{Eve}} {\boldsymbol{x}}({\boldsymbol{\Omega}}) + v_{\mathrm{Eve}} {\boldsymbol{z}}({\boldsymbol{\Omega}})) \in 2\mathbb{Z}^K
$ as in Proposition \ref{prop:Eve-ambiguity}, especially under the continuous variation of $(u_{\mathrm{Eve}}, v_{\mathrm{Eve}})$. 
However, if these two vectors can be made to lie in the same 2D subspace, the original $K$-dimensional matching condition will simplify to a 2D mapping between $(u_{\mathrm{Eve}},v_{\mathrm{Eve}})$ and $(\tilde{u}_{\mathrm{Eve}},\tilde{v}_{\mathrm{Eve}})$. To hold for all $(u_{\mathrm{Eve}},v_{\mathrm{Eve}})$, this pointwise condition naturally becomes a 2D region-covering requirement. Excitingly, based on this insight, we manage to find a region-covering structure in the 2D plane that is sufficient to construct a set $\tilde{\boldsymbol{\Lambda}}$ satisfying Definition \ref{def:Q-level-privacy}, as shown in the following theorem.

\begin{theorem}\label{thm:Q-level-construction}
Suppose there exists a set $\mathcal{T} = \{\boldsymbol{T}_1, \dots, \boldsymbol{T}_P\}$ with $\boldsymbol{T}_1 = \boldsymbol{I}_2$ and $\boldsymbol{T}_p \in \mathrm{GL}(2, \mathbb{Z})$, $p = 2, \dots, P$, such that 
\begin{equation} 
\boldsymbol{T}_{p_1}^{-1}\boldsymbol{T}_{p_2} \notin \boldsymbol{\Sigma}, \ \forall 1\le p_1 < p_2 \le P, 
\label{equ:thm-square-non-overlap}
\end{equation}
and the following region-covering condition holds:
\begin{equation} \label{equ:thm-square-covering}
    \sum_{\boldsymbol{T}_p \in \mathcal{T}} \mathbbold{1}_{\mathcal{E}_0(\boldsymbol{T}_p) + 2\mathbb{Z}^2}(\boldsymbol{w}) \geq Q, \  \forall \boldsymbol{w} \in [-1, 1]^2 \setminus \left(\mathbb{Z}^2 \setminus 2\mathbb{Z}^2\right),
\end{equation}
where $\boldsymbol{\Sigma} = \{\boldsymbol{\Pi} \in \mathbb{Z}^{2 \times 2} \mid \boldsymbol{\Pi}\boldsymbol{\Pi}^T = \boldsymbol{I}_2\}$ denotes the set of all $2\times2$ signed permutation matrices, $\mathbbold{1}_{\mathcal{I}}$ is the indicator function for set $\mathcal{I}$, and $\mathcal{E}_0(\boldsymbol{T}_p) = \{\boldsymbol{w} \in \mathbb{R}^2 \mid \|\boldsymbol{T}_p^{-1}\boldsymbol{w}\|_2 < 1\}$ denotes the basic elliptic region associated with $\boldsymbol{T}_p$.
Then, define $\mathcal{R}$ as the set of all strategies $\boldsymbol{\Omega} \in \boldsymbol{\Lambda}$ such that for every $\boldsymbol{T}_p \in \mathcal{T}$, at least one of the following two constraint sets is satisfied:
\begin{equation} \label{equ:thm-physical-bound}
    \left\{\begin{array}{l}
    \max\{\boldsymbol{P}(\boldsymbol{\Omega})[\boldsymbol{T}_p]_{:,1}\} - \min\{\boldsymbol{P}(\boldsymbol{\Omega})[\boldsymbol{T}_p]_{:,1}\} < M_x, \\
    \max\{\boldsymbol{P}(\boldsymbol{\Omega})[\boldsymbol{T}_p]_{:,2}\} - \min\{\boldsymbol{P}(\boldsymbol{\Omega})[\boldsymbol{T}_p]_{:,2}\} < M_z.
    \end{array}\right.
\end{equation}
\begin{equation} \label{equ:thm-physical-bound-2}
    \left\{\begin{array}{l}
    \max\{\boldsymbol{P}(\boldsymbol{\Omega})[\boldsymbol{T}_p]_{:,1}\} - \min\{\boldsymbol{P}(\boldsymbol{\Omega})[\boldsymbol{T}_p]_{:,1}\} < M_z, \\
    \max\{\boldsymbol{P}(\boldsymbol{\Omega})[\boldsymbol{T}_p]_{:,2}\} - \min\{\boldsymbol{P}(\boldsymbol{\Omega})[\boldsymbol{T}_p]_{:,2}\} < M_x.
    \end{array}\right.
\end{equation}
where $[\boldsymbol{T}_p]_{:,1}$ and $[\boldsymbol{T}_p]_{:,2}$ denote the first and second columns of $\boldsymbol{T}_p$, respectively.
Finally, a strategy set satisfying Definition \ref{def:Q-level-privacy} can be constructed as
\begin{equation} \label{equ:thm-Lambda-construction}
\tilde{\boldsymbol{\Lambda}} = \bigcup_{\boldsymbol{\Omega} \in \mathcal{R}} \bigcup_{\boldsymbol{T}_p \in \mathcal{T}} \left\{ \tilde{\boldsymbol{\Omega}} \in \boldsymbol{\Lambda} \mid \tilde{\boldsymbol{P}}(\tilde{\boldsymbol{\Omega}}) = \boldsymbol{P}(\boldsymbol{\Omega})\boldsymbol{T}_p\boldsymbol{\Pi}, \ \boldsymbol{\Pi} \in \boldsymbol{\Sigma} \right\},
\end{equation}
where $\tilde{\boldsymbol{P}}(\tilde{\boldsymbol{\Omega}}) = [\tilde{\boldsymbol{x}}(\tilde{\boldsymbol{\Omega}}), \tilde{\boldsymbol{z}}(\tilde{\boldsymbol{\Omega}})] \in \mathbb{Z}^{K \times 2}$.
\end{theorem}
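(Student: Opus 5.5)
We fix an arbitrary $\boldsymbol{\Omega}' \in \tilde{\boldsymbol{\Lambda}}$ and an arbitrary AOD $\boldsymbol{w}_{\mathrm{Eve}} = [u_{\mathrm{Eve}}, v_{\mathrm{Eve}}]^T$ with $0<\|\boldsymbol{w}_{\mathrm{Eve}}\|<1$. By (\ref{equ:thm-Lambda-construction}), $\boldsymbol{P}(\boldsymbol{\Omega}') = \boldsymbol{P}(\boldsymbol{\Omega})\boldsymbol{T}_{p_0}\boldsymbol{\Pi}_0$ for some $\boldsymbol{\Omega}\in\mathcal{R}$, $\boldsymbol{T}_{p_0}\in\mathcal{T}$ and $\boldsymbol{\Pi}_0\in\boldsymbol{\Sigma}$. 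The plan is to reduce everything to the single ``root'' $\boldsymbol{P}(\boldsymbol{\Omega})$ and then use Proposition \ref{prop:Eve-ambiguity}. That proposition says Eve is confused between $(\boldsymbol{\Omega}',\boldsymbol{w}_{\mathrm{Eve}})$ and $(\tilde{\boldsymbol{\Omega}},\tilde{\boldsymbol{w}})$ whenever
\begin{equation*}
\boldsymbol{P}(\tilde{\boldsymbol{\Omega}})\tilde{\boldsymbol{w}}-\boldsymbol{P}(\boldsymbol{\Omega}')\boldsymbol{w}_{\mathrm{Eve}}\in 2\mathbb{Z}^K .
\end{equation*}
Write $\boldsymbol{y} = \boldsymbol{T}_{p_0}\boldsymbol{\Pi}_0\boldsymbol{w}_{\mathrm{Eve}}$, so that $\boldsymbol{P}(\boldsymbol{\Omega}')\boldsymbol{w}_{\mathrm{Eve}} = \boldsymbol{P}(\boldsymbol{\Omega})\boldsymbol{y}$. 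Any candidate $\tilde{\boldsymbol{\Omega}}$ with $\boldsymbol{P}(\tilde{\boldsymbol{\Omega}}) = \boldsymbol{P}(\boldsymbol{\Omega})\boldsymbol{T}_p\boldsymbol{\Pi}$ then satisfies the condition if there is an integer vector $\boldsymbol{k}\in\mathbb{Z}^2$ such that
\begin{equation*}
\boldsymbol{T}_p\boldsymbol{\Pi}\tilde{\boldsymbol{w}} = \boldsymbol{y} + 2\boldsymbol{k},
\end{equation*}
because then the difference equals $2\boldsymbol{P}(\boldsymbol{\Omega})\boldsymbol{k}\in 2\mathbb{Z}^K$. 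This is the promised reduction of the $K$-dimensional matching condition to a 2D one.

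\emph{Step 1 (finding $Q$ groups).} We reduce $\boldsymbol{y}$ modulo $2\mathbb{Z}^2$ into $[-1,1]^2$, obtaining $\boldsymbol{y}_0$. We must check that $\boldsymbol{y}_0$ is not an odd lattice point, i.e.\ $\boldsymbol{y}_0\notin\mathbb{Z}^2\setminus 2\mathbb{Z}^2$. Suppose instead $\boldsymbol{y}\in\mathbb{Z}^2$. Then $\boldsymbol{P}(\boldsymbol{\Omega}')\boldsymbol{w}_{\mathrm{Eve}}\in\mathbb{Z}^K$ with $0<\|\boldsymbol{w}_{\mathrm{Eve}}\|<1$, which contradicts $\boldsymbol{\Omega}'\in\boldsymbol{\Lambda}$ by Proposition \ref{prop:uniqueness}. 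So $\boldsymbol{y}\notin\mathbb{Z}^2$, and a fortiori it is not in the excluded set. The covering condition (\ref{equ:thm-square-covering}) then gives at least $Q$ indices $p$ with $\boldsymbol{y}_0 \in \mathcal{E}_0(\boldsymbol{T}_p)+2\mathbb{Z}^2$. For each such $p$ we define
\begin{equation*}
\tilde{\boldsymbol{w}}_p = \boldsymbol{T}_p^{-1}(\boldsymbol{y}+2\boldsymbol{k}_p),
\end{equation*}
where $\boldsymbol{k}_p$ is chosen so that $\|\tilde{\boldsymbol{w}}_p\|<1$, and we take $\boldsymbol{\Pi}=\boldsymbol{I}_2$. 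If $\tilde{\boldsymbol{w}}_p=\boldsymbol{0}$, then $\boldsymbol{y}\in 2\mathbb{Z}^2$, which is already excluded. So each $\tilde{\boldsymbol{w}}_p$ is a valid AOD.

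\emph{Step 2 (membership).} We need $\tilde{\boldsymbol{\Omega}}_p$ with $\boldsymbol{P}(\tilde{\boldsymbol{\Omega}}_p)=\boldsymbol{P}(\boldsymbol{\Omega})\boldsymbol{T}_p\boldsymbol{\Pi}$ to actually exist, i.e.\ to be physically realizable on the $M_x\times M_z$ array, and to lie in $\boldsymbol{\Lambda}$. We also need the signed-permutation variants to exist, since they generate the 2/4/8 solution set as in Theorem \ref{thm:eight-ambiguity}. The bounds (\ref{equ:thm-physical-bound}) or (\ref{equ:thm-physical-bound-2}) are exactly the requirement that the column ranges of $\boldsymbol{P}(\boldsymbol{\Omega})\boldsymbol{T}_p$ fit in the array, possibly with the axes swapped (the swap is what a $\boldsymbol{\Pi}$ that exchanges columns accomplishes). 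Given the range condition, we realize the strategy by a translation: choose $x_1,z_1$ so that all indices fall inside $[1,M_x]\times[1,M_z]$. Sign flips of a column preserve the range, so they remain realizable.

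Membership in $\boldsymbol{\Lambda}$ follows from unimodularity. Since $\boldsymbol{T}_p\boldsymbol{\Pi}\in\mathrm{GL}(2,\mathbb{Z})$ and $\boldsymbol{P}(\boldsymbol{\Omega})$ has rank 2, an integer solution of $\boldsymbol{P}(\boldsymbol{\Omega})\boldsymbol{T}_p\boldsymbol{\Pi}\boldsymbol{w}\in\mathbb{Z}^K$ would give $\boldsymbol{T}_p\boldsymbol{\Pi}\boldsymbol{w}$ satisfying $\boldsymbol{P}(\boldsymbol{\Omega})(\cdot)\in\mathbb{Z}^K$. This does not automatically transfer the norm bound, however: $\boldsymbol{T}_p$ can stretch vectors. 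I will therefore lean on the fact that $\tilde{\boldsymbol{\Lambda}}$ is defined by intersecting with $\boldsymbol{\Lambda}$ in (\ref{equ:thm-Lambda-construction}). So I must argue that the needed realizable $\tilde{\boldsymbol{\Omega}}_p$ is indeed in $\boldsymbol{\Lambda}$, or else refine the statement to assume it. I would first try to check whether the covering/realizability construction already forces this; otherwise I would simply note that the intended construction requires these images to be in $\boldsymbol{\Lambda}$.

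\emph{Step 3 (distinctness of the groups).} The $Q$ groups must be genuinely different strategy groups. Groups indexed by $p_1\ne p_2$ coincide only if
\begin{equation*}
\boldsymbol{P}(\boldsymbol{\Omega})\boldsymbol{T}_{p_1}\boldsymbol{\Pi}_1=\boldsymbol{P}(\boldsymbol{\Omega})\boldsymbol{T}_{p_2}\boldsymbol{\Pi}_2 .
\end{equation*}
Because $\boldsymbol{P}(\boldsymbol{\Omega})$ has full column rank, this forces $\boldsymbol{T}_{p_1}^{-1}\boldsymbol{T}_{p_2}\in\boldsymbol{\Sigma}$, which is forbidden by (\ref{equ:thm-square-non-overlap}). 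The group containing $\boldsymbol{T}_{p_0}$ is the basic one, which contains $\boldsymbol{\Omega}'$ itself. The main obstacle I expect is Step 2, specifically guaranteeing that the transformed strategies are both realizable and in $\boldsymbol{\Lambda}$ for every $\boldsymbol{\Omega}'$ in the union. This is delicate because $\boldsymbol{\Omega}'$ may arise from a non-root $\boldsymbol{T}_{p_0}$, and one must make sure the re-rooting at $\boldsymbol{P}(\boldsymbol{\Omega})$ is used rather than at $\boldsymbol{P}(\boldsymbol{\Omega}')$.
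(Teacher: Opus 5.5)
You follow the paper's own proof almost step for step: re-rooting at an element of $\mathcal{R}$, mapping the AOD by $\boldsymbol{T}_0\boldsymbol{\Pi}_0$, applying the $2\mathbb{Z}^2$-periodic covering, defining $\tilde{\boldsymbol{w}}$ so that the difference is $\boldsymbol{P}(\boldsymbol{\Omega}_0)\boldsymbol{n}_q\in 2\mathbb{Z}^K$, and getting distinctness from (\ref{equ:thm-square-non-overlap}). Your check that $\boldsymbol{y}\notin\mathbb{Z}^2$ usefully supplies a point the paper leaves implicit.

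\textbf{The gap is in Step 2.} You do not show that the transformed strategies $\tilde{\boldsymbol{\Omega}}_p$ lie in $\boldsymbol{\Lambda}$. Your fallback of leaning on the intersection with $\boldsymbol{\Lambda}$ in (\ref{equ:thm-Lambda-construction}) works against you. If some $\tilde{\boldsymbol{\Omega}}_p\notin\boldsymbol{\Lambda}$, it is simply excluded from $\tilde{\boldsymbol{\Lambda}}$, so that group does not count toward $Q$. Proposition \ref{prop:Eve-ambiguity} also would not apply to it, since it requires $\tilde{\boldsymbol{\Omega}}\in\boldsymbol{\Lambda}$. Adding the membership as a hypothesis would change the theorem.

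\textbf{The missing idea.} Membership in $\boldsymbol{\Lambda}$ is an integrality condition rather than a norm condition, so it is invariant under $\mathrm{GL}(2,\mathbb{Z})$. Precisely, $\boldsymbol{\Omega}\in\boldsymbol{\Lambda}$ if and only if $\boldsymbol{P}(\boldsymbol{\Omega})\boldsymbol{w}\in\mathbb{Z}^K$ implies $\boldsymbol{w}\in\mathbb{Z}^2$.
\begin{itemize}
\item Suppose $\boldsymbol{P}(\boldsymbol{\Omega})\boldsymbol{w}\in\mathbb{Z}^K$ with $\boldsymbol{w}\notin\mathbb{Z}^2$. Subtract the componentwise-rounded integer vector $\boldsymbol{m}$. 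Then $\boldsymbol{w}-\boldsymbol{m}\in[-1/2,1/2]^2\setminus\{\boldsymbol{0}\}$ has norm at most $\sqrt{2}/2<1$. Because $\boldsymbol{P}(\boldsymbol{\Omega})$ is an integer matrix, $\boldsymbol{P}(\boldsymbol{\Omega})(\boldsymbol{w}-\boldsymbol{m})$ is still in $\mathbb{Z}^K$, which contradicts (\ref{equ:theorem1}).
\item Conversely, every nonzero integer vector has norm at least $1$.
\end{itemize}
Now apply this to your construction. If $\boldsymbol{P}(\boldsymbol{\Omega})\boldsymbol{T}_p\boldsymbol{\Pi}\tilde{\boldsymbol{w}}\in\mathbb{Z}^K$, then $\boldsymbol{T}_p\boldsymbol{\Pi}\tilde{\boldsymbol{w}}\in\mathbb{Z}^2$. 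Since $(\boldsymbol{T}_p\boldsymbol{\Pi})^{-1}$ is an integer matrix, $\tilde{\boldsymbol{w}}\in\mathbb{Z}^2$, so either $\tilde{\boldsymbol{w}}=\boldsymbol{0}$ or $\|\tilde{\boldsymbol{w}}\|_2\ge 1$. The stretching by $\boldsymbol{T}_p$ that worried you is irrelevant, because integrality transfers, not the norm bound. This is the content behind the paper's one-line claim that $\tilde{\boldsymbol{\Omega}}_q\in\boldsymbol{\Lambda}$ follows from $\boldsymbol{T}_q\in\mathrm{GL}(2,\mathbb{Z})$ and $\boldsymbol{\Omega}_0\in\boldsymbol{\Lambda}$.

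\textbf{A minor fix in Step 1.} You take $\boldsymbol{\Pi}=\boldsymbol{I}_2$. When only (\ref{equ:thm-physical-bound-2}) holds for $\boldsymbol{T}_p$, the realizable strategy needs a column swap $\boldsymbol{\Pi}_p$. So define $\tilde{\boldsymbol{w}}_p=\boldsymbol{\Pi}_p^{-1}\boldsymbol{T}_p^{-1}(\boldsymbol{y}+2\boldsymbol{k}_p)$, as the paper does. Orthogonality of $\boldsymbol{\Pi}_p$ keeps the norm below one, and the difference is still $2\boldsymbol{P}(\boldsymbol{\Omega})\boldsymbol{k}_p$.
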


\begin{IEEEproof}
    See Appendix \ref{app:Q-level-construction}.
\end{IEEEproof}

Theorem \ref{thm:Q-level-construction} provides a constructive way to realize the $Q$-level privacy requirement in Definition \ref{def:Q-level-privacy}. Each transformation matrix $\boldsymbol{T}_p \in \mathcal{T}$ generates one physically realizable strategy group, while the signed permutations $\boldsymbol{\Pi} \in \boldsymbol{\Sigma}$ enumerate all valid strategies within that group. Then, $\tilde{\boldsymbol{\Lambda}}$ is constructed in \eqref{equ:thm-Lambda-construction} to simultaneously include all such $P$ groups.
The key idea behind this construction is to distribute the responsibility of confusing Eve among $P$ strategy groups associated with a specific $\boldsymbol{\Omega} \in \tilde{\boldsymbol{\Lambda}}$. Each strategy group is effective in providing ambiguity only over a specific region of the continuous AOD domain, rather than all possible true AODs. Meanwhile, the region-covering condition (\ref{equ:thm-square-covering}) guarantees that every possible AOD is covered by at least $Q$ such regions, so that Eve will always face ambiguity from at least $Q$ strategy groups.

All that remains to construct such $\tilde{\boldsymbol{\Lambda}}$ is to find a set $\mathcal{T}$ that satisfies \eqref{equ:thm-square-covering}, which can be efficiently solved through a combinatorial search over the discrete space $\mathrm{GL}(2,\mathbb{Z})$. Fig. \ref{fig:examples_coverage_requirement_T} shows an example solution of $\mathcal{T}$ that satisfies (\ref{equ:thm-square-covering}) for a target covering multiplicity of $Q = 2$. It is observed in Fig. \ref{fig:examples_coverage_requirement_T}\subref{fig:example1_Q2_coverage} that the covering multiplicity by all the corresponding basic elliptic regions and their translations by $2\mathbb{Z}^2$ is consistently at least two within the region $[-1,1]^2 \backslash\left(\mathbb{Z}^2 \backslash 2 \mathbb{Z}^2\right)$. 
\begin{figure}[t]
\centering
\subfloat[]{
    \includegraphics[trim=0.45cm 0cm 1cm 1.2cm, clip, width = 0.3\linewidth]{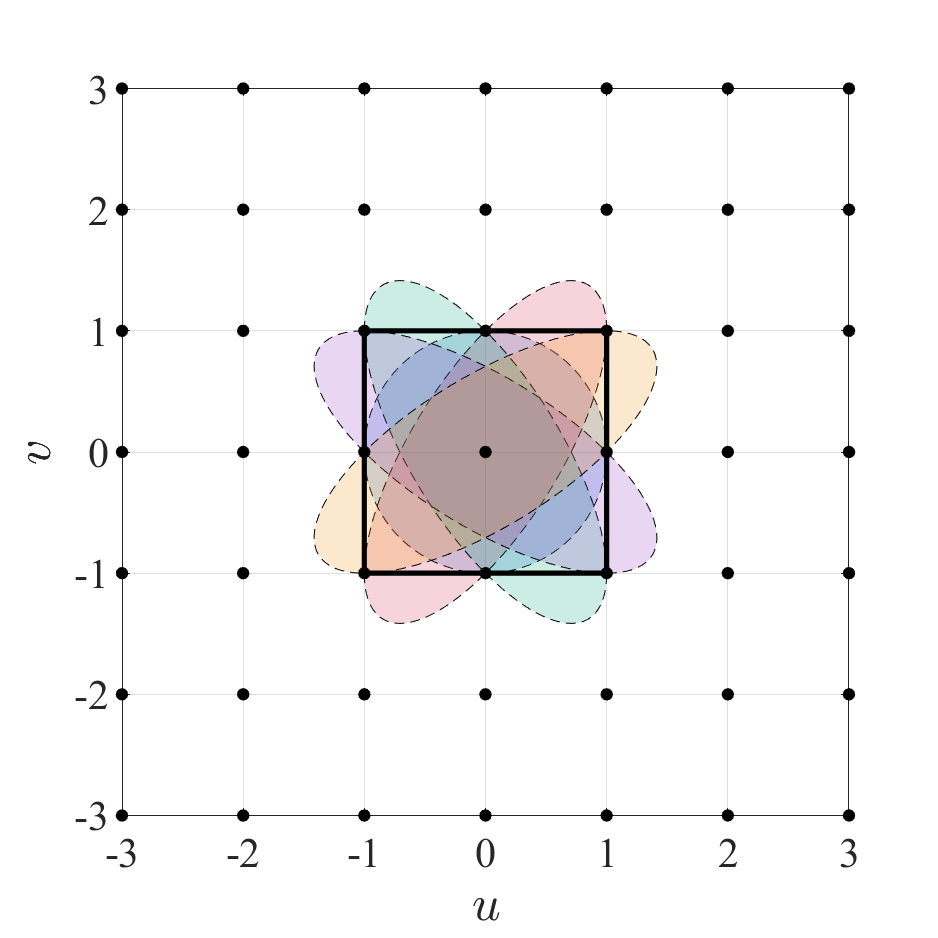}
    \label{fig:example1_Q2_pattern_k0}
}
\subfloat[]{
    \includegraphics[trim=0.45cm 0cm 1cm 1.2cm, clip, width = 0.3\linewidth]{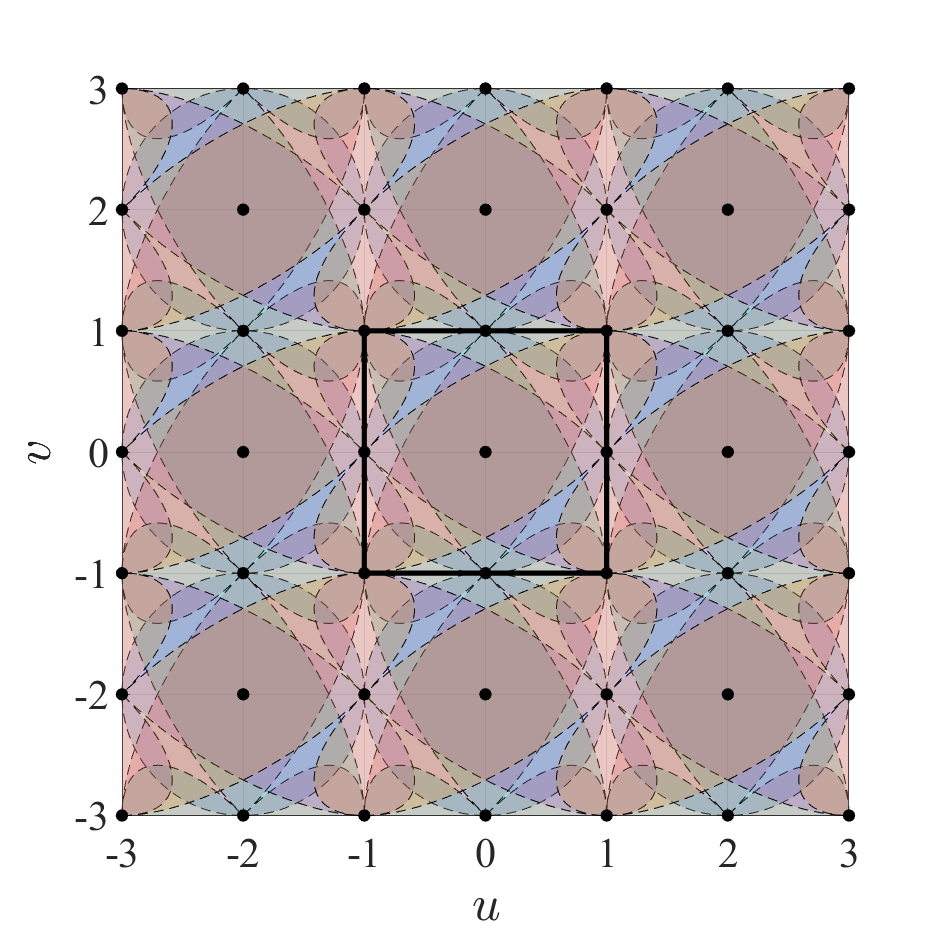}
    \label{fig:example1_Q2_pattern}
}
\subfloat[]{
    \includegraphics[trim=0.35cm 1.3cm 0.7cm 2.2cm, clip, width = 0.365\linewidth]{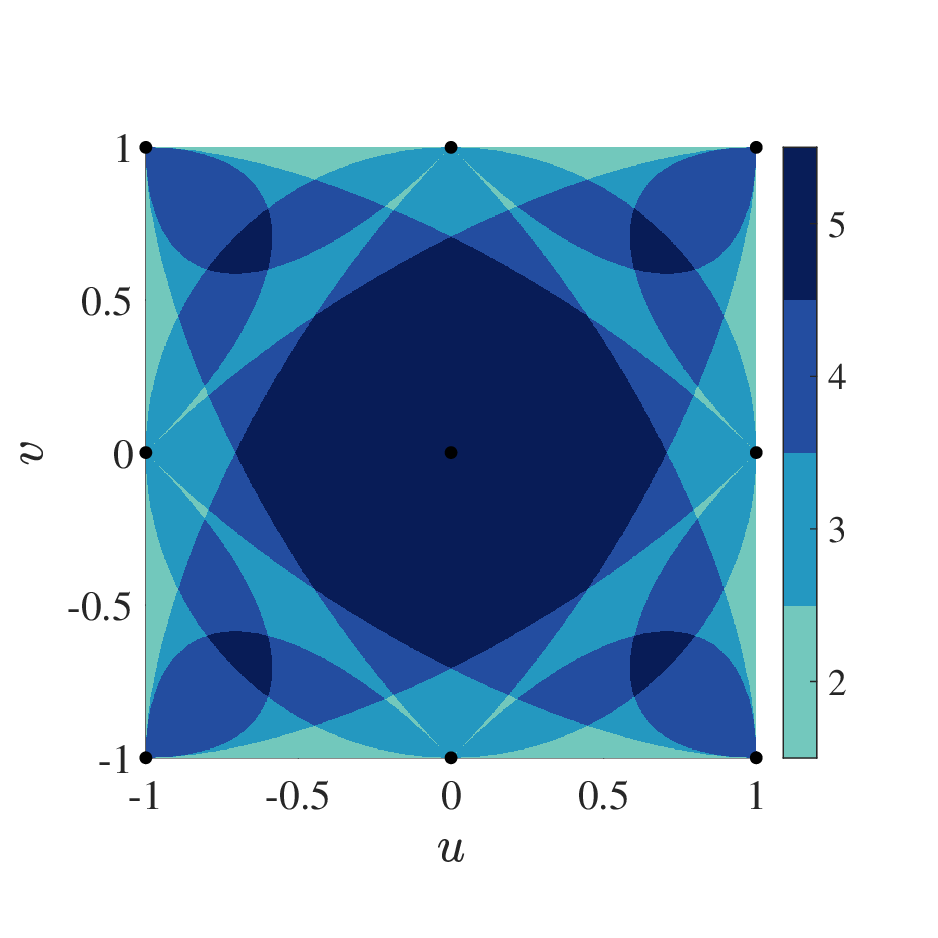}
    \label{fig:example1_Q2_coverage}
}
\caption{Illustration of a solution set that satisfies \eqref{equ:thm-square-covering} for $Q = 2$: $\mathcal{T} = \big\{ \big[\begin{smallmatrix} 1 & 0 \\ 0 & 1 \end{smallmatrix}\big], \big[\begin{smallmatrix} -1 & 0 \\ -1 & -1 \end{smallmatrix}\big], \big[\begin{smallmatrix} 0 & 1 \\ -1 & -1 \end{smallmatrix}\big], \big[\begin{smallmatrix} 1 & 1 \\ -1 & 0 \end{smallmatrix}\big], \big[\begin{smallmatrix} -1 & 1 \\ -1 & 0 \end{smallmatrix}\big] \big\}$. (a) The corresponding basic elliptic regions $\mathcal{E}_0(\boldsymbol{T}_p)$, (b) Their translations by $2\mathbb{Z}^2$, (c) The coverage multiplicity map of $[-1,1]^2 \backslash (\mathbb{Z}^2 \backslash 2 \mathbb{Z}^2)$. The region $[-1,1]^2$ is framed by a solid black box in (a) and (b). 
}
\label{fig:examples_coverage_requirement_T}
\end{figure}

\begin{remark}
Note that $\tilde{\boldsymbol{\Lambda}}$ in (\ref{equ:thm-Lambda-construction}) becomes empty if $\mathcal{R}=\emptyset$. This may occur when the target $Q$ is set too high for a given UPA size $M_x \times M_z$. In particular, achieving a higher privacy level $Q$ requires a larger set $\mathcal{T}$, making it  difficult to find $\boldsymbol{\Omega}$ that consistently satisfies at least one of the constraint sets in (\ref{equ:thm-physical-bound}) and (\ref{equ:thm-physical-bound-2}) for all required spatial transformations $\boldsymbol{T}_p$.
\end{remark}

\section{Numerical Results}  \label{sec:numerical-results}

In this section, we provide numerical examples to verify the effectiveness of our proposed transmit antenna selection and permutation strategy in privacy-preserving localization.

\subsection{Bob's Uniqueness and Eve's Ambiguity}
% 1. noiseless. can you start with a small array. for example, choose 4 from 8? make sure Lambda consists a small number of sets such that you can list all the solutions. show Fig. 6. you can show some Omega and AOD such that Eve has 2 solutions, 8 solutions, and 16 solutions. for example, you can also show with the case of 8 solutions, what Omega lead to these solutions

% you should also discuss how K and M affect the performance 

% 2. noiseless case

% other than the basic result, you should also discuss how different setups affect the performance

First, we consider a UPA with $M_x=4$, $M_z=2$, and aim to select and permute $K = 4$ antennas out of all these $M = 8$ antennas for transmission. The set of all the feasible antenna selection and permutation strategies $\boldsymbol{\Lambda}$ is obtained by applying the algorithm proposed in Section \ref{sec:antenna-selection-strategy-design-uniqueness}. For clarity, we only illustrate the resulting feasible combinations of the selected $K = 4$ antennas in Fig. \ref{fig:all-combinations}, where the selected antennas are represented by solid circles. The complete set $\boldsymbol{\Lambda}$ is then formed by considering all $K!$ permutations for each feasible combination.
Then, Alice can arbitrarily choose a strategy from $\boldsymbol{\Lambda}$ to transmit pilot signals. In the following, we show the performance in both noiseless and noisy cases.

\begin{figure*}[t]
    \centering
        \includegraphics[width=0.98\linewidth]{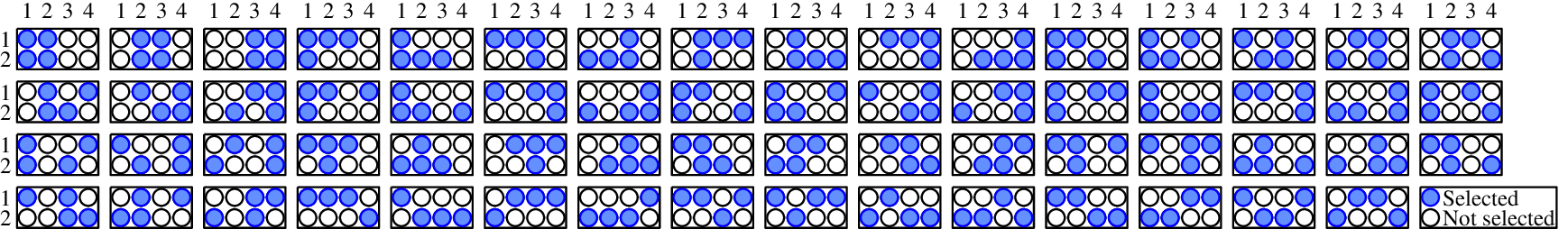}
    \caption{All feasible combinations of the selected $K=4$ antennas from a $4\times 2$ UPA to form the strategy set $\boldsymbol{\Lambda}$.} 
    \label{fig:all-combinations}
    \vspace{-3.0mm}
\end{figure*}
\begin{figure*}[t]
    \centering
    \subfloat[]{
        \includegraphics[trim=0cm 0cm 0.8cm 1.0cm, clip, width=0.18\linewidth]{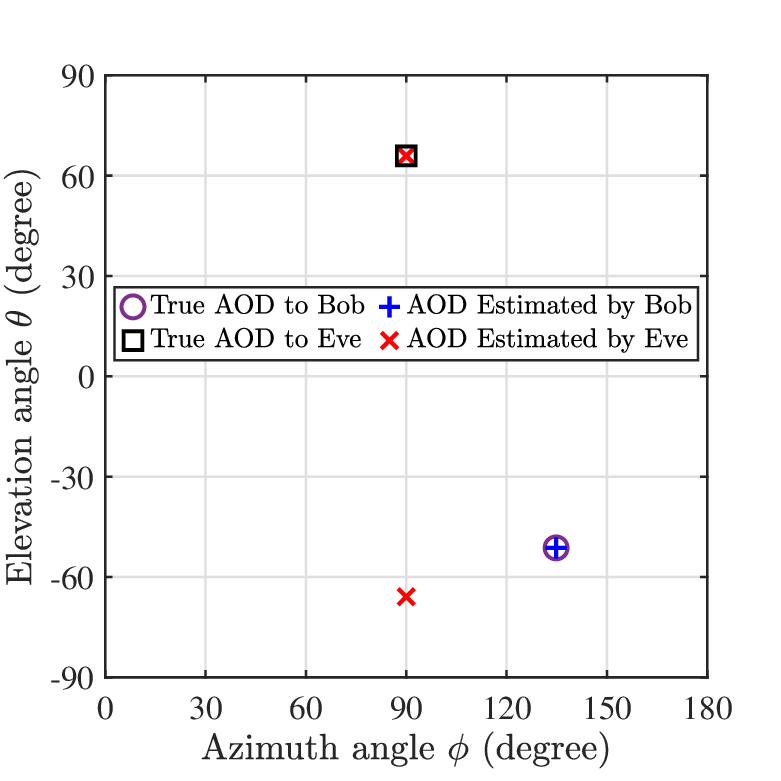} 
        \label{fig:u0-success-2-solutions}
    }
    \hfill
    \subfloat[]{
        \includegraphics[trim=0cm 0cm 0.8cm 1.0cm, clip, width=0.18\linewidth]{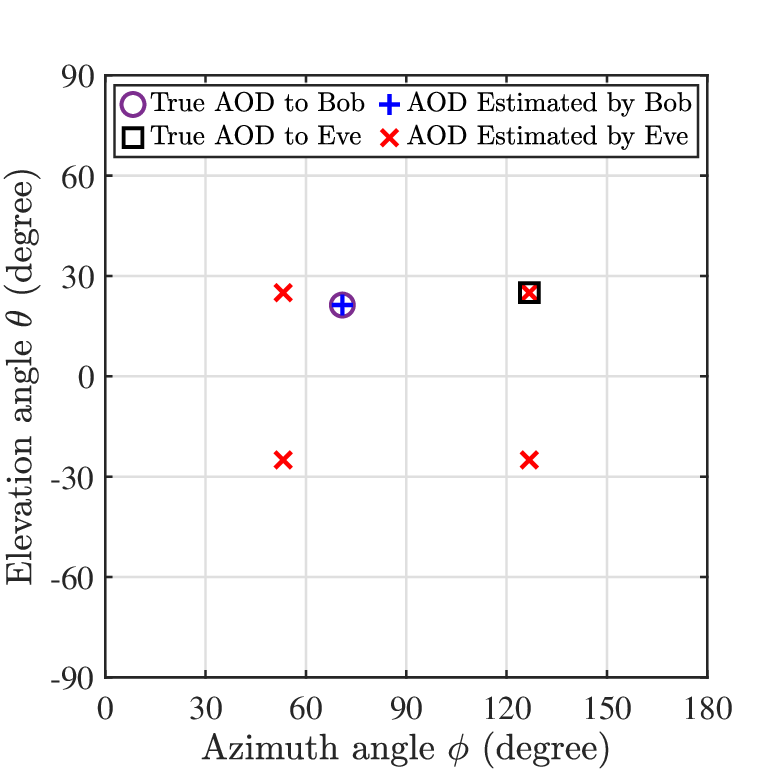}
        \label{fig:4-AOD-solutions}
    }
    \hfill
    \subfloat[]{
        \includegraphics[trim=0cm 0cm 0.8cm 1.0cm, clip, width=0.18\linewidth]{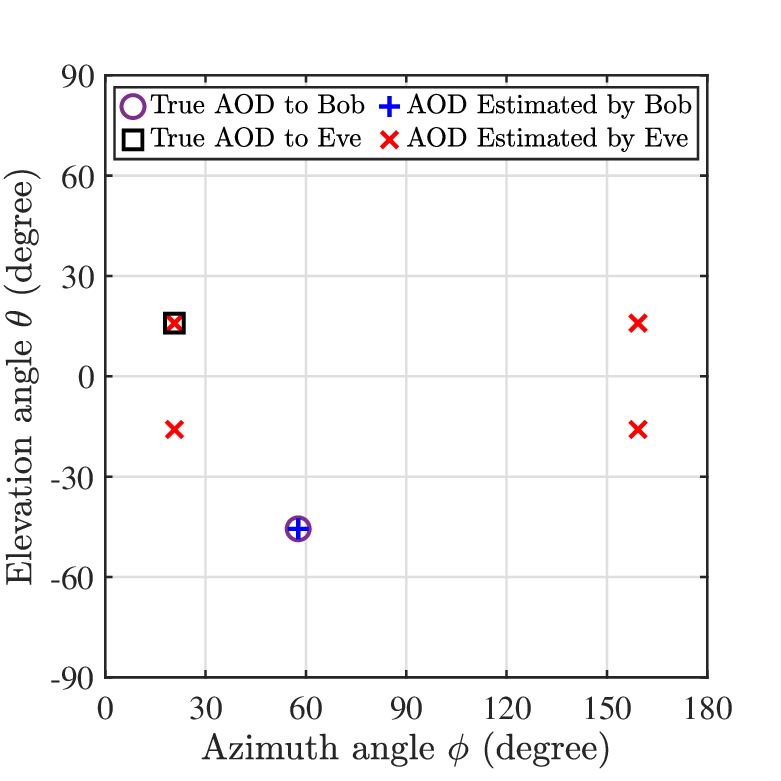} %0.93/
        \label{fig:4-AOD-solutions-2}
    } 
    \hfill
    \subfloat[]{
        \includegraphics[trim=0cm 0cm 0.8cm 1.0cm, clip, width=0.18\linewidth]{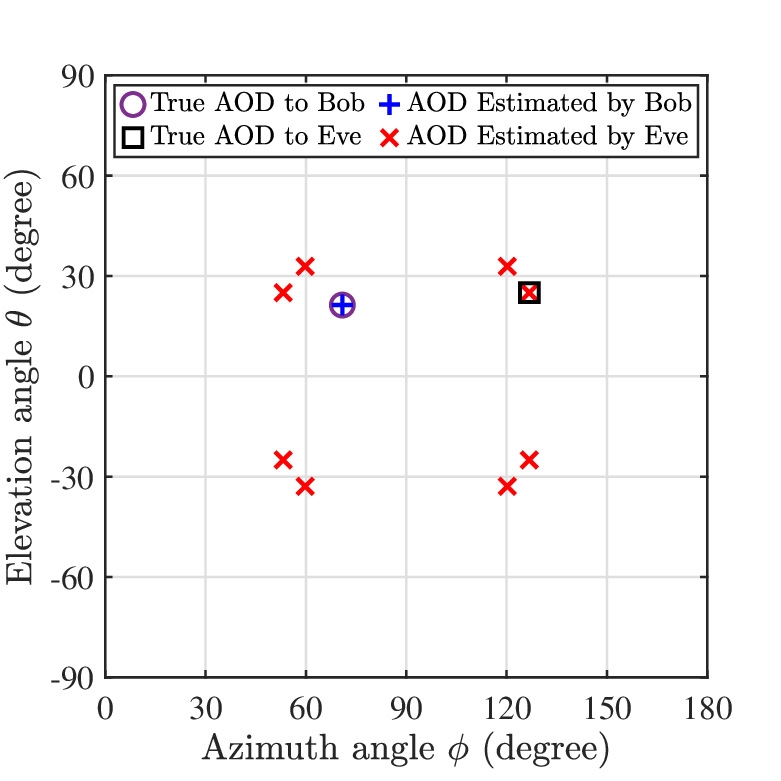}
        \label{fig:8-AOD-solutions}
    }
    \hfill
    \subfloat[]{
        \includegraphics[trim=0cm 0cm 0.8cm 1.0cm, clip, width=0.18\linewidth]{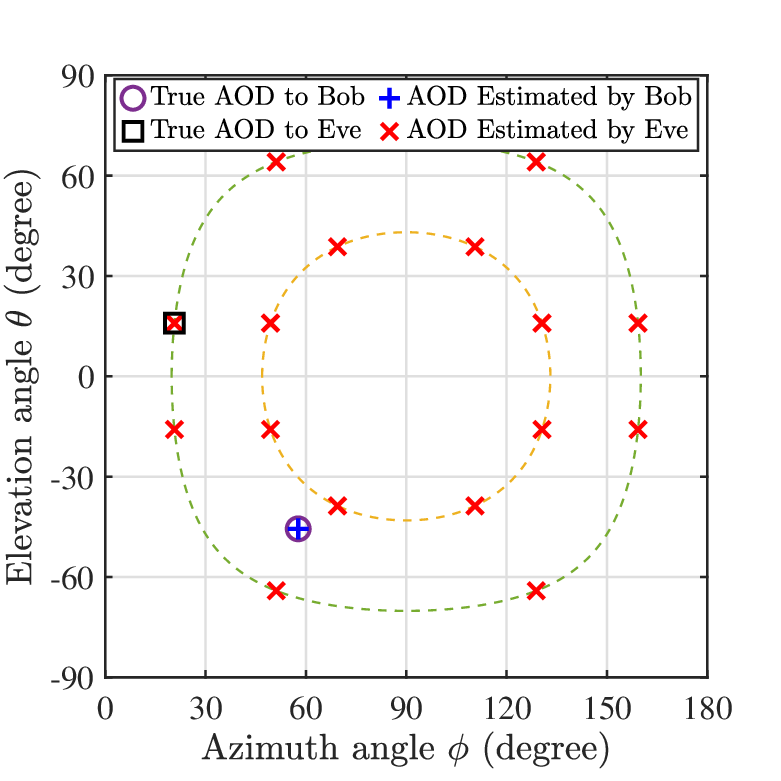}
        \label{fig:16-AOD-solutions}
    }
    \caption{AOD estimation by Bob and Eve in the noiseless case:  
    (a) $\boldsymbol{\Omega} = ((1,1),(2,1),(3,1),(1,2))$, $\theta_{\mathrm{Bob}}=-67.7^\circ$, $\phi_{\mathrm{Bob}}=153.6^\circ$, $\theta_{\mathrm{Eve}}\ = 65.9^\circ$, $\phi_{\mathrm{Eve}} = 90^\circ$;
    (b) $\boldsymbol{\Omega} = ((1,1),(4,1),(1,2),(3,2))$, $\theta_{\mathrm{Bob}}=21.3^\circ$, $\phi_{\mathrm{Bob}}=70.9^\circ$, $\theta_{\mathrm{Eve}}\ = 25.0^\circ$, $\phi_{\mathrm{Eve}} = 126.8^\circ$;
    (c) $\boldsymbol{\Omega} = ((1,1),(4,1),(1,2),(3,2))$, $\theta_{\mathrm{Bob}}=-45.6^\circ$, $\phi_{\mathrm{Bob}}=57.7^\circ$, $\theta_{\mathrm{Eve}}\ = 15.9^\circ$, $\phi_{\mathrm{Eve}} = 20.7^\circ$;
    (d) and (e) use the same $\boldsymbol{\Omega}$ and true AODs to Bob and Eve as (b) and (c), respectively, but consider a $4 \times 4$ UPA.
    }
    \vspace{-3mm}
    \label{fig:noiseless-case}
\end{figure*}
\begin{figure*}[t]
    \centering
    \subfloat[]{
        \includegraphics[width=0.10\linewidth]{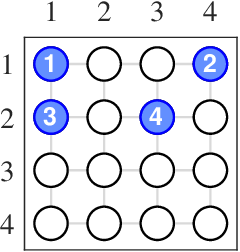} 
        \label{fig:30a-Omega}
    }
    \hfill
    \subfloat[]{
        \includegraphics[width=0.10\linewidth]{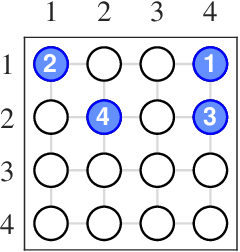}
        \label{fig:30b-Omega}
    }
    \hfill
    \subfloat[]{
        \includegraphics[width=0.10\linewidth]{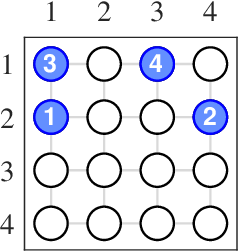}
        \label{fig:30c-Omega}
    }
    \hfill
    \subfloat[]{
        \includegraphics[width=0.10\linewidth]{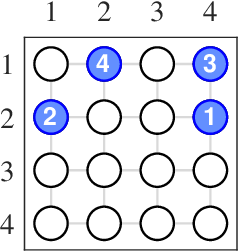} 
        \label{fig:30d-Omega}
    }
    \hfill
    \subfloat[]{
        \includegraphics[width=0.10\linewidth]{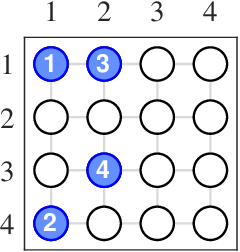} 
        \label{fig:30e-Omega}
    }
    \hfill
    \subfloat[]{
        \includegraphics[width=0.10\linewidth]{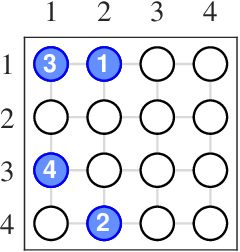}
        \label{fig:30f-Omega}
    }
    \hfill
    \subfloat[]{
        \includegraphics[width=0.10\linewidth]{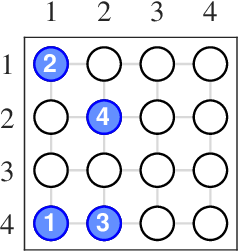}
        \label{fig:30g-Omega}
    }
    \hfill
    \subfloat[]{
        \includegraphics[width=0.10\linewidth]{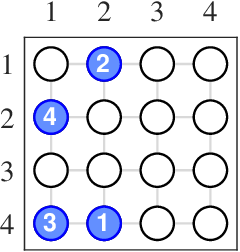} 
        \label{fig:30h-Omega}
    } \\
    \subfloat[]{
        \includegraphics[width=0.10\linewidth]{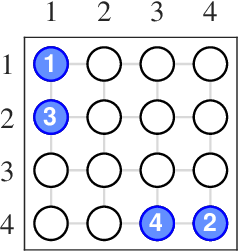} 
        \label{fig:30a-Omega_tilde}
    }
    \hfill
    \subfloat[]{
        \includegraphics[width=0.10\linewidth]{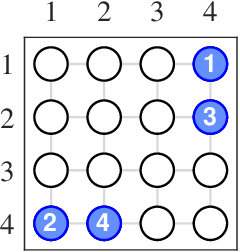}
        \label{fig:30b-Omega_tilde}
    }
    \hfill
    \subfloat[]{
        \includegraphics[width=0.10\linewidth]{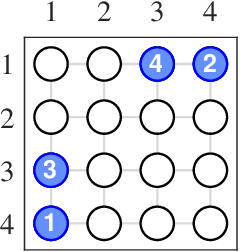}
        \label{fig:30c-Omega_tilde}
    }
    \hfill
    \subfloat[]{
        \includegraphics[width=0.10\linewidth]{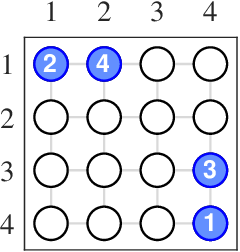} 
        \label{fig:30d-Omega_tilde}
    }
    \hfill
    \subfloat[]{
        \includegraphics[width=0.10\linewidth]{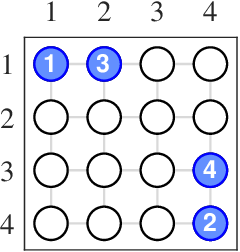} 
        \label{fig:30e-Omega_tilde}
    }
    \hfill
    \subfloat[]{
        \includegraphics[width=0.10\linewidth]{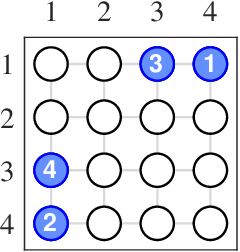}
        \label{fig:30f-Omega_tilde}
    }
    \hfill
    \subfloat[]{
        \includegraphics[width=0.10\linewidth]{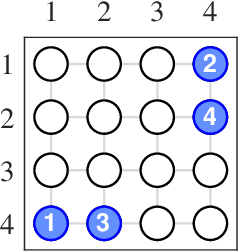}
        \label{fig:30g-Omega_tilde}
    }
    \hfill
    \subfloat[]{
        \includegraphics[width=0.10\linewidth]{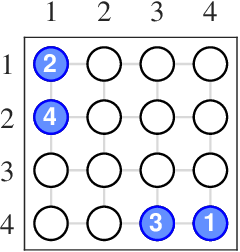} 
        \label{fig:30h-Omega_tilde}
    } 
    \caption{The antenna selection and permutation strategies $\tilde{\boldsymbol{\Omega}}$ corresponding to the AOD estimations in Fig. \ref{fig:noiseless-case}: (a)--(h) constitute the basic strategy group corresponding to the $8$ AOD solutions in Fig. \ref{fig:noiseless-case}(d) as well as those on the outer dashed curve in Fig. \ref{fig:noiseless-case}(e); (i)--(p) form the enhanced strategy group corresponding to the AOD solutions on the inner dashed curve in Fig.~\ref{fig:noiseless-case}(e). Circled numbers $1,\dots, 4$ represent the permutation order of selected antennas.}
    \label{fig:Omega_tilde}
\end{figure*}

In the noiseless case, we generate $10^5$ locations of Alice relative to Bob and Eve, and check the AOD estimated by Bob and that estimated by Eve.  In Fig. \ref{fig:noiseless-case}, we show several representative examples due to the space limitation. Under each realization, Bob performs exhaustive search over $\theta_{\mathrm{Bob}}$ and $\phi_{\mathrm{Bob}}$ given its received signal and the applied $\boldsymbol{\Omega}$, while Eve performs exhaustive search of $\theta_{\mathrm{Eve}}$, $\phi_{\mathrm{Eve}}$ and $\boldsymbol{\Omega}$ based on its received signal. It is observed from Fig. \ref{fig:noiseless-case} (and also many other numerical examples) that with known $\boldsymbol{\Omega}$, Bob is always able to uniquely estimate Alice's AOD. However, Eve may see multiple possible solutions of Alice's antenna strategy and AOD that lead to its received signal due to the lack of knowledge about $\boldsymbol{\Omega}$. 
Moreover, Fig. \ref{fig:noiseless-case} validates the estimation ambiguity characterized in Theorem \ref{thm:eight-ambiguity}. Specifically, Fig. \ref{fig:noiseless-case}\subref{fig:u0-success-2-solutions} shows two solutions given by (\ref{eq:aod_4u_1}) and (\ref{eq:aod_4u_2}). Figs. \ref{fig:noiseless-case}\subref{fig:4-AOD-solutions} and \ref{fig:noiseless-case}\subref{fig:4-AOD-solutions-2} show four solutions from (\ref{eq:aod_8_1}) to (\ref{eq:aod_8_4}) because the applied strategy violates condition c) in Theorem \ref{thm:eight-ambiguity} for a $4\times 2$ UPA. By contrast, when considering a $4\times 4$ UPA in Figs. \ref{fig:noiseless-case}\subref{fig:8-AOD-solutions} and \ref{fig:noiseless-case}\subref{fig:16-AOD-solutions} where condition c) holds, Eve observes all the AOD solutions in (\ref{eq:aod_8_1})--(\ref{eq:aod_8_8}). Note that in Fig. \ref{fig:noiseless-case}\subref{fig:16-AOD-solutions}, the AOD solutions given by \eqref{eq:aod_8_1}--\eqref{eq:aod_8_8} are those on the outer dashed curve, while those on the inner dashed curve form an enhanced AOD solution set introduced in Section \ref{subsec:enhanced-ambiguity}. 
Therefore, the comparison between Figs. \ref{fig:noiseless-case}\subref{fig:4-AOD-solutions-2} and \ref{fig:noiseless-case}\subref{fig:16-AOD-solutions} also demonstrates that expanding the UPA size along both dimensions (e.g., from a $4 \times 2$ to a $4 \times 4$ UPA) makes it possible, at least for specific true AODs, to induce additional AOD solutions for Eve beyond the basic solution set.

In Fig. \ref{fig:Omega_tilde}, we illustrate the corresponding strategies $\tilde{\boldsymbol{\Omega}}$ estimated by Eve that lead to the AOD estimations in Fig. \ref{fig:noiseless-case}. 
As observed from the results, the strategies within each group are essentially related to each other via spatial reflections and rotations, thereby generating the symmetric and structured AOD solutions in Fig.~\ref{fig:noiseless-case}. Meanwhile, it is straightforward to see that only the four strategies in Figs. \ref{fig:Omega_tilde}\subref{fig:30a-Omega}--\ref{fig:Omega_tilde}\subref{fig:30d-Omega} can physically fit into a $4 \times 2$ UPA. Consequently, when considering a $4 \times 2$ UPA instead of a $4 \times 4$ UPA, the $8$ and $16$ AOD solutions shown in Figs. \ref{fig:noiseless-case}\subref{fig:8-AOD-solutions} and \ref{fig:noiseless-case}\subref{fig:16-AOD-solutions} inevitably collapse to the four solutions in Figs. \ref{fig:noiseless-case}\subref{fig:4-AOD-solutions} and \ref{fig:noiseless-case}\subref{fig:4-AOD-solutions-2}, respectively.

Next, we consider the noisy case. The SNR of Bob is fixed to be $20 \ \mathrm{dB}$, and that of Eve varies from $10 \ \mathrm{dB}$ to $30 \ \mathrm{dB}$. We generate $10^5$ realizations of Alice's AOD relative to Bob and Eve. Under each realization, we utilize the maximum likelihood technique to estimate $\theta_{\mathrm{Bob}}$, $\phi_{\mathrm{Bob}}$ with knowledge of $\boldsymbol{\Omega}$ and $\theta_{\mathrm{Eve}}$, $\phi_{\mathrm{Eve}}$ without knowledge of $\boldsymbol{\Omega}$. 

Fig. \ref{fig:basic-accuracy-probability-vs-SNR} shows the AOD estimation accuracy for Bob and Eve versus Eve's SNR, where an estimated AOD is claimed to be accurate if its difference to the true AOD is within $5$ degrees. Here we consider a $4 \times 2$ UPA. It is observed that the applied strategy $\boldsymbol{\Omega} = ((1,1), (4,1), (1,2), (3,2))$ (and also other strategies in $\boldsymbol{\Lambda}$) can enable Bob to accurately estimate Alice's AOD, but prevent Eve from eavesdropping on Alice's AOD, in the noisy case.
\begin{figure}[t]
    \centering
        \includegraphics[width=0.74\linewidth]{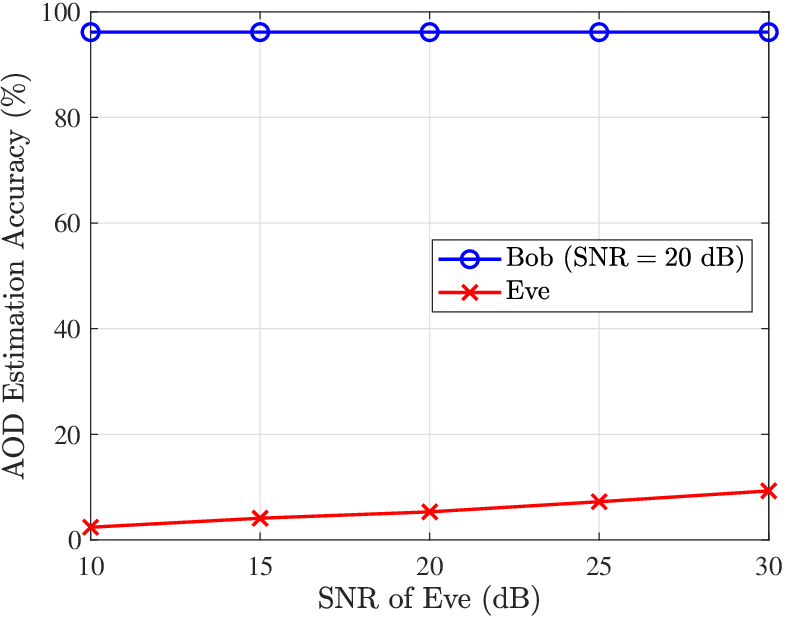}
    \caption{AOD estimation accuracy versus SNR of Eve.} 
    \label{fig:basic-accuracy-probability-vs-SNR}
\end{figure}
\begin{figure}[t]
    \centering
        \includegraphics[width=0.74\linewidth]{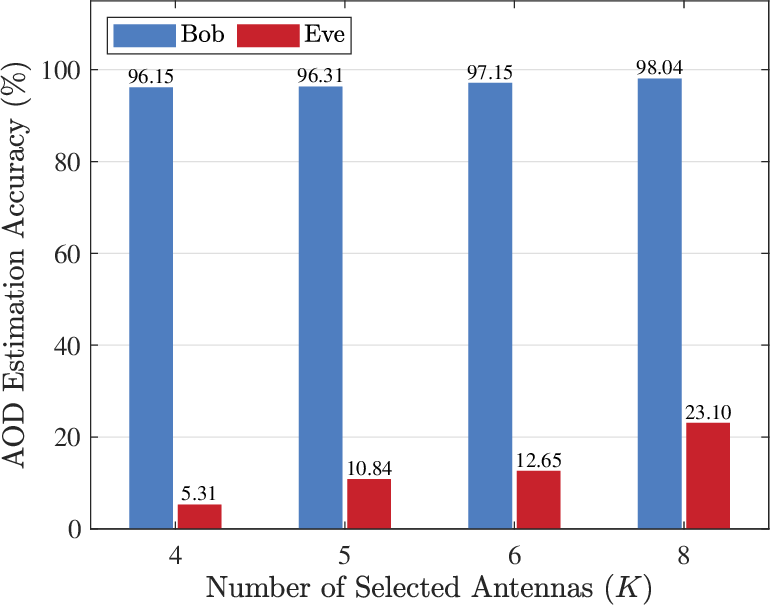}
    \caption{AOD estimation accuracy versus the number of selected antennas $K$.} 
    \label{fig:basic-accuracy-probability-vs-K}
\end{figure}

Fig. \ref{fig:basic-accuracy-probability-vs-K} presents the AOD estimation accuracy for Bob and Eve versus the number of selected antennas $K$ in a $4 \times 2$ UPA, where the SNRs of both Bob and Eve are fixed to be $20 \ \mathrm{dB}$. Specifically, for $K=4$, the applied strategy is the same as that in Fig. \ref{fig:basic-accuracy-probability-vs-SNR}, and antennas $(2,2)$, $(3,1)$, and $(2,1),(4,2)$ are sequentially added to realize the strategies for $K=5$, $6$, and $8$, respectively. As observed from the results, Bob maintains a consistently high estimation accuracy for all $K$, while Eve's AOD estimation accuracy increases significantly as $K$ increases. It is worth noting that the case of $K=8$ means Alice does not perform antenna selection. 
In this case, Eve exhibits an accuracy of $23.10\%$, which closely approaches the expected accuracy of $25\%$, since selecting all eight antennas from a $4 \times 2$ UPA generally leaves only the four AOD solutions given by (\ref{eq:aod_8_1}) to (\ref{eq:aod_8_4}) in Theorem \ref{thm:eight-ambiguity}, among which Eve can only make a random selection without additional prior information. This estimation accuracy is greatly higher than that with smaller $K$, which demonstrates the significance of transmit antenna selection in preserving localization privacy.

\subsection{Enhanced Privacy}
\begin{figure}[t]
    \centering
        \includegraphics[width=0.74\linewidth]{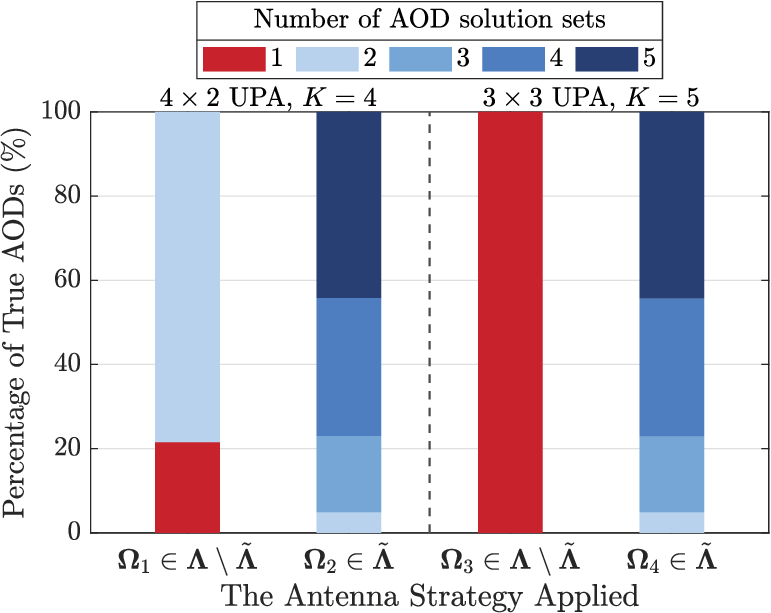}
    \caption{Percentage of true AODs corresponding to different numbers of AOD solution sets at Eve for different strategies.} 
    \label{fig:percentage-true-AODs-for-different-strategies}
\end{figure}
In this subsection, we validate the effectiveness of the scheme proposed in Section \ref{subsec:enhanced-ambiguity} to achieve enhanced privacy. First, in the noiseless case, we consider the following two setups: (a) $4 \times 2$ UPA, $K=4$, and (b) $3 \times 3$ UPA, $K=5$. Under each setup, we can obtain the set $\boldsymbol{\Lambda}$, and also construct the subset $\tilde{\boldsymbol{\Lambda}} \subset \boldsymbol{\Lambda}$ as in Theorem \ref{thm:Q-level-construction} given $Q=2$. Due to the space limitation, we select one pair of strategies from each setup for comparison: (a) $\boldsymbol{\Omega}_1 = ((1,1),(4,1),(1,2),(3,2)) \in \boldsymbol{\Lambda} \setminus \tilde{\boldsymbol{\Lambda}}$, $\boldsymbol{\Omega}_2 = ((1,1),(2,1),(2,2),(3,2)) \in \tilde{\boldsymbol{\Lambda}}$, and (b) $\boldsymbol{\Omega}_3 = ((1,1), (2,1), (2,2), (2,3), (1,3)) \in \boldsymbol{\Lambda} \setminus \tilde{\boldsymbol{\Lambda}}$, $\boldsymbol{\Omega}_4 = ((1,1), (2,1), (2,2), (2,3), (3,3))  \in \tilde{\boldsymbol{\Lambda}}$. We generate $10^5$ realizations of Alice's AOD relative to Eve, and check the number of AOD solution sets estimated by Eve under each realization. Note that Eve performs exhaustive search of $\boldsymbol{\Omega}$ over the set $\boldsymbol{\Lambda}$ when $\boldsymbol{\Omega}_1$ or $\boldsymbol{\Omega}_3$ is applied, while searching over the subset $\tilde{\boldsymbol{\Lambda}} \subset \boldsymbol{\Lambda}$ for $\boldsymbol{\Omega}_2$ and $\boldsymbol{\Omega}_4$.

Fig. \ref{fig:percentage-true-AODs-for-different-strategies} plots the percentage of the true AODs corresponding to different numbers of AOD solution sets for these applied strategies. 
It is observed that applying $\boldsymbol{\Omega}_1$ and $\boldsymbol{\Omega}_3$ results in cases with only a single AOD solution set at Eve. In sharp contrast, $\boldsymbol{\Omega}_2$ and $\boldsymbol{\Omega}_4$ completely eliminate the single-solution-set case, guaranteeing that Eve is always confused by at least two solution sets for any true AOD. Furthermore, $\boldsymbol{\Omega}_2$ and $\boldsymbol{\Omega}_4$ exhibit remarkably similar percentage distributions across different numbers of solution sets, which stems from our structured construction in Theorem~\ref{thm:Q-level-construction}. This also demonstrates that our proposed enhanced privacy design is applicable to diverse UPA sizes and numbers of selected antennas. 

In Fig. \ref{fig:AOD-solution-sets}, we present two examples of AOD estimation by Bob and Eve under $\boldsymbol{\Omega}_4$. It is observed from Fig. \ref{fig:AOD-solution-sets}\subref{fig:32-AOD-solutions} and 
Fig. \ref{fig:AOD-solution-sets}\subref{fig:40-AOD-solutions} that Eve can estimate four and five AOD solution sets, respectively, which indicates that the applied strategy from $\tilde{\boldsymbol{\Lambda}}$ achieves better privacy in the noiseless case.
\begin{figure}[t]
    \centering
    \subfloat[]{
        \includegraphics[trim=0cm 0cm 0.8cm 1.0cm, clip, width=0.47\linewidth]{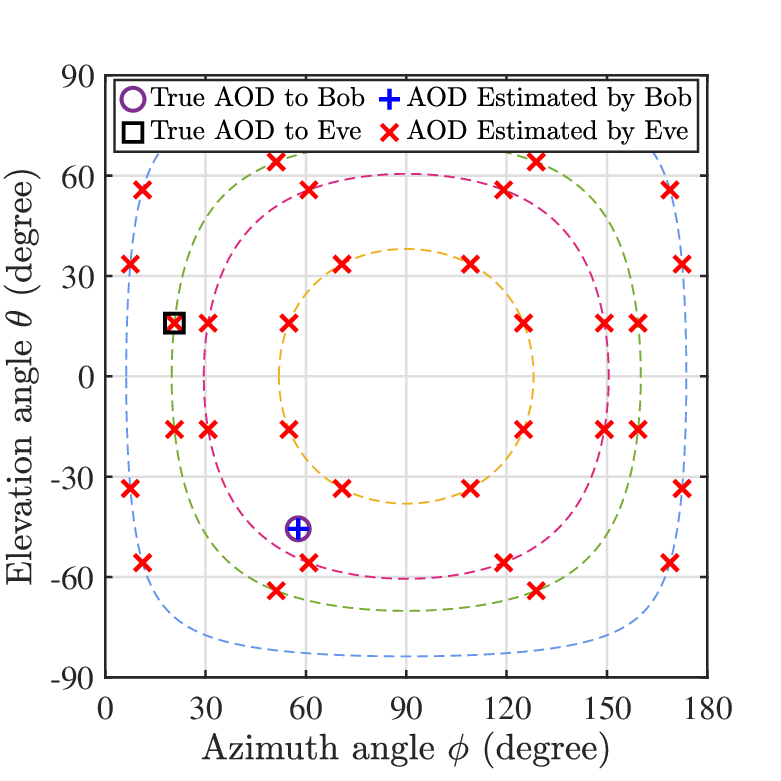}
        \label{fig:32-AOD-solutions}
    }
    \hfill
    \subfloat[]{
        \includegraphics[trim=0cm 0cm 0.8cm 1.0cm, clip, width=0.47\linewidth]{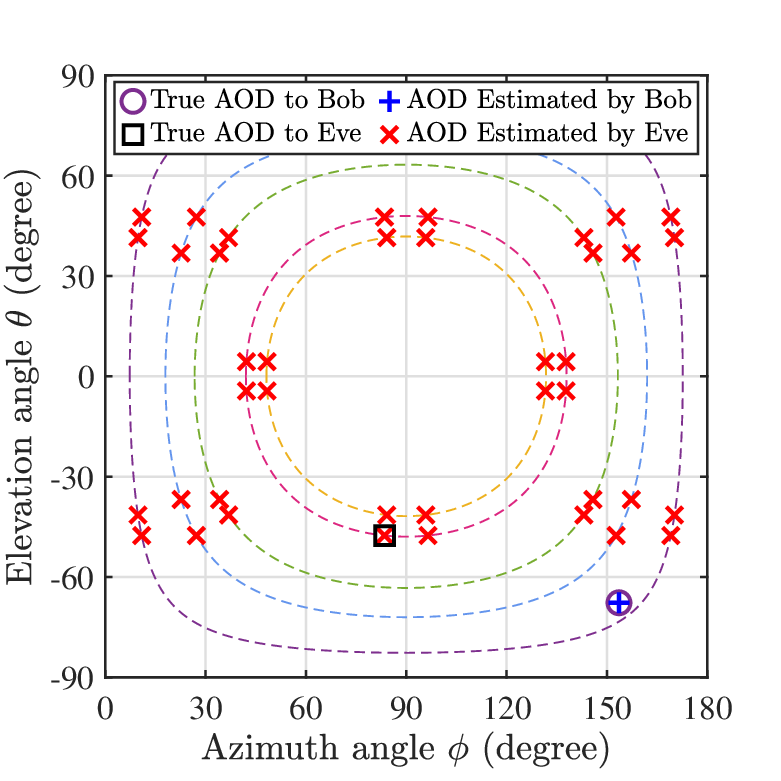}
        \label{fig:40-AOD-solutions}
    }
    \caption{AOD estimation by Bob and Eve in the noiseless case under $\boldsymbol{\Omega}_4$: 
    (a) $\theta_{\mathrm{Bob}}=-45.6^\circ$, $\phi_{\mathrm{Bob}}=57.7^\circ$, $\theta_{\mathrm{Eve}}\ = 15.9^\circ$, $\phi_{\mathrm{Eve}} = 20.7^\circ$; (b) $\theta_{\mathrm{Bob}}= -67.7^\circ$, $\phi_{\mathrm{Bob}}=153.6^\circ$, $\theta_{\mathrm{Eve}} = -47.6^\circ$, $\phi_{\mathrm{Eve}} = 83.5^\circ$.
    }
    \label{fig:AOD-solution-sets}
\end{figure}

Next, we consider the noisy case. In Fig.~\ref{fig:enhanced-accuracy-probability-vs-SNR}, we compare the strategies $\boldsymbol{\Omega}_3$ and $\boldsymbol{\Omega}_4$ as considered in Fig. \ref{fig:percentage-true-AODs-for-different-strategies} and evaluate Eve's AOD estimation accuracy versus its SNR. It is observed that $\boldsymbol{\Omega}_4$ consistently restricts Eve to a lower AOD estimation accuracy than $\boldsymbol{\Omega}_3$ over the entire SNR range. More importantly, Eve's accuracy under $\boldsymbol{\Omega}_4$ gradually saturates as the SNR increases. This indicates that although increasing the SNR mitigates the impact of noise, it cannot eliminate the intrinsic estimation ambiguity induced by multiple AOD solution sets that exist in the noiseless case. Therefore, our proposed enhanced privacy design can maintain its privacy advantage even in the high-SNR regime.
\begin{figure}[t]
    \centering
        \includegraphics[width=0.74\linewidth]{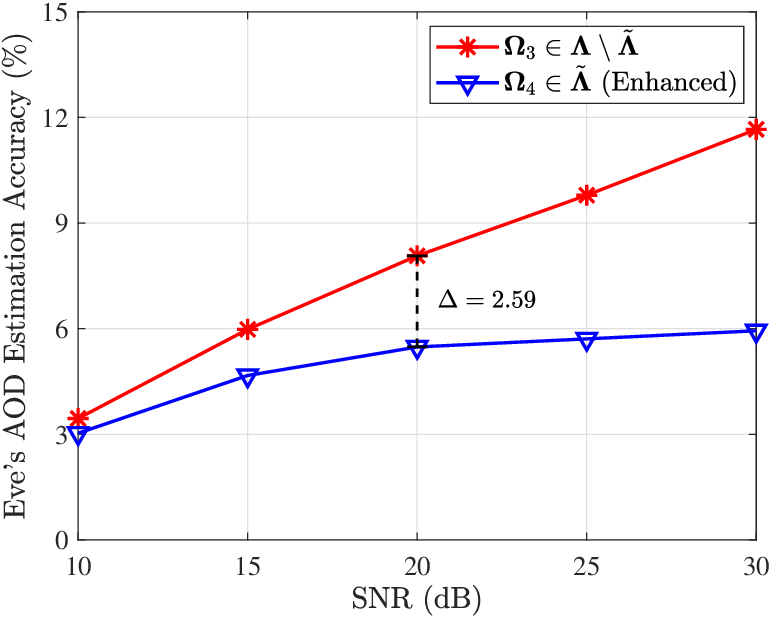}
    \caption{Eve's AOD estimation accuracy versus SNR under $\boldsymbol{\Omega}_3$ and $\boldsymbol{\Omega}_4$.}
    \label{fig:enhanced-accuracy-probability-vs-SNR}
\end{figure}

\section{Conclusions} \label{sec:conclusions}

This paper proposed a transmit antenna selection and permutation scheme to preserve localization privacy. In analogy to cryptography in secure communication, the transmit antenna selection and permutation strategy serves as the secret key shared between the legitimate transmitter and receiver to prevent Eve from estimating Alice's AOD. 
To guarantee unique AOD estimation in the noiseless case for legitimate localization, we characterized the set of all the feasible antenna selection and permutation strategies that effectively eliminate grating lobes. Furthermore, we proved that any antenna selection and permutation strategy in the above set inherently induces AOD estimation ambiguity to Eve in the noiseless case, and revealed that the estimated AODs at Eve occur in groups. Therefore, we constructed a subset of strategies to achieve enhanced privacy by making Eve always confused by multiple groups of AODs. Numerical results demonstrated the effectiveness of our proposed scheme in the noisy case as well.
We hope that this work can motivate more efforts to investigate sensing privacy issues in 6G ISAC.

% \newpage
% \ 

% \appendix
\begin{appendices}

\section{Proof of Proposition \ref{prop:uniqueness}} \label{app:proof-theorem-Bob}
    Note that when $G\geq K$, (\ref{equ:requirement-def1}) is equivalent to
    \begin{equation}
        \tilde{\boldsymbol{h}}(\boldsymbol{\Omega},\tilde{\theta}_{\mathrm{Bob}}, \tilde{\phi}_{\mathrm{Bob}}) \neq \tilde{\boldsymbol{h}}(\boldsymbol{\Omega}, {\theta}_{\mathrm{Bob}}, {\phi}_{\mathrm{Bob}}). \label{equ:Bob-ambiguity-receive-equal}
    \end{equation}
    In the following, we prove the necessary and sufficient parts of Proposition \ref{prop:uniqueness}.
    
    \textit{1) Proof of Necessity: }
  
    Suppose there exist $u$ and $v$ with $0 < u^2 + v^2 < 1$ such that 
    \(u \boldsymbol{x}(\boldsymbol{\Omega}) + v\boldsymbol{z}(\boldsymbol{\Omega}) = \boldsymbol{k} \in \mathbb{Z}^{K \times 1}\). Given $\tilde{u}_{\mathrm{Bob}} = u$, $\tilde{v}_{\mathrm{Bob}} = v$, $u_{\mathrm{Bob}} = -u$, and $v_{\mathrm{Bob}} = -v$, it can be easily verified that $(\tilde{u}_{\mathrm{Bob}}, \tilde{v}_{\mathrm{Bob}}) \neq ({u}_{\mathrm{Bob}}, {v}_{\mathrm{Bob}})$ and $0< \tilde{u}_{\mathrm{Bob}}^2 + \tilde{v}_{\mathrm{Bob}}^2 = {u}_{\mathrm{Bob}}^2 + {v}_{\mathrm{Bob}}^2 <1$. Moreover, we observe that
    \begin{equation}
        \tilde{u}_{\mathrm{Bob}} \boldsymbol{x}(\boldsymbol{\Omega})+ \tilde{v}_{\mathrm{Bob}} \boldsymbol{z}(\boldsymbol{\Omega}) =  {u}_{\mathrm{Bob}} \boldsymbol{x}(\boldsymbol{\Omega})+  {v}_{\mathrm{Bob}} \boldsymbol{z}(\boldsymbol{\Omega})+ 2 \boldsymbol{k}.  \label{equ:proof-necessity}
    \end{equation}
    Using the complex exponential function and because ${\gamma_{\mathrm{Bob}}}$ is unknown, it follows from (\ref{equ:proof-necessity}) that $\tilde{\boldsymbol{h}}(\boldsymbol{\Omega},\tilde{\theta}_{\mathrm{Bob}}, \tilde{\phi}_{\mathrm{Bob}}) = \tilde{\boldsymbol{h}}(\boldsymbol{\Omega}, {\theta}_{\mathrm{Bob}}, {\phi}_{\mathrm{Bob}})$, which is a contradiction to (\ref{equ:Bob-ambiguity-receive-equal}).

    \textit{2) Proof of Sufficiency: }
    
   Suppose (\ref{equ:Bob-ambiguity-receive-equal}) does not hold. This implies that there exists \(\boldsymbol{k} \in \mathbb{Z}^{K \times 1}\) such that
    $
    \tilde{u}_{\mathrm{Bob}} \boldsymbol{x}(\boldsymbol{\Omega})+ \tilde{v}_{\mathrm{Bob}} \boldsymbol{z}(\boldsymbol{\Omega}) =  {u}_{\mathrm{Bob}} \boldsymbol{x}(\boldsymbol{\Omega})+  {v}_{\mathrm{Bob}} \boldsymbol{z}(\boldsymbol{\Omega})+ 2 \boldsymbol{k},
    $
    due to the periodicity of the complex exponential function in (\ref{equ:Bob-tilde-h-B-u-v}).
    Let \(\Delta u = \tilde{u}_{\mathrm{Bob}} - {u}_{\mathrm{Bob}}\), and \(\Delta v  = \tilde{v}_{\mathrm{Bob}} - {v}_{\mathrm{Bob}}\), then we have \(  \boldsymbol{k} =  (\Delta u/2 )\boldsymbol{x}(\boldsymbol{\Omega})+  (\Delta v / 2)\boldsymbol{z}(\boldsymbol{\Omega})\).
    It is easy to verify that
    \begin{align}
    \nonumber & \left( \Delta u/2 \right)^2+\left(\Delta v/2\right)^2
    \stackrel{(a)}{<} \frac{1}{2}\left(1 - \tilde{u}_{\mathrm{Bob}} {u}_{\mathrm{Bob}} - \tilde{v}_{\mathrm{Bob}} {v}_{\mathrm{Bob}}\right) \\
    % \nonumber & =\frac{1}{2}\Big(1 - \cos \tilde{\theta}_{\mathrm{Bob}} \cos {\theta}_{\mathrm{Bob}} \cos \tilde{\phi}_{\mathrm{Bob}} \cos {\phi}_{\mathrm{Bob}} 
    %  - \sin \tilde{\theta}_{\mathrm{Bob}} \sin {\theta}_{\mathrm{Bob}}\Big) \\
    &\stackrel{(b)}{<} \frac{1}{2}(1 + \cos (\tilde{\theta}_{\mathrm{Bob}} + {\theta}_{\mathrm{Bob}}))  \stackrel{(c)}{\leq} 1,   \label{equ:sum-delta-uv/2^2} 
    \end{align}
    where $(a)$ is due to $\tilde{u}_{\mathrm{Bob}}^2+\tilde{v}_{\mathrm{Bob}}^2 < 1$ and $u_{\mathrm{Bob}}^2+v_{\mathrm{Bob}}^2 < 1$,  $(b)$ is due to (\ref{equ:u-cos-cos-v-sin}), the trigonometric identities, and the facts that $\cos \tilde{\phi}_{\mathrm{Bob}}, \cos {\phi}_{\mathrm{Bob}} \in (-1, 1)$, and $\cos \tilde{\theta}_{\mathrm{Bob}},$ $\cos {\theta}_{\mathrm{Bob}} \in (0, 1]$, $(c)$ follows from $\cos (\tilde{\theta}_{\mathrm{Bob}} + {\theta}_{\mathrm{Bob}}) \in [-1, 1]$. From (\ref{equ:sum-delta-uv/2^2}) and because $\Delta u$ and $\Delta v$ are not both zero, a solution of $u$ and $v$ with \(0< u^2 + v^2<1\) such that \(u \boldsymbol{x}(\boldsymbol{\Omega}) + v \boldsymbol{z}(\boldsymbol{\Omega}) \in \mathbb{Z}^{K \times 1}\) can be readily found, i.e., \(u = \Delta u/2\) and 
    \(v = \Delta v /2\). This contradicts the characterization of $\boldsymbol{\Lambda}$ in (\ref{equ:theorem1}).

    Proposition \ref{prop:uniqueness} is proved.
    
\section{Proof of Theorem \ref{thm:equ-thm-uniqueness-K-equal-3}} \label{app:equ-thm-uniqueness-K-equal-3}
   
    Given any $\boldsymbol{\Omega} \in \boldsymbol{\Lambda}$, define the open elliptic region $\mathcal{E}(\boldsymbol{\Omega})=\{\hat{\boldsymbol{P}}(\boldsymbol{\Omega})\boldsymbol{w} \mid \boldsymbol{w} \in \mathcal{D}\}$, where $\hat{\boldsymbol{P}}(\boldsymbol{\Omega}) = [\boldsymbol{P}(\boldsymbol{\Omega})]_{2:3, :}$ and $\mathcal{D} = \{\boldsymbol{w} \in \mathbb{R}^2 \mid \|\boldsymbol{w}\|_2 < 1\}$ denotes the open unit disk. Note that if $\operatorname{rank}(\hat{\boldsymbol{P}}(\boldsymbol{\Omega})) < 2$, we can always construct some $\boldsymbol{w} \in \mathcal{D}$ such that $\hat{\boldsymbol{P}}(\boldsymbol{\Omega})\boldsymbol{w} = \mathbf{0}$, which contradicts Proposition \ref{prop:uniqueness}. Therefore, $\operatorname{rank}(\hat{\boldsymbol{P}}(\boldsymbol{\Omega})) = 2$ must hold, and the region $\mathcal{E}(\boldsymbol{\Omega})$ will not degenerate into a line. Next, it is not difficult to see that $\hat{\boldsymbol{P}}(\boldsymbol{\Omega})$ maps $\mathcal{D} \setminus \{\mathbf{0}\}$ bijectively onto $\mathcal{E}(\boldsymbol{\Omega}) \setminus \{\mathbf{0}\}$. By Proposition \ref{prop:uniqueness}, it thus suffices to prove that $\mathcal{E}(\boldsymbol{\Omega})$ contains no integer point other than the origin if and only if $|\det(\hat{\boldsymbol{P}}(\boldsymbol{\Omega}))| = 1$. We prove the necessary and sufficient parts as follows:

    \textit{1) Proof of Necessity:}

    Assume that $\mathcal{E}(\boldsymbol{\Omega})$ contains no integer point other than the origin. By Minkowski's theorem, the area of the region $\mathcal{E}(\boldsymbol{\Omega})$ must be no greater than four, i.e., $\pi |\operatorname{det}(\hat{\boldsymbol{P}}(\boldsymbol{\Omega}))| \leq 4$. Since $\operatorname{det}(\hat{\boldsymbol{P}}(\boldsymbol{\Omega}))$ is an integer and $\operatorname{det}(\hat{\boldsymbol{P}}(\boldsymbol{\Omega})) \neq 0$, it follows that $|\operatorname{det}(\hat{\boldsymbol{P}}(\boldsymbol{\Omega}))| = 1$.

    \textit{2) Proof of Sufficiency:}

    Assume that $|\det(\hat{\boldsymbol{P}}(\boldsymbol{\Omega}))| = 1$, and suppose, for contradiction, there exists a point $\boldsymbol{e} \in \left(\mathbb{Z}^2 \setminus\{\mathbf{0}\}\right) \cap \mathcal{E}(\boldsymbol{\Omega})$. From the definition of $\mathcal{E}(\boldsymbol{\Omega})$, it follows that $(\hat{\boldsymbol{P}}(\boldsymbol{\Omega}))^{-1} \boldsymbol{e} \in \mathcal{D}$, where $(\hat{\boldsymbol{P}}(\boldsymbol{\Omega}))^{-1}$ denotes the inverse of $\hat{\boldsymbol{P}}(\boldsymbol{\Omega})$. Since $\hat{\boldsymbol{P}}(\boldsymbol{\Omega})$ is an integer matrix with $|\operatorname{det}(\hat{\boldsymbol{P}}(\boldsymbol{\Omega}))| = 1$, $(\hat{\boldsymbol{P}}(\boldsymbol{\Omega}))^{-1}$ must also be an integer matrix. This implies $(\hat{\boldsymbol{P}}(\boldsymbol{\Omega}))^{-1} \boldsymbol{e} \in (\mathbb{Z}^2 \setminus\{\mathbf{0}\}) \cap \mathcal{D}$. Due to the fact that $\mathcal{D}$ contains no integer point other than the origin, we have found a contradiction, implying that no such $\boldsymbol{e}$ exists.

    Theorem \ref{thm:equ-thm-uniqueness-K-equal-3} is proved.

    \section{Proof of Proposition \ref{prop:Eve-ambiguity}} \label{app:proof-theorem-Eve}
    Note that when $G\geq K$, (\ref{equ:requirement-def2}) is equivalent to
    \begin{equation}
        \tilde{\boldsymbol{h}}_{\mathrm{Eve}}(\tilde{\boldsymbol{\Omega}}, \tilde{\theta}_{\mathrm{Eve}}, \tilde{\phi}_{\mathrm{Eve}}) = \tilde{\boldsymbol{h}}_{\mathrm{Eve}}(\boldsymbol{\Omega}, {\theta}_{\mathrm{Eve}}, {\phi}_{\mathrm{Eve}}). \label{equ:Eve-ambiguity-receive-equal}
    \end{equation}
    In the following, we prove the necessary and sufficient parts of Proposition \ref{prop:Eve-ambiguity}.
    
    \textit{1) Proof of Necessity: }
    
    Suppose (\ref{equ:Eve-ambiguity-receive-equal}) holds. Then, $e^{-j \pi\left((\tilde{x}_k - \tilde{x}_1) \tilde{u}_{\mathrm{Eve}} + (\tilde{z}_k - \tilde{z}_1) \tilde{v}_{\mathrm{Eve}}\right)} = e^{-j \pi\left(({x}_k - {x}_1) {u}_{\mathrm{Eve}} + ({z}_k - {z}_1) {v}_{\mathrm{Eve}}\right)}, \forall k$, according to (\ref{equ:Eve-tilde-h-B-u-v}). Due to the periodicity of the complex exponential function, it can be then verified that $(\tilde{x}_k - \tilde{x}_1) \tilde{u}_{\mathrm{Eve}} + (\tilde{z}_k - \tilde{z}_1) \tilde{v}_{\mathrm{Eve}} - ({x}_k - {x}_1) {u}_{\mathrm{Eve}} - ({z}_k - {z}_1) {v}_{\mathrm{Eve}}$ is an even integer for all $k$.

    \textit{2) Proof of Sufficiency: }
    
    Given $u_{\mathrm{Eve}}$, $v_{\mathrm{Eve}}$, $\boldsymbol{x}(\boldsymbol{\Omega})$, and $\boldsymbol{z}(\boldsymbol{\Omega})$, suppose there exists an alternative solution of $\tilde{u}_{\mathrm{Eve}}$, $\tilde{v}_{\mathrm{Eve}}$, $\tilde{\boldsymbol{x}}(\tilde{\boldsymbol{\Omega}})$, and $\tilde{\boldsymbol{z}}(\tilde{\boldsymbol{\Omega}})$ such that
    \begin{equation}
    \tilde{u}_{\mathrm{Eve}} \tilde{\boldsymbol{x}}(\tilde{\boldsymbol{\Omega}})  +  \tilde{v}_{\mathrm{Eve}} \tilde{\boldsymbol{z}}(\tilde{\boldsymbol{\Omega}})= {u}_{\mathrm{Eve}} {\boldsymbol{x}}({\boldsymbol{\Omega}}) + {v}_{\mathrm{Eve}} {\boldsymbol{z}}({\boldsymbol{\Omega}}) +2 \boldsymbol{k}, \label{equ:proof-appB}
    \end{equation}
    where $\boldsymbol{k} \in \mathbb{Z}^{K \times 1}$.
    Using the complex exponential function, (\ref{equ:proof-appB}) can then be written as 
    $
        e^{-j \pi\left((\tilde{x}_k - \tilde{x}_1) \tilde{u}_{\mathrm{Eve}} + (\tilde{z}_k - \tilde{z}_1) \tilde{v}_{\mathrm{Eve}}\right)}  = e^{-j \pi\left(({x}_k - {x}_1) {u}_{\mathrm{Eve}} + ({z}_k - {z}_1) {v}_{\mathrm{Eve}}\right)}, \forall k.  \label{equ:proof2-exp-equal}
    $
    Note that ${\gamma_{\mathrm{Eve}}}$ is unknown. It can be observed that there exists a feasible solution of ${\tilde{\gamma}_{\mathrm{Eve}}}$ to satisfy 
    $
    {\tilde{\gamma}_{\mathrm{Eve}}}e^{-j\pi \left((\tilde{x}_1-1) \tilde{u}_{\mathrm{Eve}} + (\tilde{z}_1 -1) \tilde{v}_{\mathrm{Eve}}\right)}= {\gamma}_{\mathrm{Eve}} e^{-j\pi \left((x_1-1) u_{\mathrm{Eve}} + (z_1-1) v_{\mathrm{Eve}} \right)}. \label{equ:proof2-gamma-equal}
    $
    In this case, we can conclude that (\ref{equ:Eve-ambiguity-receive-equal}) holds according to (\ref{equ:Eve-tilde-h-B-u-v}). 

    Proposition \ref{prop:Eve-ambiguity} is proved.

\section{Proof of Theorem \ref{thm:eight-ambiguity}} \label{app:eight-ambiguity}

    We first suppose conditions a), b), and c) on $u_{\mathrm{Eve}}$, $v_{\mathrm{Eve}}$, and $\boldsymbol{\Omega}$ in Theorem \ref{thm:eight-ambiguity} are satisfied. Given any $\boldsymbol{\Omega} \in \boldsymbol{\Lambda}$, $u_{\mathrm{Eve}}$, and $v_{\mathrm{Eve}}$ with $0 < u_{\mathrm{Eve}}^2+v_{\mathrm{Eve}}^2 < 1$, multiple solutions of $\tilde{u}_{\mathrm{Eve}}$, $\tilde{v}_{\mathrm{Eve}}$, and $\tilde{\boldsymbol{\Omega}}$ can be easily obtained to satisfy the condition in Proposition \ref{prop:Eve-ambiguity}.
    For example, there must exist a strategy $\tilde{\boldsymbol{\Omega}}$ such that $[\tilde{\boldsymbol{x}}(\tilde{\boldsymbol{\Omega}}), \tilde{\boldsymbol{z}}(\tilde{\boldsymbol{\Omega}})] = [-\boldsymbol{x}(\boldsymbol{\Omega}), \boldsymbol{z}(\boldsymbol{\Omega})]$. Specifically, it can be explicitly expressed as $\tilde{\boldsymbol{\Omega}} = ( (\tilde{x}_1, \tilde{z}_1), (\tilde{x}_1 - (x_2 - x_1), \tilde{z}_1 + (z_2 - z_1)), \dots, (\tilde{x}_1 - (x_K - x_1), \tilde{z}_1 + (z_K - z_1)))$ by selecting a valid reference antenna $(\tilde{x}_1, \tilde{z}_1)$ within the UPA. Based on Proposition \ref{prop:uniqueness} and the fact that ${\boldsymbol{\Omega}} \in \boldsymbol{\Lambda}$, it is not difficult to see $\tilde{\boldsymbol{\Omega}} \in \boldsymbol{\Lambda}$ as well. Given $\tilde{u}_{\mathrm{Eve}} = -u_{\mathrm{Eve}}$ and $\tilde{v}_{\mathrm{Eve}} = v_{\mathrm{Eve}}$, it can then be easily verified that $\tilde{u}_{\mathrm{Eve}}\tilde{\boldsymbol{x}}(\tilde{\boldsymbol{\Omega}}) + \tilde{v}_{\mathrm{Eve}}\tilde{\boldsymbol{z}}(\tilde{\boldsymbol{\Omega}}) - u_{\mathrm{Eve}}\boldsymbol{x}(\boldsymbol{\Omega}) - v_{\mathrm{Eve}}\boldsymbol{z}(\boldsymbol{\Omega}) = \boldsymbol{0}$. According to (\ref{equ:u-cos-cos-v-sin-Eve}), this implies that $(\tilde{\theta}_{\mathrm{Eve}}, \tilde{\phi}_{\mathrm{Eve}}) = (\theta_{\mathrm{Eve}}, \pi-\phi_{\mathrm{Eve}})$. 
    To summarize, the AOD estimations at Eve inherently include the following eight solutions:
    \begin{enumerate}
        \item $\tilde{\boldsymbol{\Omega}}$ such that $[\tilde{\boldsymbol{x}}(\tilde{\boldsymbol{\Omega}}), \tilde{\boldsymbol{z}}(\tilde{\boldsymbol{\Omega}})] = [\boldsymbol{x}(\boldsymbol{\Omega}), \boldsymbol{z}(\boldsymbol{\Omega})]$, and $(\tilde{u}_{\mathrm{Eve}},\tilde{v}_{\mathrm{Eve}}) = (u_{\mathrm{Eve}},v_{\mathrm{Eve}})$, which implies that $(\tilde{\theta}_{\mathrm{Eve}}, \tilde{\phi}_{\mathrm{Eve}}) = (\theta_{\mathrm{Eve}}, \phi_{\mathrm{Eve}})$;
        \item $\tilde{\boldsymbol{\Omega}}$ such that $[\tilde{\boldsymbol{x}}(\tilde{\boldsymbol{\Omega}}), \tilde{\boldsymbol{z}}(\tilde{\boldsymbol{\Omega}})] = [-\boldsymbol{x}(\boldsymbol{\Omega}), \boldsymbol{z}(\boldsymbol{\Omega})]$, and $(\tilde{u}_{\mathrm{Eve}},\tilde{v}_{\mathrm{Eve}}) = (-u_{\mathrm{Eve}},v_{\mathrm{Eve}})$, which implies that $(\tilde{\theta}_{\mathrm{Eve}}, \tilde{\phi}_{\mathrm{Eve}}) = (\theta_{\mathrm{Eve}}, \pi-\phi_{\mathrm{Eve}})$;
        \item $\tilde{\boldsymbol{\Omega}}$ such that $[\tilde{\boldsymbol{x}}(\tilde{\boldsymbol{\Omega}}), \tilde{\boldsymbol{z}}(\tilde{\boldsymbol{\Omega}})] = [\boldsymbol{x}(\boldsymbol{\Omega}), -\boldsymbol{z}(\boldsymbol{\Omega})]$, and $(\tilde{u}_{\mathrm{Eve}},\tilde{v}_{\mathrm{Eve}}) = (u_{\mathrm{Eve}},-v_{\mathrm{Eve}})$, which implies that $(\tilde{\theta}_{\mathrm{Eve}}, \tilde{\phi}_{\mathrm{Eve}}) = (-\theta_{\mathrm{Eve}}, \phi_{\mathrm{Eve}})$;
        \item $\tilde{\boldsymbol{\Omega}}$ such that $[\tilde{\boldsymbol{x}}(\tilde{\boldsymbol{\Omega}}), \tilde{\boldsymbol{z}}(\tilde{\boldsymbol{\Omega}})] = [-\boldsymbol{x}(\boldsymbol{\Omega}), -\boldsymbol{z}(\boldsymbol{\Omega})]$, and $(\tilde{u}_{\mathrm{Eve}},\tilde{v}_{\mathrm{Eve}}) = (-u_{\mathrm{Eve}},-v_{\mathrm{Eve}})$, which implies that $(\tilde{\theta}_{\mathrm{Eve}}, \tilde{\phi}_{\mathrm{Eve}}) = (-\theta_{\mathrm{Eve}}, \pi-\phi_{\mathrm{Eve}})$;
        \item $\tilde{\boldsymbol{\Omega}}$ such that $[\tilde{\boldsymbol{x}}(\tilde{\boldsymbol{\Omega}}), \tilde{\boldsymbol{z}}(\tilde{\boldsymbol{\Omega}})] = [\boldsymbol{z}(\boldsymbol{\Omega}), \boldsymbol{x}(\boldsymbol{\Omega})]$, and $(\tilde{u}_{\mathrm{Eve}},\tilde{v}_{\mathrm{Eve}}) = (v_{\mathrm{Eve}},u_{\mathrm{Eve}})$, which implies that $(\tilde{\theta}_{\mathrm{Eve}}, \tilde{\phi}_{\mathrm{Eve}}) = (\arcsin u_{\mathrm{Eve}}, \arccos\frac{v_{\mathrm{Eve}}}{\sqrt{1-u_{\mathrm{Eve}}^2}})$;
        \item $\tilde{\boldsymbol{\Omega}}$ such that $[\tilde{\boldsymbol{x}}(\tilde{\boldsymbol{\Omega}}), \tilde{\boldsymbol{z}}(\tilde{\boldsymbol{\Omega}})] = [-\boldsymbol{z}(\boldsymbol{\Omega}), \boldsymbol{x}(\boldsymbol{\Omega})]$, and $(\tilde{u}_{\mathrm{Eve}},\tilde{v}_{\mathrm{Eve}}) = (-v_{\mathrm{Eve}},u_{\mathrm{Eve}})$, which implies that $(\tilde{\theta}_{\mathrm{Eve}}, \tilde{\phi}_{\mathrm{Eve}}) = (\arcsin u_{\mathrm{Eve}}, \pi - \arccos\frac{v_{\mathrm{Eve}}}{\sqrt{1-u_{\mathrm{Eve}}^2}})$;
        \item $\tilde{\boldsymbol{\Omega}}$ such that $[\tilde{\boldsymbol{x}}(\tilde{\boldsymbol{\Omega}}), \tilde{\boldsymbol{z}}(\tilde{\boldsymbol{\Omega}})] = [\boldsymbol{z}(\boldsymbol{\Omega}), -\boldsymbol{x}(\boldsymbol{\Omega})]$, and $(\tilde{u}_{\mathrm{Eve}},\tilde{v}_{\mathrm{Eve}}) = (v_{\mathrm{Eve}},-u_{\mathrm{Eve}})$, which implies that $(\tilde{\theta}_{\mathrm{Eve}}, \tilde{\phi}_{\mathrm{Eve}}) = (-\arcsin u_{\mathrm{Eve}}, \arccos\frac{v_{\mathrm{Eve}}}{\sqrt{1-u_{\mathrm{Eve}}^2}})$;
        \item $\tilde{\boldsymbol{\Omega}}$ such that $[\tilde{\boldsymbol{x}}(\tilde{\boldsymbol{\Omega}}), \tilde{\boldsymbol{z}}(\tilde{\boldsymbol{\Omega}})] = [-\boldsymbol{z}(\boldsymbol{\Omega}), -\boldsymbol{x}(\boldsymbol{\Omega})]$, and $(\tilde{u}_{\mathrm{Eve}},\tilde{v}_{\mathrm{Eve}}) = (-v_{\mathrm{Eve}},-u_{\mathrm{Eve}})$, which implies that $(\tilde{\theta}_{\mathrm{Eve}}, \tilde{\phi}_{\mathrm{Eve}}) = (-\arcsin u_{\mathrm{Eve}}, \pi - \arccos\frac{v_{\mathrm{Eve}}}{\sqrt{1-u_{\mathrm{Eve}}^2}})$.
    \end{enumerate}
    In this case, the AOD estimations $(\tilde{\theta}_{\mathrm{Eve}}, \tilde{\phi}_{\mathrm{Eve}})$ listed above, referred to as AODs 1 to 8, are all distinct.
    
    However, if condition c) is satisfied, but condition a) or b) is violated, AODs 1 to 8 numerically collapse to four distinct solutions. Three sub-cases are as follows. If condition c) is satisfied and $u_{\mathrm{Eve}} = 0$, $v_{\mathrm{Eve}} \neq 0$, which implies that $\phi_{\mathrm{Eve}} = \pi/2$, then AOD 1 and AOD 2 collapse to $(\theta_{\mathrm{Eve}}, \pi/2)$, AOD 3 and AOD 4 collapse to $(-\theta_{\mathrm{Eve}}, \pi/2)$, AOD 5 and AOD 7 collapse to $(0, \arccos v_{\mathrm{Eve}})$, and AOD 6 and AOD 8 collapse to $(0, \pi - \arccos v_{\mathrm{Eve}})$. Similarly, if condition c) is satisfied and $u_{\mathrm{Eve}} \neq 0$, $v_{\mathrm{Eve}} = 0$, which implies that $\theta_{\mathrm{Eve}} = 0$, then AOD 1 and AOD 3 collapse to $(0, \phi_{\mathrm{Eve}})$, AOD 2 and AOD 4 collapse to $(0, \pi - \phi_{\mathrm{Eve}})$, AOD 5 and AOD 6 collapse to $(\arcsin u_{\mathrm{Eve}}, \pi/2)$, and AOD 7 and AOD 8 collapse to $(-\arcsin u_{\mathrm{Eve}}, \pi/2)$. Moreover, if condition c) is satisfied and $|u_{\mathrm{Eve}}| = |v_{\mathrm{Eve}}|$, AODs 5 to 8 numerically coincide with AODs 1 to 4, respectively, so that the four distinct AODs can be given by AODs 1 to 4.
    
    Finally, if condition c) is violated, there exists no $\tilde{\boldsymbol{\Omega}}$ within the given UPA such that $[\tilde{\boldsymbol{x}}(\tilde{\boldsymbol{\Omega}}), \tilde{\boldsymbol{z}}(\tilde{\boldsymbol{\Omega}})] = [s_1\boldsymbol{z}(\boldsymbol{\Omega}), s_2\boldsymbol{x}(\boldsymbol{\Omega})]$ for any $s_1, s_2 \in \{-1, 1\}$. As a result, AODs 5 to 8 are not valid AOD estimations at Eve, regardless of whether condition a) or b) is satisfied. Note that condition b) does not determine whether AODs 1 to 4 are distinct. In this case, if condition a) is satisfied, AODs 1 to 4 are distinct; otherwise, AODs 1 to 4 collapse to two distinct solutions, as discussed above. Theorem \ref{thm:eight-ambiguity} is thus proved.

\section{Proof of Theorem \ref{thm:Q-level-construction}} \label{app:Q-level-construction}

    By the construction of $\tilde{\boldsymbol{\Lambda}}$ in \eqref{equ:thm-Lambda-construction}, given any $\boldsymbol{\Omega}\in\tilde{\boldsymbol{\Lambda}}$, there exist a strategy $\boldsymbol{\Omega}_0\in\mathcal{R}$, a transformation matrix $\boldsymbol{T}_0\in\mathcal{T}$, and a signed permutation $\boldsymbol{\Pi}_0\in\boldsymbol{\Sigma}$ such that
    \begin{equation} \label{equ:Omega-decomposition}
        \boldsymbol{P}(\boldsymbol{\Omega}) = \boldsymbol{P}(\boldsymbol{\Omega}_0)\boldsymbol{T}_0\boldsymbol{\Pi}_0.
    \end{equation}
    It is worth noting that the region-covering condition (\ref{equ:thm-square-covering}) implies a covering multiplicity of at least $Q$ over the region $\mathbb{R}^2 \setminus (\mathbb{Z}^2 \setminus 2\mathbb{Z}^2)$ due to the periodicity of $2\mathbb{Z}^2$. Therefore, given any $\boldsymbol{w}$ with $0<\|\boldsymbol{w}\|_2 < 1$, define 
    \begin{equation}\label{equ:w0-decomposition}
    \boldsymbol{w}_0 = \boldsymbol{T}_0\boldsymbol{\Pi}_0\boldsymbol{w},
    \end{equation}
    and there always exist at least $Q$ pairs of $\boldsymbol{T}_q \in \mathcal{T}$ and $\boldsymbol{n}_q \in 2\mathbb{Z}^2$ such that
    \begin{equation} \label{equ:elliptic-region-constraints}
        \left\|\boldsymbol{T}_q^{-1}(\boldsymbol{w}_0 + \boldsymbol{n}_q)\right\|_2 < 1, \quad q=1,\ldots,Q.
    \end{equation} 
    For each $q$, because $\boldsymbol{\Omega}_0 \in \mathcal{R}$ satisfies at least one of the constraint sets in (\ref{equ:thm-physical-bound}) and (\ref{equ:thm-physical-bound-2}), there exist a strategy $\tilde{\boldsymbol{\Omega}}_q$ in the given UPA and a signed permutation $\boldsymbol{\Pi}_q \in \boldsymbol{\Sigma}$ such that
    \begin{equation} \label{equ:Omega-tilde-decomposition}
        \tilde{\boldsymbol{P}}(\tilde{\boldsymbol{\Omega}}_q) = \boldsymbol{P}(\boldsymbol{\Omega}_0)\boldsymbol{T}_q\boldsymbol{\Pi}_q.
    \end{equation}
    Since $\boldsymbol{T}_q \in \mathrm{GL}(2,\mathbb{Z})$ and $\boldsymbol{\Omega}_0 \in \boldsymbol{\Lambda}$, it can then be verified that $\tilde{\boldsymbol{\Omega}}_q \in \boldsymbol{\Lambda}$ based on Proposition \ref{prop:uniqueness}. Consequently, $\tilde{\boldsymbol{\Omega}}_q \in \tilde{\boldsymbol{\Lambda}}$ by the construction of $\tilde{\boldsymbol{\Lambda}}$ in \eqref{equ:thm-Lambda-construction}.
    Next, define
    \begin{equation} \label{equ:w-tilde-decomposition}
        \tilde{\boldsymbol{w}}_q = \boldsymbol{\Pi}_q^{-1}\boldsymbol{T}_q^{-1}(\boldsymbol{w}_0 + \boldsymbol{n}_q).
    \end{equation}
    Due to the fact that $\boldsymbol{\Pi}_q \in \boldsymbol{\Sigma}$ is an orthogonal matrix and using (\ref{equ:elliptic-region-constraints}), it follows that $\|\tilde{\boldsymbol{w}}_q\|_2 = \|\boldsymbol{T}_q^{-1}(\boldsymbol{w}_0 + \boldsymbol{n}_q)\|_2 < 1$. Because $\|\boldsymbol{w}\|_2 \neq 0$, it is easy to verify from (\ref{equ:w0-decomposition}) and (\ref{equ:w-tilde-decomposition}) that $\|\tilde{\boldsymbol{w}}_q\|_2 \neq 0$. Hence, $\tilde{\boldsymbol{w}}_q$ corresponds to a valid AOD solution.
    Finally, we have
    \begin{align} \label{equ:received-signal-difference}
        &\tilde{\boldsymbol{P}}(\tilde{\boldsymbol{\Omega}}_q)\tilde{\boldsymbol{w}}_q - \boldsymbol{P}(\boldsymbol{\Omega})\boldsymbol{w}  \stackrel{(a)}{=} \tilde{\boldsymbol{P}}(\tilde{\boldsymbol{\Omega}}_q)\tilde{\boldsymbol{w}}_q - \boldsymbol{P}(\boldsymbol{\Omega}_0)\boldsymbol{w}_0 \nonumber\\
        &\stackrel{(b)}{=} \boldsymbol{P}(\boldsymbol{\Omega}_0)\boldsymbol{T}_q\boldsymbol{\Pi}_q \boldsymbol{\Pi}_q^{-1}\boldsymbol{T}_q^{-1}(\boldsymbol{w}_0 + \boldsymbol{n}_q) - \boldsymbol{P}(\boldsymbol{\Omega}_0)\boldsymbol{w}_0 \nonumber\\
        &= \boldsymbol{P}(\boldsymbol{\Omega}_0)\boldsymbol{n}_q, 
    \end{align}
    where (a) is from (\ref{equ:Omega-decomposition}) and (\ref{equ:w0-decomposition}), and (b) is from (\ref{equ:Omega-tilde-decomposition}) and (\ref{equ:w-tilde-decomposition}).
    Due to the facts that $\boldsymbol{P}(\boldsymbol{\Omega}_0)$ is an integer matrix and $\boldsymbol{n}_q \in 2\mathbb{Z}^2$, from (\ref{equ:received-signal-difference}) it follows that $\tilde{\boldsymbol{P}}(\tilde{\boldsymbol{\Omega}}_q)\tilde{\boldsymbol{w}}_q - \boldsymbol{P}(\boldsymbol{\Omega})\boldsymbol{w} \in 2\mathbb{Z}^K$. 
    According to Proposition 2, this guarantees that the strategy $\tilde{\boldsymbol{\Omega}}_q$ and its corresponding AOD $\tilde{\boldsymbol{w}}_q$ for each $q$ can lead to Eve's received signal. Crucially, due to (\ref{equ:thm-square-non-overlap}), these strategies $\tilde{\boldsymbol{\Omega}}_q$ inherently belong to $Q$ distinct strategy groups, including one basic and $Q-1$ enhanced.
    Theorem \ref{thm:Q-level-construction} is thus proved.

\end{appendices}

\bibliographystyle{IEEEtran}
\bibliography{IEEEabrv,References}

% \begin{IEEEbiographynophoto}{Jane Doe}
% Biography text here without a photo.
% \end{IEEEbiographynophoto}

% \begin{IEEEbiography}[{\includegraphics[width=1in,height=1.25in,clip,keepaspectratio]{fig1.png}}]{IEEE Publications Technology Team}
% In this paragraph you can place your educational, professional background and research and other interests.\end{IEEEbiography}

\end{document}